\documentclass[aip,graphicx]{revtex4-1}

\draft

\usepackage[T1]{fontenc}
\usepackage[utf8]{inputenc}
\usepackage{graphicx}
\usepackage[dvipsnames]{xcolor}
\usepackage{tgtermes}
\usepackage{centernot}
\usepackage{tikz}
\usetikzlibrary{positioning}
\usepackage[
pdftitle={Moments of logarithmic derivative SO, USp}, 
pdfauthor={E. Alvarez},
colorlinks=true,linkcolor=blue,urlcolor=blue,citecolor=blue,bookmarks=true,
bookmarksopenlevel=2]{hyperref}
\usepackage{amsmath,amssymb,amsthm,textcomp, tabu}
\usepackage{enumerate}

\newlength{\bibitemsep}\setlength{\bibitemsep}{2\baselineskip}
\newlength{\bibparskip}\setlength{\bibparskip}{0pt}
\let\oldthebibliography\thebibliography
\renewcommand\thebibliography[1]{%
  \oldthebibliography{#1}%
  \setlength{\parskip}{\bibitemsep}%
  \setlength{\itemsep}{\bibparskip}%
}

\usepackage{geometry}
\geometry{total={210mm,297mm},
left=25mm,right=25mm,%
bindingoffset=0mm, top=20mm,bottom=20mm}

\linespread{1.1}

\newcommand{\abs}[1]{\left\lvert #1 \right\rvert}

\newtheorem{theorem}{Theorem}
\newtheorem{prop}{Proposition}[section]
\newtheorem{lemma}[prop]{Lemma}

\theoremstyle{definition}
\newtheorem{defi}[prop]{Definition}

\theoremstyle{remark}
\newtheorem{rem}[prop]{Remark}


\begin{document}
\title{Moments of the logarithmic derivative of characteristic polynomials from $SO(N)$ and $USp(2N)$}

\author{E. Alvarez}  
\email{emilia.alvarez@bristol.ac.uk}
\author{N.C. Snaith }
\email[Electronic mail (Corresponding author): ]{n.c.snaith@bris.ac.uk}
\affiliation{School of Mathematics, University of Bristol, UK}


\begin{abstract}
    We study moments of the logarithmic derivative of characteristic polynomials of orthogonal and symplectic random matrices. In particular, we compute the asymptotics for large matrix size, $N$, of these moments evaluated at points which are approaching 1. This follows work of Bailey, Bettin, Blower, Conrey, Prokhorov, Rubinstein and Snaith where they compute these asymptotics in the case of unitary random matrices. 
\end{abstract}

\maketitle

\section{Introduction}
This work is based on recent results by Bailey, Bettin, Blower, Conrey, Prokhorov, Rubinstein and Snaith \cite{MM}, which considers mixed moments of the characteristic polynomial and its derivative, as well as moments of the logarithmic derivative of characteristic polynomials, averaged over $U(N)$ with Haar measure (also called the CUE). Computations of this nature are motivated by the striking resemblance between random matrix models and their number theoretic analogues, see for example \cite{KS,kn:hughes03,CFKRS} or the review papers \cite{Conrey,kn:snaith10,kn:keasna03}. The relationship between  characteristic polynomials and the Riemann zeta-function via their statistical properties has been extended to the study of families of $L$-functions on the number theory side modeled by unitary, orthogonal and symplectic matrix ensembles based on symmetry \cite{KatzS,kn:katzsarnak99b,kn:confar00, kn:keasna00b, kn:cfz2}. A variety of moments on the random matrix side have been studied in how they relate to moments of the Riemann zeta-function, $\zeta(s)$.  Moments of the logarithmic derivative of the Riemann zeta-function have been studied since Selberg, who (assuming the Riemann Hypothesis, RH) used bounds on the second moment of $\frac{\zeta^{'}}{\zeta}(s)$, averaging on a vertical line in the complex plane up to height $T$ and just off the critical line (but approaching it), to study primes in short intervals \cite{Selberg}. Goldston, Gonek and Montgomery, also on RH, further showed that an asymptotic form of this moment is equivalent to the pair correlation conjecture for the zeroes of the Riemann zeta-function, and also linked this to counting prime powers \cite{GGM}. Farmer et. al extended this work to give equivalences between mean values of products of logarithmic derivatives of zeta, higher correlation functions for the zeroes of zeta and integers that are products of a fixed number of prime powers, which they call almost primes \cite{FGLL}. In their paper, by assuming the random matrix conjectures (that the correlation functions of zeroes of $\zeta(s)$ agree, in appropriate scaling limits,  with the correlation functions of eigenvalues of Gaussian Unitary matrices), they give explicit computations based on the work of Conrey and Snaith \cite{ConSna}. Farmer also studied a mixed moment of the logarithmic derivative of the Riemann zeta-function assuming RH and random matrix conjectures \cite{Farmer}. Finally, the distribution of the logarithmic derivative of $\zeta(s)$ has also been studied just off the critical line by Guo \cite{Guo} and Lester \cite{Lester}. The latter showed that this distribution converges to a two-dimensional Gaussian distribution in the complex plane, which agrees with Guo's earlier work. 
 
Of particular interest for our current work on the random matrix side, Hughes, Keating and O'Connell  \cite{HKO}, Mezzadri  \cite{Mez} and Conrey, Rubinstein and Snaith  \cite{CRS} study moments of derivatives of characteristic polynomials, and Hughes \cite{kn:hug01}, Dehaye \cite{Dehayederiv}, Riedtmann  \cite{HR} and Winn  \cite{Winn} discuss joint moments of characteristic polynomials and their derivatives. 

In this work we derive in the orthogonal and symplectic cases the analogue of Theorem 1.2 from the work of Bailey et al. \cite{MM}, which computes the moments of the logarithmic derivative of characteristic polynomials averaged over the unitary group. We see some curious differences in the leading order term in the limit as matrix size, $N$, grows to infinity and the point at which the characteristic polynomial is evaluated, $e^{-\alpha}$ tends towards 1 on the unit circle.  These differences are evident in the dependence on $N$ and $a=N\alpha$ in the theorems set out below. 

We define the characteristic polynomial of a matrix $X$ to be 
\begin{equation}
    \Lambda_X(s) = \det(I - sX^{*}) ,
\end{equation}
where $X^*$ is the conjugate transpose of $X$. 
The eigenvalues of an even orthogonal $SO(2N)$ or unitary symplectic $USp(2N)$ matrix $X$ come in conjugate pairs $e^{i\theta_1},e^{-i\theta_1} \dots, e^{i\theta_N}, e^{-i\theta_N}$ and so its characteristic polynomial can also be expressed as:
\begin{equation}
   \Lambda_X(s) = \prod_{j=1}^N (1 - se^{-i\theta_j})(1 - se^{i\theta_j}).
\end{equation}
We evaluate moments using the Haar measure, denoted $\mathrm{dX}$ and integrating over the whole matrix ensemble, $SO(2N)$, $USp(2N)$ or $SO(2N+1)$.  The asymptotic computation of Bailey et al. \cite{MM} considers even integer moments of the logarithmic derivative of characteristic polynomials of random unitary matrices, evaluated with respect to the Haar measure. We extend this theorem to the orthogonal and symplectic ensembles.

\begin{theorem}{(Theorem 1.2 from Bailey et al., \cite{MM})}\label{thm1.2}
Let $\Lambda_X$ denote the characteristic polynomial of a matrix $X \in U(N)$, $\mathfrak{Re}(a)>0$ and $K \in \mathbb{N}$. Then,
\begin{equation}
    \int_{U(N)} \abs{\frac{\Lambda_X^{'}}{\Lambda_X}(e^{-\alpha})}^{2K} \mathrm{dX} = \binom{2K-2}{K-1}\frac{N^{2K}}{(2a)^{2K-1}}\left(1+\mathcal{O}(a)\right),
\end{equation} where $\alpha = a/N$ and $a = o(1)$ as $N \to \infty$. 
\end{theorem}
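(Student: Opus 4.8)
The plan is to realise the $2K$-th moment as a $2K$-fold derivative, on the diagonal, of an average of a ratio of characteristic polynomials, and then to feed in the unitary ratios theorem. The starting identity is
\[
\frac{d}{d\gamma}\log\Lambda_X(e^{-\gamma})=-e^{-\gamma}\,\frac{\Lambda_X'}{\Lambda_X}(e^{-\gamma}),
\]
together with its complex conjugate, which (using $\overline{\Lambda_X(e^{-\alpha})}=\Lambda_{X^{-1}}(e^{-\overline{\alpha}})$ and the invariance of Haar measure under $X\mapsto X^{-1}$) lets me write the $K$ conjugated logarithmic derivatives as derivatives of ratios of $\Lambda_{X^{-1}}$. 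Because a ratio such as $\Lambda_X(e^{-\gamma_k})/\Lambda_X(e^{-\alpha})$ depends on the single variable $\gamma_k$, the product of $K$ one-variable derivatives is literally the mixed partial of the product, so for each fixed $X$
\[
\abs{\frac{\Lambda_X'}{\Lambda_X}(e^{-\alpha})}^{2K}
=e^{K(\alpha+\overline{\alpha})}\prod_{k=1}^{K}\frac{\partial}{\partial\gamma_k}\prod_{k=1}^{K}\frac{\partial}{\partial\delta_k}
\left.\left[\prod_{k=1}^{K}\frac{\Lambda_X(e^{-\gamma_k})}{\Lambda_X(e^{-\alpha})}\,\frac{\Lambda_{X^{-1}}(e^{-\delta_k})}{\Lambda_{X^{-1}}(e^{-\overline{\alpha}})}\right]\right|_{\gamma_k=\alpha,\ \delta_k=\overline{\alpha}}.
\]
The prefactor $e^{K(\alpha+\overline{\alpha})}\to1$ and will only affect the $1+\mathcal{O}(a)$ correction.

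Next I would integrate over $U(N)$, exchange the average with the derivatives, and insert the exact Conrey--Snaith / CFZ evaluation of the resulting average of ratios of characteristic polynomials \cite{ConSna,kn:cfz2}. This expresses
\[
\int_{U(N)}\prod_{k=1}^{K}\frac{\Lambda_X(e^{-\gamma_k})}{\Lambda_X(e^{-\alpha})}\,\frac{\Lambda_{X^{-1}}(e^{-\delta_k})}{\Lambda_{X^{-1}}(e^{-\overline{\alpha}})}\,dX
\]
as a finite sum of terms, each a product of factors $z(x)=(1-e^{-x})^{-1}$ built from differences of the shift parameters, times an exponential $e^{-N\sigma}$ whose argument $\sigma$ is a sum of shifts; these exponentials encode the ``swap'' combinatorics and carry all of the $N$-dependence. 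To apply the $2K$ derivatives cleanly I would write each as a Cauchy integral,
\[
\prod_{k=1}^{K}\frac{\partial}{\partial\gamma_k}\bigg|_{\gamma_k=\alpha}F
=\frac{1}{(2\pi i)^{K}}\oint\!\cdots\!\oint\frac{F(\gamma_1,\dots,\gamma_K)}{\prod_{k=1}^{K}(\gamma_k-\alpha)^{2}}\,d\gamma_1\cdots d\gamma_K,
\]
and similarly in the $\delta_k$, turning the moment into a $2K$-fold contour integral of the explicit ratios sum around $\alpha$ and $\overline{\alpha}$, exactly as in the derivative-moment computations of Conrey, Rubinstein and Snaith \cite{CRS}.

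The asymptotics then come from setting $\alpha=a/N$ and rescaling the contour variables on the scale $1/N$, say $\gamma_k=\alpha+u_k/N$ and $\delta_k=\overline{\alpha}+v_k/N$. Under this scaling each $z$-factor behaves like $N/(u_k-u_l)$, each exponential $e^{-N\sigma}$ tends to an $N$-independent limit in the $u_k,v_k$ and in $a$, and the double poles $(\gamma_k-\alpha)^{-2}$ together with the contour measures contribute the remaining powers of $N$; a power count then isolates the overall size $N^{2K}$ and leaves a fixed contour integral depending only on the rescaled variables and on $a$. Expanding this residual integral as $a\to0$ should produce the simple pole $(2a)^{1-2K}$ and the leading constant.

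The hard part will be the combinatorics of the swap sum: showing that only the ``balanced'' swaps (equal numbers of exchanged holomorphic and antiholomorphic shifts) contribute to the leading power of $N$ and of $1/a$, that all other swaps together with the subleading parts of the $z$-factors and exponentials feed only into the $1+\mathcal{O}(a)$ term, and finally summing the surviving weighted contributions to the closed constant $\binom{2K-2}{K-1}$. The appearance of this central binomial coefficient suggests that after a Vandermonde-type residue evaluation the count of surviving configurations collapses to $\binom{2(K-1)}{K-1}$, and pinning down both this constant and the factor $2^{1-2K}$ is the delicate step. As a consistency check, for $K=1$ the whole computation collapses to the second moment, which via the expansion $\frac{\Lambda_X'}{\Lambda_X}(e^{-\alpha})=-\sum_{n\ge1}e^{-(n-1)\alpha}\tr(X^{-n})$ and the Diaconis--Shahshahani identity $\int_{U(N)}\tr(X^{n})\,\overline{\tr(X^{m})}\,dX=\delta_{nm}\min(n,N)$ gives $\sum_{n\ge1}e^{-2a(n-1)/N}\min(n,N)=\frac{N^2}{2a}(1+\mathcal{O}(a))$, matching $\binom{0}{0}\frac{N^2}{2a}$.
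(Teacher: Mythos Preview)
The paper does not contain a proof of this statement: Theorem~\ref{thm1.2} is quoted verbatim from Bailey et al.~\cite{MM} as motivation and context, and the body of the paper proves only the analogous results for $SO(2N)$, $USp(2N)$ and $SO(2N+1)$. So there is no proof here to compare against.

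That said, your plan is precisely the template the present paper follows for the other ensembles (Sections~\ref{s2}, \ref{Symplectic}, \ref{odd}): express the moment as derivatives of a ratio average, insert the exact ratios formula (here Proposition~\ref{propCFS} and its symplectic/odd-orthogonal analogues), pass to a contour integral via Lemma~\ref{CFKRSlem}, rescale by $N$, and reduce to a determinant of residues whose combinatorics yield the constant. Your proposal is therefore methodologically aligned with both this paper and, presumably, with \cite{MM} itself. What you have written, however, is a plan rather than a proof: you explicitly flag the swap combinatorics and the identification of $\binom{2K-2}{K-1}$ and $2^{1-2K}$ as the ``delicate step'' without carrying it out. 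In the orthogonal and symplectic cases treated here, that step is exactly where all the work lies (Lemmas~\ref{lem2}--\ref{multiplicitylemma}, Propositions~\ref{odddegree}--\ref{mysterylemma}, and the explicit determinant evaluations of Lemmas~\ref{lem1} and \ref{genlem}); for the unitary case you would need the analogous machinery, and your proposal stops short of it.
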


Our main results are below. Theorem \ref{thm2} extends Theorem \ref{thm1.2} to the even orthogonal ensemble and the proof is given in detail in Section \ref{s2}. Theorem \ref{thm3} applies to the symplectic case which is proved in Section \ref{Symplectic} and Theorem \ref{thm4} applies to the odd orthogonal ensemble, with the proof outlined briefly in Section \ref{odd}. Section \ref{prems} contains preliminary results that we will call upon, and Section \ref{theset} describes the properties of a set of matrices that arises in our calculations.  We derive several general results in Section \ref{theset}, which we then use in the proofs of Theorems \ref{thm2}, \ref{thm3} and \ref{thm4}.

\begin{theorem}\label{thm2}
  Let $\Lambda_X(s)$ denote the characteristic polynomial of a matrix $X \in SO(2N)$, the group of even dimensional random orthogonal matrices with determinant 1 equipped with the Haar measure $\mathrm{dX}$. Let $K \in \mathbb{N}$, $\alpha = a/N$ where $a = o(1)$ as $N \to \infty$ and $\mathfrak{Re}(a)>0$. Then, as $N$ tends to $\infty$ and for $K \geq 2$, the moments of the logarithmic derivative of $\Lambda_X(s)$ evaluated at $e^{-\alpha}$ are given by:
  \begin{equation}
      \int_{SO(2N)} \left(\frac{\Lambda_X^{'}}{\Lambda_X}(e^{-\alpha})\right)^{K} \mathrm{dX} = (-1)^K\frac{2N^K}{a^{K-1}}  \frac{(2K-3)!!}{(K-1)!}  \left( 1 + \mathcal{O}(a) \right).     
  \end{equation}
  The first moment is given by 
\begin{equation}
    \int_{SO(2N)} \left(\frac{\Lambda_X^{'}}{\Lambda_X}(e^{-\alpha})\right)^{1} \mathrm{dX} = -N  \left( 1 + \mathcal{O}(a) \right).
\end{equation}
\end{theorem}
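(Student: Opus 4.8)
The plan is to compute the moments
\[
\int_{SO(2N)} \left(\frac{\Lambda_X^{'}}{\Lambda_X}(e^{-\alpha})\right)^{K} \mathrm{dX}
\]
via a contour-integral representation of the $K$-fold average, exactly paralleling the unitary strategy of Bailey et al.\ but using the orthogonal ($SO(2N)$) analogue of the ratios/averages formula. First I would record the logarithmic derivative explicitly: differentiating $\Lambda_X(s)=\prod_{j=1}^N(1-se^{-i\theta_j})(1-se^{i\theta_j})$ gives
\[
\frac{\Lambda_X^{'}}{\Lambda_X}(s) = -\sum_{j=1}^N\left(\frac{e^{-i\theta_j}}{1-se^{-i\theta_j}}+\frac{e^{i\theta_j}}{1-se^{i\theta_j}}\right),
\]
so that the $K$-th power, before averaging, is a $K$-fold sum over eigenvalue pairs. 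The standard device is to write each factor $\frac{\Lambda_X^{'}}{\Lambda_X}(e^{-\alpha})$ as a derivative in an auxiliary shift variable acting on a ratio of characteristic polynomials, i.e.\ to obtain the product of logarithmic derivatives from
\[
\prod_{k=1}^{K}\frac{\partial}{\partial \alpha_k}\Bigg|_{\alpha_k=\alpha}
\log \Lambda_X(e^{-\alpha_k}),
\]
and then to average the resulting ratio of characteristic polynomials over $SO(2N)$ using the known autocorrelation/ratios theorem for the orthogonal group (the even-orthogonal analogue of the CFKRS ratios formula, which I expect is stated among the preliminary results of Section \ref{prems}).

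Next I would extract the combined contour-integral expression. After applying the ratios formula, the averaged quantity becomes a sum over subsets (coming from the functional-equation symmetry of the orthogonal ratios formula) of products of zeta-like factors, which I would package as a $K$-fold contour integral
\[
\oint\cdots\oint \mathcal{G}(z_1,\dots,z_K)\,\Delta(z)^2\,\prod_k \frac{dz_k}{2\pi i}
\]
with kernel $\mathcal{G}$ built from the orthogonal ratios answer, evaluated near the confluent point where all shifts coincide at $\alpha$. The logarithmic derivatives are recovered by taking residues / derivatives that collapse the shift variables onto $e^{-\alpha}$. The scaling $\alpha=a/N$, $a=o(1)$ should then be inserted, rescaling $z_k\mapsto z_k/N$, so that the dominant contribution is governed by the local behaviour of the kernel; the combinatorial constant $\binom{\cdot}{\cdot}$ appearing in the unitary case is replaced by $\frac{(2K-3)!!}{(K-1)!}$, and the crucial sign $(-1)^K$ and the prefactor $2N^K/a^{K-1}$ must fall out of this local analysis. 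The factor of $2$ and the double-Vandermonde $\Delta(z)^2$ weight are the hallmark distinction of the orthogonal symmetry type and are what I would watch most carefully.

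I would treat the first moment $K=1$ separately, since the general formula is stated only for $K\geq 2$ and the $(2K-3)!!$ convention is awkward at $K=1$. For $K=1$ the average reduces to
\[
\int_{SO(2N)}\frac{\Lambda_X^{'}}{\Lambda_X}(e^{-\alpha})\,\mathrm{dX}
= -\sum_{j=1}^N \int_{SO(2N)}\left(\frac{e^{-i\theta_j}}{1-e^{-\alpha}e^{-i\theta_j}}+\frac{e^{i\theta_j}}{1-e^{-\alpha}e^{i\theta_j}}\right)\mathrm{dX},
\]
which is controlled by the one-level density / mean eigenvalue behaviour of $SO(2N)$; expanding near $e^{-\alpha}\to 1$ and using that the expected number of eigenvalue pairs is $N$ should give the leading term $-N$, with the repulsion of orthogonal eigenvalues away from $1$ producing only an $\mathcal{O}(a)$ correction. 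I would verify the normalization by a direct small-$N$ or generating-function check.

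The hard part will be the asymptotic analysis of the confluent $K$-fold contour integral in the regime $\alpha=a/N\to 0$: one must justify interchanging the $N\to\infty$ and shift-derivative limits, identify which subset-terms in the orthogonal ratios expansion actually contribute at leading order (many are lower order because of the extra orthogonal factors), and isolate the single dominant residue that yields the clean constant $\frac{(2K-3)!!}{(K-1)!}$. I expect the main technical obstacle to be bookkeeping the poles and the $\Delta(z)^2$ weight carefully enough to see that the non-principal terms are genuinely $\mathcal{O}(a)$ relative to the main term, and to pin down the sign and the factor of $2$ exactly; the general results promised in Section \ref{theset} about the relevant set of matrices are presumably the tool that organizes this bookkeeping.
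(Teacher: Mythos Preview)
Your plan is essentially the paper's approach: apply the $SO(2N)$ ratios formula (Proposition~\ref{propCFS}) together with the CFKRS contour-integral lemma (Lemma~\ref{CFKRSlem}), differentiate in the shifts, confluence all shifts to $\alpha$, rescale $w_k=au_k/N$, and then analyze the resulting determinantal expression using the degree machinery of Section~\ref{theset}; the paper's key extra device, which your outline does not yet make explicit, is to rewrite the bracket $\bigl[-a+\sum_j 2/(u_j^2-1)\bigr]^K$ as a $K$-th $t$-derivative and collapse $\Delta(u^2)\Delta(u)$ into a single $K\times K$ determinant, so that the leading-order determinant is shown to be $t$-independent and the first nontrivial contribution comes from the next term in the $\exp(au_i)$ expansion. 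For $K=1$ the paper does not invoke the one-level density as you propose but stays inside the same determinantal framework, the extra $-Ka\det(B)$ term surviving only when no $t$-derivatives are taken; either route gives $-N(1+\mathcal{O}(a))$.
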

\begin{theorem}\label{thm3}
  Let $\Lambda_X(s)$ denote the characteristic polynomial of a matrix $X \in USp(2N)$, the group of even dimensional random unitary symplectic matrices equipped with the Haar measure $\mathrm{dX}$. Let $K \in \mathbb{N}$, $\alpha = a/N$ where $a = o(1)$ as $N \to \infty$ and $\mathfrak{Re}(a)>0$. Then, as $N$ tends to $\infty$ and for $K \geq 4$, the moments of the logarithmic derivative of $\Lambda_X(s)$ evaluated at $e^{-\alpha}$ are given by: 
\begin{align*}
    \int_{USp(2N)} \left( \frac{ \Lambda^{'}_X}{\Lambda_X}(e^{-\alpha}) \right)^K  \, \mathrm{dX}  
   =   (-1)^K \frac{2}{3}\frac{N^K}{a^{K-3}} \frac{(2K-5)!!}{(K-1)!}  \left(1 + \mathcal{O}(a) \right).
\end{align*}
The first three moments are given by:
\begin{align*}
\int_{USp(2N)} \left( \frac{ \Lambda^{'}_X}{\Lambda_X}(e^{-\alpha}) \right)^1  \, \mathrm{dX}  = N\left(1 + \mathcal{O}(a)\right), \\
\int_{USp(2N)} \left( \frac{ \Lambda^{'}_X}{\Lambda_X}(e^{-\alpha}) \right)^2  \, \mathrm{dX} = N^2\left(1 + \mathcal{O}(a)\right),\\
\int_{USp(2N)} \left( \frac{ \Lambda^{'}_X}{\Lambda_X}(e^{-\alpha}) \right)^3  \, \mathrm{dX}   = \frac{2}{3} N^3 \left(1 + \mathcal{O}(a) \right).
\end{align*}
\end{theorem}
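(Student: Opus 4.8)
The plan is to mirror the argument for the even orthogonal ensemble given for Theorem~\ref{thm2} in Section~\ref{s2}, adapting each step to the symplectic symmetry type in Section~\ref{Symplectic}. The first move is to trade the moment of the logarithmic derivative for a multiple shifted derivative of an average of ratios of characteristic polynomials, exactly as in the unitary computation of Bailey et al.~\cite{MM}. Using $\frac{\partial}{\partial\beta}\log\Lambda_X(e^{-\beta})=-e^{-\beta}\frac{\Lambda_X^{'}}{\Lambda_X}(e^{-\beta})$ and the product rule, one obtains
\begin{equation*}
\left(\frac{\Lambda_X^{'}}{\Lambda_X}(e^{-\alpha})\right)^{K}=(-1)^{K}e^{K\alpha}\left[\frac{\partial^{K}}{\partial\beta_1\cdots\partial\beta_K}\prod_{k=1}^{K}\frac{\Lambda_X(e^{-\beta_k})}{\Lambda_X(e^{-\alpha_k})}\right]_{\beta_k=\alpha_k=\alpha},
\end{equation*}
since on the diagonal $\beta_k=\alpha_k$ every ratio prefactor equals $1$ and only the product of logarithmic derivatives survives. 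Integrating over $USp(2N)$ and exchanging the finitely many $\beta$-derivatives with the Haar integral expresses the $K$th moment as this same derivative applied to the symplectic average of ratios.

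The second step is to substitute the closed form for the $USp(2N)$ average of ratios recorded in Section~\ref{prems}. This presents the integrand, schematically, as a sum over sign patterns $\epsilon\in\{\pm 1\}^{K}$ of products of ``zeta-type'' factors of the form $(1-e^{-(\alpha_i+\epsilon_j\beta_j)})^{-1}$ weighted by the oscillatory factors $\prod_{j:\,\epsilon_j=-1}e^{-2N\beta_j}$ dictated by the functional-equation symmetry of the ensemble. The general results of Section~\ref{theset} are designed to organise precisely these sums, so I would invoke them to carry out the $K$-fold differentiation $\partial^{K}/\partial\beta_1\cdots\partial\beta_K$ and then pass to the confluent limit $\beta_k=\alpha_k=\alpha$.

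The third step is the large-$N$ analysis with $\alpha=a/N$ and $a=o(1)$. After the diagonal limit the answer is a rational-trigonometric function of $\alpha$ and $N$; expanding each factor for small $\alpha$ and tracking the powers of $N$ isolates the leading contribution, while all remaining terms should be of relative size $\mathcal{O}(a)$ and feed into the error factor $(1+\mathcal{O}(a))$. The cleanest bookkeeping is to keep the differentiated sum in a contour-integral form and read off the leading order from the residue at the coalescing shifts: the double factorial $(2K-5)!!$ and the denominator $(K-1)!$ should emerge from the combinatorics of that residue, and the power $a^{-(K-3)}$ together with the constant $\tfrac{2}{3}$ from the order of vanishing forced by the repulsion of symplectic eigenvalues away from the point $1$.

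The main obstacle I anticipate lies exactly in this last point. Because the symplectic eigenvalues are pushed away from $s=1$ --- in contrast with the attraction seen for $SO(2N)$, which produced the stronger singularity $a^{-(K-1)}$ in Theorem~\ref{thm2} --- the term that dominates for $K\geq 4$ is suppressed by two extra powers of $a$, so for small $K$ it is no longer leading. I would therefore have to verify that this term genuinely governs the asymptotics once $K\geq 4$, and compute the $K=1,2,3$ moments separately, where competing contributions of sizes $N$, $N^{2}$ and $\tfrac{2}{3}N^{3}$ take over. Checking that the naive continuation of the general formula fails for these small exponents --- indeed it predicts the wrong power of $a$ --- and pinning down the correct leading terms is the delicate part of the argument.
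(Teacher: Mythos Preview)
Your plan is correct and follows essentially the same route as the paper: start from the symplectic ratios formula of Conrey--Forrester--Snaith, convert to a contour integral via Lemma~\ref{CFKRSlem}, differentiate and specialise the shifts, then expand for small $a$ and use the matrix-degree machinery of Section~\ref{theset} to show the first three orders in $a$ are killed by the $t$-derivatives when $K\geq 4$, forcing the separate treatment of $K=1,2,3$. The only point to sharpen is that the vanishing of the orders $a^0,a^1,a^2$ under $K$ derivatives is not automatic from ``repulsion'' heuristics but requires the degree bounds of Propositions~\ref{maxderivatives} and~\ref{mysterylemma} applied to the matrices $\Psi_0,\Psi_1,\Psi_{1,1},\Psi_{0,2},\Psi_{2,0}$, and that at order $a^3$ the surviving determinant is identified via Lemma~\ref{multiplicitylemma}.
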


\begin{theorem}\label{thm4}
  Let $\Lambda_X(s)$ denote the characteristic polynomial of a matrix $X \in SO(2N+1)$, the group of odd dimensional random orthogonal matrices with determinant 1 equipped with the Haar measure $\mathrm{dX}$. Let $K \in \mathbb{N}$, $\alpha = a/N$ where $a = o(1)$ as $N \to \infty$ and $\mathfrak{Re}(a)>0$. Then, as $N$ tends to $\infty$, the moments of the logarithmic derivative of $\Lambda_X(s)$ evaluated at $e^{-\alpha}$ are given by:  
  \begin{align}
    \int_{SO(2N+1)} \left(\frac{ \Lambda_X^{'}}{\Lambda_X }(e^{-\alpha})\right)^K \mathrm{dX} \nonumber \\
     = (-1)^K \left[ \left( \frac{N}{a} \right)^K - \frac{N^K}{a^{K-1}}K \right] + \mathcal{O}\left(\frac{N^{K-1}}{a^{K-1}}\right)  + \mathcal{O}\left( \frac{N^K}{a^{K-2}}  \right). 
\end{align}
\end{theorem}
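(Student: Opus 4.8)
The plan is to isolate the one deterministic eigenvalue that odd orthogonal matrices are forced to have at $+1$. Since $\det X=1$ and the non-real eigenvalues of $X\in SO(2N+1)$ come in conjugate pairs $e^{\pm i\theta_j}$, there must be a single eigenvalue at $+1$, so that
\[
\Lambda_X(s)=(1-s)\prod_{j=1}^N(1-se^{-i\theta_j})(1-se^{i\theta_j}).
\]
Differentiating the logarithm at $s=e^{-\alpha}$ then splits the quantity of interest into a deterministic piece and a random piece,
\[
\frac{\Lambda_X'}{\Lambda_X}(e^{-\alpha})=A+B,\qquad A:=\frac{-1}{1-e^{-\alpha}},\qquad B:=\sum_{j=1}^N\left(\frac{-e^{-i\theta_j}}{1-e^{-\alpha}e^{-i\theta_j}}+\frac{-e^{i\theta_j}}{1-e^{-\alpha}e^{i\theta_j}}\right),
\]
where $A$ is independent of $X$ and $B$ is the contribution of the $N$ conjugate pairs.

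First I would expand the deterministic term. Writing $1-e^{-\alpha}=\alpha(1-\alpha/2+\cdots)$ with $\alpha=a/N$ gives $A=-\tfrac{N}{a}-\tfrac12+\mathcal{O}(a/N)$, hence
\[
A^{K}=(-1)^K\frac{N^K}{a^K}+(-1)^K\frac{K}{2}\frac{N^{K-1}}{a^{K-1}}+\cdots=(-1)^K\frac{N^K}{a^K}+\mathcal{O}\!\left(\frac{N^{K-1}}{a^{K-1}}\right),
\]
which already produces the leading explicit term $(-1)^K(N/a)^K$, the remainder being swallowed by the first stated error term. I would then binomially expand and reduce everything to moments of $B$,
\[
\int_{SO(2N+1)}\left(\frac{\Lambda_X'}{\Lambda_X}(e^{-\alpha})\right)^K\mathrm{dX}=\sum_{m=0}^{K}\binom{K}{m}A^{K-m}\int_{SO(2N+1)}B^{m}\,\mathrm{dX}.
\]

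The crucial input is the size of the pair-moments $\int B^m\,\mathrm{dX}$. The joint law of the angles $\theta_j$ under Haar measure on $SO(2N+1)$ carries the weight $\prod_{j<k}(\cos\theta_j-\cos\theta_k)^2\prod_j\sin^2(\theta_j/2)$, which vanishes to second order as $\theta_j\to0$ exactly as the symplectic weight $\prod_j\sin^2\theta_j$ does; equivalently, the continuous part of the $SO(2N+1)$ scaling density coincides with that of $USp(2N)$. Consequently the moments of $B$ share the leading-order behaviour in $a$ and $N$ of the symplectic moments of Theorem \ref{thm3}, and they can be extracted from the general framework of Section \ref{theset}: namely $\int B^0\,\mathrm{dX}=1$, $\int B\,\mathrm{dX}=N(1+\mathcal{O}(a))$, $\int B^2\,\mathrm{dX}=\mathcal{O}(N^2)$, and $\int B^m\,\mathrm{dX}=\mathcal{O}(N^m/a^{m-3})$ for $m\ge3$. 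The decisive feature is that $\int B^2\,\mathrm{dX}$ carries \emph{no} negative power of $a$ (contrast the even orthogonal case, where the second moment is of size $N^2/a$); this is exactly what guarantees a clean coefficient at the next order. Assembling the expansion, the $m=1$ term gives $\binom{K}{1}A^{K-1}\int B\,\mathrm{dX}=K(-1)^{K-1}\tfrac{N^{K-1}}{a^{K-1}}\,N\,(1+\mathcal{O}(a))=-(-1)^K K\tfrac{N^K}{a^{K-1}}+\mathcal{O}(N^K/a^{K-2})$, producing the second explicit term, while $m=2$ contributes $\binom{K}{2}A^{K-2}\,\mathcal{O}(N^2)=\mathcal{O}(N^K/a^{K-2})$ and for $m\ge3$ one gets $\mathcal{O}((N/a)^{K-m})\,\mathcal{O}(N^m/a^{m-3})=\mathcal{O}(N^K/a^{K-3})=o(N^K/a^{K-2})$; together with the subleading part of $A^K$ these fall into the two stated error terms.

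The main obstacle is the penultimate step: rigorously pinning down the pair-moments $\int_{SO(2N+1)}B^m\,\mathrm{dX}$, in particular (i) the exact leading constant $+N$ of the first moment and (ii) the absence of a $1/a$ in the second moment. Both hinge on the near-$\theta=0$ suppression of the $SO(2N+1)$ eigenvalue density, so the real work is to carry the symplectic-type residue/combinatorial analysis of Section \ref{theset} through for the $SO(2N+1)$ weight $\prod_j\sin^2(\theta_j/2)$. One cannot shortcut this by the naive $s\to1$ value of $B$: although each conjugate pair contributes exactly $1$ at $s=1$, suggesting $B=N$, the point $s=e^{-\alpha}$ sits at a distance $\sim1/N$ from $1$, comparable to the eigenvalue spacing, so the region $\theta_j\sim\alpha$ matters and the naive value can mislead (indeed for $SO(2N)$ it gives the wrong sign). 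It is the honest density analysis that both confirms the constant $+1$ in $\int B\,\mathrm{dX}$ and supplies the order bounds for $m\ge2$ on which the error terms depend.
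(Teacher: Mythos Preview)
Your approach---splitting off the deterministic eigenvalue at $1$ and reducing to moments of the pair-contribution $B$---is genuinely different from the paper's, and the intuition is sound: the paper uses exactly this decomposition heuristically in Section~\ref{interpretation}, and at the very end of Section~\ref{odd} it derives bounds on $\int_{SO(2N+1)} B^K\,\mathrm{dX}$ \emph{as a corollary} of Theorem~\ref{thm4}. But that is precisely the issue: you are running the implication in the opposite direction, and the step you flag as the ``main obstacle'' is not a technicality but essentially the whole content of the theorem.

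The paper instead applies the $SO(2N+1)$ ratios formula (Proposition~\ref{CFSO}) directly, invoking the contour-integral lemma in the form \eqref{14} rather than \eqref{13}. This places $\prod_k\alpha_k$ in the numerator, so that after differentiating in the $\alpha_j$'s and scaling, the bracket to be raised to the $K$th power reads $[1-a-\tfrac{a}{2N}+\sum_j 2/(u_j^2-1)]$ instead of $[-a+\sum_j 2/(u_j^2-1)]$ as in the even orthogonal and symplectic cases. Consequently the $t$-trick produces the prefactor $e^{t(1-a-a/2N)}$ rather than $e^{-at}$, and the $K$th $t$-derivative at $t=0$ is already nonzero at zeroth order in $a$: the term $(1-a)^K\det\Psi_0$ survives, with $\det\Psi_0=(-2)^{\binom{K}{2}}$ the same determinant as in \eqref{detpsi0}. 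No climbing through successive powers of $a$ is required, which is why the odd case is the shortest in the paper's framework.

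Your symplectic analogy for the $B$-moments is suggestive but not a proof. The angle densities differ---$\prod_j\sin^2(\theta_j/2)$ for $SO(2N+1)$ versus $\prod_j\sin^2\theta_j$ for $USp(2N)$---and although both vanish quadratically at $\theta=0$ so that the local scaling limits agree, this alone neither pins down the exact leading constant in $\int B\,\mathrm{dX}=N+\cdots$ (which you need to get the coefficient $-K$) nor certifies the absence of a $1/a$ in $\int B^2\,\mathrm{dX}$. Making your route rigorous would require running the ratios/determinant machinery for the $B$-moments over $SO(2N+1)$ separately, which is at least as much work as the paper's direct computation and forfeits the $e^t$ simplification that makes the odd case easy there.
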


\section{Interpretation of the results}\label{interpretation}
The leading order behaviour of moments of the logarithmic derivative of characteristic polynomials taken from $SO(2N+1), SO(2N)$ and $USp(2N)$ is largely governed by the likelihood of a matrix in each respective ensemble having an eigenvalue at or near $1$.  Since the logarithmic derivative has the characteristic polynomial in the denominator, an eigenvalue at 1 causes a singularity in the logarithmic derivative when evaluated at $s=1$. Below we've included the one-level density in the large $N$ limit for the four ensembles considered, for reference. 

\begin{figure}[htbp]
\begin{center}
\includegraphics[scale=.25,angle=-90]{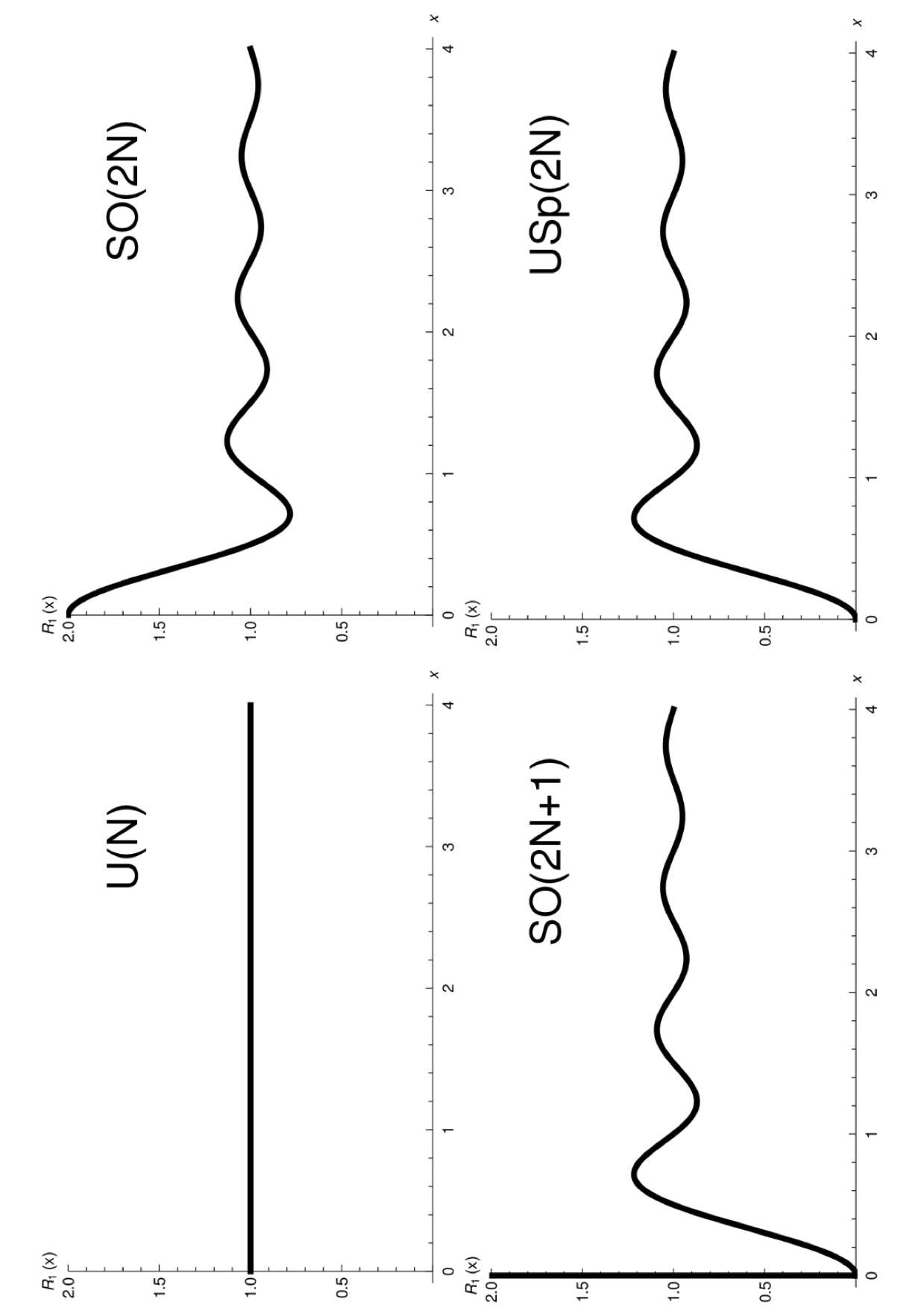}
\caption{The $x$-axis measures distance round the unit circle anti-clockwise from the point 1, in units of mean spacing. The height of the curve gives a relative likelihood of eigenvalues occuring at that position as we range over the ensemble.  The thick vertical line at the origin of the $SO(2N+1)$ plot represents a Dirac delta function. } \label{fig:onelevels}
\end{center}
\end{figure}

As $N$ becomes large and we evaluate the characteristic polynomial closer and closer to the point 1, it is in the $SO(2N+1)$ ensemble that the moment of the logarithmic derivative grows fastest with $N$; this can be seen from the factor $(N/a)^K$ in Theorem \ref{thm4}, as well as numerically when sampling as little as $100$ matrices for $N$ over 60. This is because every matrix in this ensemble has an eigenvalue equal to 1. In fact, writing the $SO(2N+1)$ logarithmic derivative out in terms of its eigenvalues ($1,e^{\pm i\theta_1}, \dots, e^{\pm i\theta_N}$), we can isolate the term that dominates as $s$ approaches 1. We have that 
\begin{align}\label{SO2N+1logder}
    \frac{\Lambda_X'}{\Lambda_X}(s)=\frac{-1}{1-s}+\sum_{n=1}^N \left( \frac{-e^{i\theta_n}}{1-se^{i\theta_n}}-\frac{e^{-i\theta_n}}{1-se^{-i\theta_n}} \right) \nonumber \\
    = \frac{-1}{1-s}+ \sum_{n=1}^N \frac{2s - 2\cos(\theta_n)}{s^2 - 2s\cos(\theta_n) + 1}.
\end{align} The leading order behaviour, $\left(\frac{-N}{a}\right)^K$ in Theorem \ref{thm4}, comes entirely from substituting $s=e^{-a/N}$ into  $\frac{-1}{1-s}$ as $N\rightarrow \infty$ and raising to the $K^{th}$ power.  This term corresponds to the eigenvalue at 1 and is always negative since $s \to 1^{-}$, therefore all odd moments in Theorem \ref{thm4} are negative as well.  The dominance of this term also comes from the fact that other eigenvalues are repelled from 1 (see Figure \ref{fig:onelevels}) and so it is unlikely there are other nearby eigenvalues making a significant contribution. 

For the even orthogonal ensemble $SO(2N)$, the logarithmic derivative can again be written in terms of the eigenvalues, which come in complex conjugate pairs ($e^{\pm i\theta_1}, \dots, e^{\pm i\theta_N}$):
\begin{align}\label{SO2Nlogder}
    \frac{\Lambda_X'}{\Lambda_X}(s)= \sum_{n=1}^N \left( \frac{-e^{i\theta_n}}{1-se^{i\theta_n}}-\frac{e^{-i\theta_n}}{1-se^{-i\theta_n}} \right) \nonumber \\
    =  \sum_{n=1}^N \frac{2s - 2\cos(\theta_n)}{s^2 - 2s\cos(\theta_n) + 1}.
\end{align}

One can consider two limits to understand the behaviour of the logarithmic derivative near 1. First, if we take the limit as $s \to 1$ with a fixed $\theta_n \neq 0$, that is, for a fixed matrix with no eigenvalue at 1, then each term in the sum 
\begin{align}\label{lim_in_s}
    \frac{2s - 2\cos(\theta_n)}{s^2 -2s\cos(\theta_n) + 1} \longrightarrow 1,
\end{align}
hence the logarithmic derivative of a matrix with no eigenvalue at 1, in this limit, is simply $N$. 

For the second limit, one can instead fix the point $s$ where we are evaluating the logarithmic derivative, and imagine the limit $\theta_n \to 0$. Then,  
\begin{align}\label{lim_in_theta}
    \frac{2s - 2\cos(\theta_n)}{s^2 -2s\cos(\theta_n) + 1} \longrightarrow \frac{2}{s-1},
\end{align}
which tells us that regardless of the ensemble, the logarithmic derivative evaluated near 1 will be dominated by large negative terms if there are eigenvalues close to 1. The eigenvalues of matrices in $SO(2N)$ show no repulsion from the point 1.  In fact, as we can see from Figure \ref{fig:onelevels}, there is a reasonable likelihood of finding an eigenvalue near the point 1. So while large contributions from eigenvalues very close to 1 are not guaranteed as they are in the odd orthogonal ensemble, they appear often enough that the first moment, as well as all other odd moments, are negative. Numerically, we see occurrences of negative values when we generate $10^2$ matrices for larger $N$, around 160, and with $10^4$ matrices for $N$ as small as 20. This behaviour is reflected in the $N^K/a^{K-1}$ factor in Theorem \ref{thm2} in that it doesn't grow as fast as the $SO(2N+1)$ case. 

Finally, the logarithmic derivative over the symplectic ensemble can be written exactly as in \eqref{SO2Nlogder}, and the two limits \eqref{lim_in_s} and \eqref{lim_in_theta} apply to this ensemble as well. However, the matrices in $USp(2N)$ have eigenvalues that show quadratic repulsion from the point 1, as seen in Figure \ref{fig:onelevels}. This means matrices with eigenvalues close to 1 are very rare in this ensemble, implying that large negative values of the logarithmic derivative are rare and smaller positive values are common. Indeed, when sampling numerically, we require at least $10^6$ matrices to begin to see some negative values. Correspondingly, we see the slowest growth in terms of $N$ in Theorem \ref{thm3}, compared to Theorems \ref{thm2} and \ref{thm4}, reflected in the smaller power of $a$ in these moments (recalling that $a = o(1)$ as $N \to \infty$). In this ensemble, the logarithmic derivative attains negative values very rarely, but when it does, the magnitude is much larger than more common values of the logarithmic derivative.  These rare but large events are therefore not enough to be seen in the average or the first few moments in the symplectic ensemble, but as one takes larger moments, their contribution is magnified; this explains why larger odd moments are eventually negative in the symplectic ensemble as well. 

\section{Preliminaries}\label{prems}
Throughout the paper, we denote the Vandermonde determinant by $\Delta(x)$, where
\begin{equation}\label{vanderdet}
    \Delta(x) = \prod_{1 \leq j < k \leq K} (x_k - x_j),
\end{equation} and it is the determinant of the Vandermonde matrix: 
\begin{align}
\begin{bmatrix}\label{vandermatrix}
1 & 1 & \dots & 1 \\
x_1 & x_2 & \dots & x_K \\
x_1^2 & x_2^2 & \dots & x_K^2 \\
\vdots & \vdots & \ddots & \vdots \\
x_1^{K-1} & x_2^{K-1} & \dots & x_K^{K-1}
\end{bmatrix}.    
\end{align} In our calculations, for various functions $f(x)$ we will come across multiple contour integrals of the following form:  
\begin{eqnarray}
   && \int_{\abs{u_i}=1} f(u_1)f(u_2) \cdots f(u_K) \Delta(u^2)\Delta(u) \mathrm{d}\mathbf{u} \nonumber \\
    &&= \int_{\abs{u_i}=1} f(u_1)f(u_2) \cdots f(u_K) \left( \sum_{\sigma \in S_K} sgn(\sigma) \prod_{i=1}^K u_i^{2\sigma(i) - 2} \right) \left( \sum_{\tau \in S_K} sgn(\tau) \prod_{k=1}^K u_{k}^{\tau(k)-1} \right) \mathrm{d}\mathbf{u},\\
   && = \sum_{\sigma \in S_K}\sum_{\tau \in S_K}\int_{\abs{u_i}=1} f(u_{\tau(1)})f(u_{\tau(2)}) \cdots f(u_{\tau(K)}) \left(  sgn(\sigma) \prod_{i=1}^K u_{\tau(i)}^{2\sigma(i) - 2} \right) \left(  sgn(\tau) \prod_{k=1}^K u_{\tau(k)}^{\tau(k)-1} \right) \mathrm{d}\mathbf{u},\nonumber
\end{eqnarray}
where in the second line we use Leibniz notation for the Vandermonde and in the third line we relabel the variables so that $u_k$ is replaced with $u_{\tau(k)}$ throughout. We now relabel the $\sigma$ sum so that $\sigma\rightarrow\sigma\tau$ and find
\begin{eqnarray}
 &&   \int_{\abs{u_i}=1} f(u_1)f(u_2) \cdots f(u_K) \Delta(u^2)\Delta(u) \mathrm{d}\mathbf{u} = \sum_{\sigma \in S_K}\sum_{\tau \in S_K}\int_{\abs{u_i}=1} f(u_{\tau(1)})f(u_{\tau(2)}) \cdots f(u_{\tau(K)}) \nonumber\\
&& \qquad\qquad\qquad \times \left(  sgn(\sigma\tau) \prod_{i=1}^K u_{\tau(i)}^{2\sigma(\tau(i)) - 2} \right) \left(  sgn(\tau) \prod_{k=1}^K u_{\tau(k)}^{\tau(k)-1} \right) \mathrm{d}\mathbf{u}.
\end{eqnarray} 
 The two $sgn(\tau)$ now cancel and we see that every one of the $K!$ terms in the $\tau$ sum gives exactly the same contribution. So 
 \begin{eqnarray}
 &&   \int_{\abs{u_i}=1} f(u_1)f(u_2) \cdots f(u_K) \Delta(u^2)\Delta(u) \mathrm{d}\mathbf{u} \nonumber \\
  &&= \int_{\abs{u_i}=1} K! f(u_1)f(u_2) \dots f(u_K) \left( \sum_{\sigma \in S_K} sgn(\sigma)  u_1^{2\sigma(1) - 2} u_2^{2\sigma(2) - 1} u_3^{2\sigma(3)} \dots u_K^{2\sigma(K) + K-3}\right) \mathrm{d}\mathbf{u} \nonumber \\
&&= \int_{\abs{u_i}=1} K! f(u_1)f(u_2) \dots f(u_K) \det\left[ u_i^{2j + i -3} \right]_{i,j = 1}^K    \mathrm{d}\mathbf{u} \nonumber \\
&&=  K!  \det\left[\int_{\abs{u_i}=1} f(u_i) u_i^{2j + i -3}\mathrm{d}u_i \right]_{i,j = 1}^K     \label{vandet}.
\end{eqnarray} 
To prove Theorems \ref{thm2}, \ref{thm3} and \ref{thm4} we will use two results, stated here and recalled in later sections as well. 
\begin{prop}{(Conrey, Forrester, Snaith \cite{CFS}, Proposition 2.3)}\label{propCFS} 
Let $z(x) := \frac{1}{1-e^{-x}}$, \, and \,  $\mathfrak{Re}(\gamma_q) \geq 0 \, \forall q$. Then, for $N \geq Q$, 
\begin{align}\label{CFSEO}
     \int_{SO(2N)} \frac{\prod_{k=1}^K \Lambda_X(e^{-\alpha_k})}{\prod_{q=1}^Q \Lambda_X (e^{-\gamma_q})}\mathrm{dX} = 
     \sum_{\varepsilon \in \{-1, 1\}^K}e^{N\sum_{k=1}^K (\varepsilon_k\alpha_k)}\frac{\prod\limits_{1 \leq j < k \leq K} z(\varepsilon_j\alpha_j + \varepsilon_k\alpha_k)\prod\limits_{1 \leq q \leq r \leq Q} z(\gamma_q + \gamma_r)}{\prod_{k=1}^K\prod_{q=1}^Q z(\varepsilon_k\alpha_k+\gamma_q) e^{N\sum_{k=1}^K \alpha_k}}.    
\end{align}   
\end{prop}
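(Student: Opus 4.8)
The plan is to reduce the group average to a finite-dimensional integral over the eigenangles via the Weyl integration formula for $SO(2N)$, and then to evaluate that integral in closed form using an Andreief/de Bruijn-type identity together with residue calculus. Writing the eigenvalues as the conjugate pairs $e^{\pm i\theta_1},\dots,e^{\pm i\theta_N}$, the Haar average of a class function becomes an integral against the orthogonal symmetry density, which is proportional to $\prod_{1\le j<k\le N}(\cos\theta_j-\cos\theta_k)^2$. The characteristic polynomials factor over the pairs, $\Lambda_X(e^{-\alpha_k})=\prod_{j=1}^N(1-e^{-\alpha_k+i\theta_j})(1-e^{-\alpha_k-i\theta_j})$, and crucially each factor is invariant under $\theta_j\mapsto-\theta_j$. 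This reflection symmetry is the source of the hyperoctahedral (signed-permutation) Weyl-group structure of $SO(2N)$, and it is what will ultimately produce the sum over sign patterns $\varepsilon\in\{-1,1\}^K$: once the permutation part of the symmetry is used up, each numerator shift $\alpha_k$ is free to contribute with either sign.

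Next I would treat numerator and denominator differently. The numerator $\prod_{k=1}^K\Lambda_X(e^{-\alpha_k})$ is a symmetric polynomial in the eigenvalues, so it combines with the Vandermonde coming from the Weyl density into a single antisymmetric object that expands as a determinant of monomials. The reciprocals $1/\Lambda_X(e^{-\gamma_q})$ are not polynomial, so I would expand each via a geometric series, using $1/(1-e^{-\gamma_q}e^{\pm i\theta_j})=\sum_{m\ge 0}(e^{-\gamma_q}e^{\pm i\theta_j})^m$, which converges since $\mathfrak{Re}(\gamma_q)\ge 0$. With both pieces written as combinations of monomials in the $e^{\pm i\theta_j}$, the angular integral separates pair by pair; the hypothesis $N\ge Q$ is what ensures that the ensuing contour manipulations pick up no additional contributions, making the identity exact rather than only asymptotic.

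At this stage I would apply an Andreief (Gram) identity — or, because of the $\theta_j\mapsto-\theta_j$ pairing, a de Bruijn-type Pfaffian identity (equivalently, after symmetrising over the reflections, an ordinary Andreief determinant) — to collapse the $N$-dimensional angular integral into a single determinant (respectively Pfaffian) whose entries are one-dimensional contour integrals over $\abs{u}=1$. Each entry is then evaluated by residues, and it is here that the kernel $z(x)=1/(1-e^{-x})$ appears: a residue at a pole coming from a denominator factor produces precisely $z(\varepsilon_k\alpha_k+\gamma_q)$ or $z(\gamma_q+\gamma_r)$, while the numerator pairings produce $z(\varepsilon_j\alpha_j+\varepsilon_k\alpha_k)$. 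The prefactor $e^{N\sum_k(\varepsilon_k-1)\alpha_k}$ tracks the leading monomial (the top power of each $e^{i\theta_j}$) selected in each sign sector.

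The main obstacle is the closed-form evaluation of the resulting determinant/Pfaffian and the bookkeeping needed to match the three distinct product structures in the answer: the strict pairwise product $\prod_{j<k}z(\varepsilon_j\alpha_j+\varepsilon_k\alpha_k)$ over numerator shifts (no diagonal), the product $\prod_{q\le r}z(\gamma_q+\gamma_r)$ over denominator shifts including the diagonal $z(2\gamma_q)$ (a signature of the self-dual orthogonal symmetry), and the cross product $\prod_{k,q}z(\varepsilon_k\alpha_k+\gamma_q)$ in the denominator. Recognising the reduced determinant as a Cauchy-type determinant, whose evaluation yields exactly such ratios of pairwise factors, is the crux; the remaining work is to re-sum the sign sectors into $\sum_{\varepsilon\in\{-1,1\}^K}$ and to confirm the $N$-dependence of the exponential prefactor. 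Throughout, the validity of the geometric expansions — controlled by $\mathfrak{Re}(\gamma_q)\ge 0$ and $N\ge Q$ — must be verified so that the interchange of summation and integration, and hence the final identity, is justified.
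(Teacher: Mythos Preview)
The paper does not prove this proposition at all: it is quoted verbatim from Conrey--Forrester--Snaith \cite{CFS} as a black-box input in the Preliminaries section, with no argument given. So there is no ``paper's own proof'' to compare your proposal against.

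That said, your outline is broadly the right shape for how such ratio-of-characteristic-polynomial formulae are derived, and is in the spirit of the original CFS argument: Weyl integration to pass to eigenangles, exploitation of the $\theta_j\mapsto-\theta_j$ symmetry characteristic of the orthogonal ensemble, geometric expansion of the denominator factors (this is where $\mathfrak{Re}(\gamma_q)\ge 0$ and $N\ge Q$ enter), and then a determinantal/Cauchy identity to produce the product of $z$-factors. Two points deserve more care. First, your account of where the $\varepsilon\in\{-1,1\}^K$ sum comes from is suggestive but imprecise: in the actual computation it arises not from the Weyl group of $SO(2N)$ acting on the $\theta_j$, but from the $2^K$ ways of picking a residue at $+\alpha_k$ versus $-\alpha_k$ for each numerator shift when the resulting contour integrals are closed. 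Second, the sentence ``Recognising the reduced determinant as a Cauchy-type determinant'' is where essentially all the work lies, and your sketch gives no indication of which Cauchy identity is needed or why the three product structures (strict $j<k$ for the $\alpha$'s, weak $q\le r$ for the $\gamma$'s, full double product for the cross terms) emerge with exactly those index ranges --- that asymmetry between numerator and denominator shifts is the delicate point distinguishing $SO(2N)$ from $USp(2N)$ and $SO(2N+1)$, and your proposal does not yet explain it.
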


\begin{lemma}{(Conrey, Farmer, Keating, Rubinstein and Snaith \cite{CFKRS}, Lemma 2.5.2)}\label{CFKRSlem}
Consider a function $F(w) = F(w_1, \dots, w_K)$ of $K$ variables which is symmetric and regular near $(0, \dots, 0)$, and a function $f(w)$ with a simple pole of residue 1 at $w = 0$ but is otherwise analytic in $\abs{w}\leq 1$. Given 
\begin{align}\label{11}
   H(w_1, \dots, w_K) = F(w_1, \dots, w_K) \prod_{1 \leq j < k \leq K} f(w_j + w_k), 
\end{align}
or
\begin{align}\label{12}
    H(w_1, \dots, w_K) = F(w_1, \dots, w_K) \prod_{1 \leq j \leq k \leq K} f(w_j + w_k),   
\end{align}
then, for $ \lvert \alpha_k \rvert < 1 $,
\begin{align}\label{13}
   \sum_{\varepsilon \in \{ -1, 1\}^K} H(\varepsilon_1 \alpha_1, \dots, \varepsilon_K \alpha_K) = \frac{(-1)^{K(K-1)/2}2^K}{K!(2\pi i)^K} \nonumber \\ \nonumber \\
   \times \oint_{\lvert w_1 \rvert = 1} \dots \oint_{\lvert w_K \rvert = 1} \frac{H(w_1, \dots, w_K) \Delta^2(w_1^2, \dots, w_K^2)\prod_{k=1}^K w_k}{\prod_{j=1}^K \prod_{k=1}^K (w_k-\alpha_j)(w_k + \alpha_j)} \prod_{k=1}^K \text{d}w_k
\end{align}
and 
\begin{align}\label{14}
   \sum_{\varepsilon \in \{ -1, 1\}^K} \left(\prod_{j=1}^K \varepsilon_j \right) H(\varepsilon_1 \alpha_1, \dots, \varepsilon_K \alpha_K) = \frac{(-1)^{K(K-1)/2}2^K}{K!(2\pi i)^K} \nonumber \\ \nonumber \\
   \times \oint_{\lvert w_1 \rvert = 1} \dots \oint_{\lvert w_K \rvert = 1} \frac{H(w_1, \dots, w_K) \Delta^2(w_1^2, \dots, w_K^2)\prod_{k=1}^K \alpha_k}{\prod_{j=1}^K \prod_{k=1}^K (w_k-\alpha_j)(w_k + \alpha_j)} \prod_{k=1}^K \text{d}w_k.
\end{align}
\end{lemma}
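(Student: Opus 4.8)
The plan is to establish both identities \eqref{13} and \eqref{14} by evaluating the $K$-fold contour integrals on their right-hand sides with the residue theorem and checking that the result collapses onto the signed sums on the left. I would work \eqref{13} out in full and then indicate the single modification needed for \eqref{14}.

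First I would locate the poles of the integrand in each variable $w_k$ inside $\abs{w_k}=1$. The explicit denominator $\prod_{j,k}(w_k-\alpha_j)(w_k+\alpha_j)$ contributes simple poles at $w_k=\pm\alpha_j$, and since $\abs{\alpha_j}<1$ these all lie inside the contour. The factor $H$ also carries the poles of $f(w_j+w_k)$ at $w_j+w_k=0$, and in the case \eqref{12} of $f(2w_k)$ at $w_k=0$. The crucial observation is that these are spurious: $\Delta^2(w_1^2,\dots,w_K^2)=\prod_{j<k}(w_k-w_j)^2(w_k+w_j)^2$ has a double zero along $w_j=-w_k$, which annihilates the simple pole of $f(w_j+w_k)$, while the numerator factor $\prod_k w_k$ kills the pole of $f(2w_k)$ at the origin. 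Hence the only poles that contribute are $w_k=\pm\alpha_j$, and I can write each $\oint \frac{dw_k}{2\pi i}$ as a sum of residues over these $2K$ points.

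Next I would organise the iterated residue as a choice, for each $k$, of a sign $\varepsilon_k\in\{-1,1\}$ and an index giving the pole $w_k=\varepsilon_k\alpha_{\pi(k)}$ for some map $\pi$. Here $\Delta^2(w^2)$ does the decisive work a second time: if $\pi$ is not injective, say $\pi(j)=\pi(k)$, then $w_j^2=w_k^2$ and $\Delta^2(w^2)$ vanishes, so the residue is zero; thus only $\pi\in S_K$ survive. Since both $H$ and the remaining factors are symmetric under permuting the $w_k$, every $\pi$ contributes identically, producing $K!$ equal terms that cancel the $1/K!$ in the prefactor, so it suffices to evaluate the $\pi=\mathrm{id}$ contribution.

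Finally I would carry out the single-variable residues at $w_k=\varepsilon_k\alpha_k$. Using $(w_k-\alpha_j)(w_k+\alpha_j)=w_k^2-\alpha_j^2$, the residue of the denominator is $\bigl(2\varepsilon_k\alpha_k\prod_{j\neq k}(\alpha_k^2-\alpha_j^2)\bigr)^{-1}$, and on taking the product over $k$ the factor $\prod_k\prod_{j\neq k}(\alpha_k^2-\alpha_j^2)=(-1)^{K(K-1)/2}\Delta^2(\alpha^2)$ cancels the $\Delta^2(\alpha^2)$ produced by evaluating $\Delta^2(w^2)$ at the poles. What remains is the clean constant $(-1)^{K(K-1)/2}/\bigl(2^K\prod_k\varepsilon_k\alpha_k\bigr)$, multiplied by $H(\varepsilon_1\alpha_1,\dots,\varepsilon_K\alpha_K)$ from evaluating $H$ at the poles. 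For \eqref{13} the numerator $\prod_k w_k=\prod_k\varepsilon_k\alpha_k$ exactly cancels $\prod_k\varepsilon_k\alpha_k$, leaving no net sign, whereas for \eqref{14} the numerator $\prod_k\alpha_k$ leaves behind precisely the factor $\prod_k\varepsilon_k$; this is exactly what distinguishes the two identities. Summing over $\varepsilon\in\{-1,1\}^K$ and combining with the prefactor (noting $(-1)^{K(K-1)}=1$) then reproduces the left-hand sides. I expect the main obstacle to be the sign and constant bookkeeping—in particular the cancellation of $\Delta^2(\alpha^2)$ and the careful tracking of $(-1)^{K(K-1)/2}$—together with the rigorous justification that the spurious poles of $f$ are cancelled by the zeros of the Vandermonde, which requires checking that the order of each such zero matches or exceeds the order of the pole.
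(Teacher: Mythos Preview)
The paper does not actually prove this lemma: it is stated in Section~\ref{prems} as a preliminary result imported verbatim from Conrey--Farmer--Keating--Rubinstein--Snaith \cite{CFKRS}, and no proof is supplied here. So there is no ``paper's own proof'' to compare against; the only question is whether your residue-calculus argument is sound.

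Your approach is the standard one and is essentially correct. The key steps --- cancelling the spurious poles of $f(w_j+w_k)$ against the double zero of $\Delta^2(w^2)$ along $w_j+w_k=0$, cancelling the pole of $f(2w_k)$ at the origin against the factor $\prod_k w_k$ in \eqref{13}, killing non-injective pole assignments via $\Delta^2(w^2)$, and then exploiting the full $S_K$-symmetry of the integrand to reduce to $\pi=\mathrm{id}$ --- are all valid, and your constant bookkeeping $\prod_k\prod_{j\ne k}(\alpha_k^2-\alpha_j^2)=(-1)^{K(K-1)/2}\Delta^2(\alpha^2)$ is right.

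One small point to be explicit about: when you turn to \eqref{14}, the numerator is $\prod_k\alpha_k$ rather than $\prod_k w_k$, so in the combination \eqref{12}~$+$~\eqref{14} the simple pole of $f(2w_k)$ at $w_k=0$ is \emph{not} cancelled by anything in the numerator. Your write-up silently assumes it is, by jumping straight to the residues at $w_k=\varepsilon_k\alpha_k$. This is harmless for the present paper, which only ever uses the pairings \eqref{11}+\eqref{13}, \eqref{12}+\eqref{13}, and \eqref{11}+\eqref{14} (as noted after the lemma), so the origin pole either does not arise or is killed by $\prod_k w_k$. If you want the argument to cover the lemma exactly as stated, you should either add a line handling the residue at $w_k=0$ in the case \eqref{12}+\eqref{14}, or note that this combination requires a separate check.
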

\vspace{5ex}
 The first case of Lemma \ref{CFKRSlem} (\eqref{11} and \eqref{13}) applies to our main computation, for the even orthogonal ensemble. The second form (equations \eqref{12} and \eqref{13}) will be used in Section \ref{Symplectic} for the symplectic ensemble, and finally, \eqref{11} and \eqref{14} will be used for the odd orthogonal case. 
 
\section{Definition and properties of the set of matrices $\mathcal{M}$}\label{theset}
In this section we define a set of matrices $\mathcal{M}$ that recur in our moment calculations, and derive some of their properties. The general structure of this set allows us to use these properties to prove our theorems over the even orthogonal, symplectic and odd orthogonal ensembles by specifying the appropriate parameters. 
\begin{defi}\label{M}
 The matrices $M \in \mathcal{M}$ have their $(i,j)^{th}$ entry of the form:
\begin{align}
M_{i,j} =  \left[
     \begin{array}{l}
       \frac{2^{e_j}}{2\pi i} \oint \frac{u_i^{2n_j+h_i}\exp(2t /(u_i^2-1))}{(u_i-1)^{K+e_j}(u_i+1)^{e_j}} \mathrm{du_i} 
     \end{array}
   \right]_{i,j=1}^{K},
\end{align}
with $e_j$ non-negative integers and with $n_j$ and  $h_i$ being integers such that  $ h_1<h_2<\cdots <h_K$. The contours of integration enclose 1 and -1. 
\end{defi}

The integral entries of the type  
\begin{equation}\label{intdef}
    I(r, E) := \frac{2^{E}}{2\pi i} \oint \frac{u^r \exp(2t/(u^2-1))}{(u-1)^{K+E}(u+1)^E} \mathrm{du},
\end{equation}
where $r$, $E$ and $K$ are integers ($E$ and $K$ will always be non-negative), satisfy a recursion formula. One can verify directly that 
\begin{equation}\label{recursion}
    I(r, E) = 2I(r-2,E-1) + I(r-2, E).
\end{equation}

\begin{defi}[Degrees]
We define the degree of $I(r,E)$ to be
\begin{equation}
    r - K - 2E.
\end{equation}
The degree of column $J$ in a matrix  $M\in\mathcal{M}$ is the largest degree of any integral occurring in that column; we denote it $D_J(M)$.  Remark that the $K^{th}$ row always determines the maximal degree of a column: $2n_J+h_K-K-2e_J$. Also, all the column degrees of $M\in\mathcal{M}$ have the same parity.  Finally, the degree of a matrix  $M\in\mathcal{M}$ is the sum of its column degrees:
\begin{equation}
    D(M) = \sum_{j=1}^{K} D_j(M).
\end{equation}
\end{defi}

It will be useful to note the degree of each one of the three terms in the recursion formula \eqref{recursion}:
\begin{align*}
  I(r,E) & \quad \text{has degree} \quad r-K -2E \\
  I(r-2,E-1) & \quad \text{has degree} \quad r -K-2E \\
  I(r-2,E)  & \quad \text{has degree} \quad r-K -2E-2.
\end{align*}

This tells us that each time we apply the recursion \eqref{recursion} to all the entries in the $J^{th}$ column of a matrix $M \in \mathcal{M}$, we can use the property 
\begin{equation}\label{detid2}
     \det(a_1, a_2, \dots, a_i+b_i,\dots,a_n) = \det\left(a_1, a_2, \dots,a_i,\dots, a_n\right) + \det\left(a_1, a_2, \dots, b_i,\dots, a_n\right),
\end{equation}
where the $a_i$ and $b_i$ represent either the rows or columns of the matrix, to split its corresponding determinant, $\det(M)$ into a sum of two determinants of matrices in $\mathcal{M}$, of which one has the same column degree $D_J(M)$ with an extra factor of 2, and the other has lower column degree by 2. We use this fact to prove Lemma \ref{lem2} below.

\begin{lemma}\label{lem2}
    For a matrix $M \in \mathcal{M}$, suppose that two of its columns have the same degree; that is,  $D_J(M) = D_{J^{'}}(M)$ for some $J \neq J^{'}$. Then, $\exists$  matrices $M_b$ and a finite $B$ such that $\det(M) = \sum\limits_{b=1}^B\det(M_b)$, where each $M_b\in \mathcal{M}$ and $\max\limits_b D(M_b)<D(M)$. 
\end{lemma}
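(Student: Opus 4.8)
The plan is to reduce $\det(M)$ to a determinant with two identical columns by repeatedly applying the recursion \eqref{recursion} to one of the two equal-degree columns, peeling off a strictly-lower-degree matrix at every step. First I would translate the hypothesis $D_J(M)=D_{J'}(M)$ into a relation between the column parameters. Since the $K$th row realises the column degree, $D_j(M)=2n_j+h_K-K-2e_j$, equality of the two column degrees is exactly the statement $n_J-e_J=n_{J'}-e_{J'}$. Relabelling if necessary, I may assume $e_J\le e_{J'}$ and set $m:=e_{J'}-e_J=n_{J'}-n_J\ge 0$.

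Next I would iterate the splitting described just before the lemma. Applying \eqref{recursion} to every entry of column $J'$ and using multilinearity \eqref{detid2} writes $\det(M)=2\det(\widetilde M_a)+\det(M_b)$, where $\widetilde M_a\in\mathcal M$ has column $J'$ with parameters $(n_{J'}-1,e_{J'}-1)$ and the same column degree, while $M_b\in\mathcal M$ has column $J'$ with parameters $(n_{J'}-1,e_{J'})$ and column degree lower by $2$; all other columns are untouched, so $D(M_b)=D(M)-2$. Repeating this on the surviving degree-preserving factor $\widetilde M_a$ a total of $m$ times produces
\begin{equation*}
\det(M)=2^{m}\det\!\big(\widetilde M_a^{(m)}\big)+\sum_{k=1}^{m}2^{k-1}\det\!\big(M_b^{(k)}\big),
\end{equation*}
where each $M_b^{(k)}\in\mathcal M$ satisfies $D(M_b^{(k)})=D(M)-2<D(M)$.

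The crux is the final degree-preserving matrix $\widetilde M_a^{(m)}$. After $m$ steps its $J'$th column carries parameters $(n_{J'}-m,\,e_{J'}-m)=(n_J,e_J)$, the second coordinate because $m=e_{J'}-e_J$ and the first because $n_{J'}-m=n_{J'}-(n_{J'}-n_J)=n_J$. Since $h_i$ depends only on the row, this column is now entry-for-entry identical to column $J$, so $\det(\widetilde M_a^{(m)})=0$ and the leading term disappears. What remains is a finite sum of determinants of matrices in $\mathcal M$, each of total degree $D(M)-2$; absorbing the positive integer coefficients $2^{k-1}$ by simply repeating each matrix $2^{k-1}$ times yields the asserted form $\det(M)=\sum_{b=1}^{B}\det(M_b)$ with $B=2^m-1$ and $\max_b D(M_b)<D(M)$. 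When $m=0$ the two columns already coincide and $\det(M)=0$, the empty sum.

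I expect the main obstacle to be the bookkeeping that every intermediate matrix genuinely lies in $\mathcal M$: I must check that the second parameter stays non-negative along the whole iteration (it decreases from $e_{J'}$ down to $e_J\ge 0$, so \eqref{recursion} is always legitimately applicable, since the column's $e$-value before the $k$th step is $e_{J'}-(k-1)\ge e_J+1\ge 1$), that the ordering $h_1<\dots<h_K$ is preserved because the $h_i$ are never altered, and that the factors of $2$ generated at each step are tracked correctly so that the collapse of columns $J$ and $J'$ really annihilates the single top-degree contribution rather than a scalar multiple of it.
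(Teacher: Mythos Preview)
Your proposal is correct and follows essentially the same route as the paper: both iterate the recursion \eqref{recursion} on the column with the larger $e$-parameter until its $(n,e)$ pair matches the other column, peeling off one strictly-lower-degree matrix in $\mathcal{M}$ at each step and then killing the top-degree survivor because its two columns coincide. Your bookkeeping of the powers of $2$ and of the non-negativity of the $e$-parameter along the iteration is in fact slightly more explicit than the paper's.
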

\begin{proof}
This lemma shows that for a given matrix in $\mathcal{M}$, if two of its columns have the same degree, then we can reduce the overall degree of the matrix. Indeed, we recall that the degree of a column is given by the maximal degree of its entries which, for matrices in $\mathcal{M}$, comes from the $K^{th}$ row. Then, if two columns have the same degree,
\begin{align*}
    D_J(M) = 2n_J+h_K-K - 2e_J = 2n_{J'} +h_K-K - 2e_{J^{'}} = D_{J^{'}}(M) \\
    \implies n_{J^{'}} = n_{J} - (e_J - e_{J^{'}}).
\end{align*}
Assume for simplicity that $e_J > e_{J^{'}}$ and let $N = e_J - e_{J^{'}}=n_J-n_{J'}$. We apply the recursion \eqref{recursion} $N$ times to each entry in column $J$, with $r_{i,J} = 2n_J + h_i$. Note that $r_{i,J}-2N=r_{i,J'}$. For each row $i$, at each step $b$ of the recursion, we obtain lower degree terms which we denote $\ell_{b,i}$. Then, the $i^{th}$ entry in column $J$ becomes:
\begin{align}\label{q1}
    I(r_{i,J}, e_J) = 2I(r_{i,J}-2, e_J - 1) + I(r_{i,J}-2, e_J) \\
    = 2I(r_{i,J}-2, e_J - 1) + \ell_{1,i} \nonumber \\
    = 4I(r_{i,J}-4, e_J-2) + 2\ell_{2,i} + \ell_{1,i} \nonumber \\
    = 8I(r_{i,J}-6, e_J-3) + 4\ell_{3,i} + 2\ell_{2,i} + \ell_{1,i} \nonumber \\
    = \dots = 2^NI(r_{i,J}-2N, e_J - N) + \sum_{b=1}^{N}2^{b-1}\ell_{b,i} \nonumber \\
    = 2^NI(r_{i,J^{'}}, e_{J^{'}}) + \sum_{b=1}^{N}2^{b-1}\ell_{b,i}. \label{rhs}
\end{align}

Now we can replace the $i^{th}$ entry in the $J^{th}$ column by  \eqref{rhs} without affecting the determinant and using (\ref{detid2}), split the determinant $\det(M)$ into a sum of determinants of matrices $\{M_b\}_{1 \leq b \leq N+1}$, where $M_{N+1}$ has the term $2^NI(r_{i,J^{'}}, E_{J^{'}})$ in its $J^{th}$ column and $i^{th}$ row, $M_{N}$ has $2^{N-1}\ell_{N,i}$ in its $J^{th}$ column and $i^{th}$ row, and generally, for $ 1 \leq b \leq N$, $M_b$ has the term $2^{b-1}\ell_{b,i}$ in its $J^{th}$ column and $i^{th}$ row. Then $M_{N+1}$ has two columns, $J$ and $J^{'}$ which are scalar multiples of each other and its determinant vanishes. For $ 1 \leq b \leq N$, $M_b$ has lower matrix degree, since $D_J(M_b) < D_J(M)$ and $D_L(M_b) = D_L(M)$ for all other columns $L \neq J$.
\end{proof}

\begin{rem} \label{rem-parity} It is also useful to remember that all the column degrees of a given $M \in \mathcal{M}$ have the same parity and, due to the recursion \eqref{recursion} always reducing the degree of an integral by 2, all of the matrices $M_b$ in the proof of lemma \ref{lem2} similarly have all column degrees of the same parity as those of $M$.
\end{rem}

The next propositions also concern matrices in the set $\mathcal{M}$; we establish a minimum matrix degree for $M \in \mathcal{M}$ to have a non-zero determinant. We then relate the matrix degree to column derivatives.

\begin{prop} \label{mindegree} The determinant of $M \in \mathcal{M}$ is zero if the minimum column degree is less than or equal to -2.  That is if 
\begin{equation}
    \min_{1\leq j\leq K}(2n_j+h_K-K-2e_j)\leq -2.
\end{equation}
\end{prop}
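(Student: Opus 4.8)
The plan is to reduce the statement to a vanishing property of the individual integral entries and then conclude from a zero column. Concretely, I would show that $I(r,E)=0$ whenever its degree $r-K-2E$ is at most $-2$; granting this, suppose column $J$ of $M$ satisfies $D_J(M)=2n_J+h_K-K-2e_J\le -2$. The $i$-th entry of that column is $I(2n_J+h_i,e_J)$, whose degree is $2n_J+h_i-K-2e_J$. Since $h_1<\cdots<h_K$, the maximal entry degree in the column is attained in the $K$-th row and equals $D_J(M)$, so \emph{every} entry of column $J$ has degree $\le -2$. By the claim each such entry is zero, column $J$ vanishes identically, and hence $\det(M)=0$.

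It remains to establish the claim about $I(r,E)$, and the idea is to enlarge the contour to infinity. The integrand $g(u)=u^r\exp(2t/(u^2-1))/[(u-1)^{K+E}(u+1)^E]$ has essential singularities only at $u=\pm1$, together with a pole at $u=0$ when $r<0$; all of these lie inside the prescribed contour about $\pm1$. Thus $g$ is analytic in the exterior region, and the contour may be deformed to a circle $\lvert u\rvert=R$ without crossing any singularity, so that $I(r,E)=\frac{2^E}{2\pi i}\oint_{\lvert u\rvert=R}g(u)\,\mathrm{d}u$ for all large $R$. The key estimate is the decay of $g$ at infinity: since $\exp(2t/(u^2-1))=1+O(u^{-2})$ as $\lvert u\rvert\to\infty$, one has $g(u)\sim u^{\,r-K-2E}=u^{\,D}$, where $D$ denotes the degree. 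On $\lvert u\rvert=R$ this gives $\lvert g(u)\rvert=O(R^{D})$, whence $\bigl\lvert\oint_{\lvert u\rvert=R}g\,\mathrm{d}u\bigr\rvert\le 2\pi R\cdot O(R^{D})=O(R^{D+1})$. When $D\le -2$ the exponent $D+1\le -1$ is negative, so letting $R\to\infty$ forces the integral, and therefore $I(r,E)$, to vanish.

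The step I expect to require the most care is the contour deformation together with the decay estimate at infinity. One must verify that the exponential factor genuinely only perturbs the leading $u^{D}$ behaviour — which it does, being $1+O(u^{-2})$, so it contributes nothing to the order of growth — and that no singularity escapes as the contour is expanded, in particular the pole at $u=0$ present when $r<0$, which is already interior to any contour enclosing both $\pm1$. Once it is confirmed that $g$ decays faster than $1/u$, equivalently that the residue at infinity vanishes, the conclusion is immediate and the remainder of the argument is bookkeeping with the degrees.
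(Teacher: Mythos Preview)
Your proof is correct and follows the paper's argument exactly: enlarge the contour and use the decay $g(u)=O(u^{D})$ with $D\le -2$ so that the integral over $\lvert u\rvert=R$ is $O(R^{D+1})\to 0$, whence every entry of the offending column vanishes. Your caution about a possible pole at $u=0$ when $r<0$ is more than the paper itself offers---its one-paragraph proof simply asserts that enlarging the contour crosses no poles---and in all the applications in the paper the numerator exponents $2n_j+h_i$ are nonnegative, so the issue does not arise.
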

\begin{proof} This is proved by considering the integrals in the column where each element has degree less than or equal to -2. In an integral of the type \eqref{intdef} the radius of the contour of integration  can be increased without crossing any poles, and thus without changing the integral. Because the degree of the integral is -2 or less, the integrand shrinks faster than the length of the contour grows, leading to a column of zeroes, and this causing a vanishing determinant. 
\end{proof}

\begin{prop}\label{odddegree}
 For a matrix $M \in \mathcal{M}$ as defined in Definition \ref{M}, if the parity of all column degrees is odd and if the degree of the matrix $D(M) < K(K-2)$ then $\det(M) = 0$. Furthermore, if $\det(M)\neq 0$ and $D(M) = K(K-2)$, it follows that the columns of $M$ must have degrees, in some order,
$$-1, 1, 3, 5, \dots, 2K-5, 2K-3.$$ 
\end{prop}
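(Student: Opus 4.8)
The plan is to combine the two structural facts already in hand—Proposition \ref{mindegree} (a column of degree $\le -2$ kills the determinant) and Lemma \ref{lem2} (two columns of equal degree let us rewrite $\det(M)$ as a sum of determinants of strictly smaller total degree)—with a single elementary arithmetic observation. Since all column degrees are odd, Proposition \ref{mindegree} forces every column degree to be at least $-1$ whenever $\det(M)\neq 0$. The smallest $K$ distinct odd integers that are $\ge -1$ are $-1,1,3,\dots,2K-3$, and their sum is
\begin{equation*}
\sum_{m=1}^{K}(2m-3) = K(K+1) - 3K = K(K-2).
\end{equation*}
Thus $K(K-2)$ is exactly the minimal possible degree of a matrix in $\mathcal{M}$ whose column degrees are odd, $\ge -1$, and pairwise distinct; any repetition among the degrees, or any degree below $-1$, is what I will exploit to force the determinant to vanish.

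For the first claim I would argue by strong induction on the integer $D(M)$, ranging over matrices $M \in \mathcal{M}$ with all column degrees odd. Given such an $M$ with $D(M) < K(K-2)$, if some column has degree $\le -2$ then $\det(M)=0$ immediately by Proposition \ref{mindegree}. Otherwise every column degree is $\ge -1$; since they are $K$ odd integers whose sum is strictly less than $K(K-2)$, they cannot be pairwise distinct, because distinct odd integers $\ge -1$ would already sum to at least $K(K-2)$. Hence two columns share a degree, and Lemma \ref{lem2} writes $\det(M) = \sum_b \det(M_b)$ with each $M_b \in \mathcal{M}$ of strictly smaller degree and, by Remark \ref{rem-parity}, again all odd column degrees. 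Each such $M_b$ still satisfies $D(M_b) < K(K-2)$, so the induction hypothesis gives $\det(M_b)=0$, whence $\det(M)=0$. The induction is well founded because each reduction strictly decreases the integer $D(M)$, and the degree-$\le -2$ case—which needs no induction hypothesis—terminates any descending chain.

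The second claim then follows quickly. Assuming $\det(M)\neq 0$ and $D(M)=K(K-2)$ with odd column degrees, Proposition \ref{mindegree} again gives all degrees $\ge -1$. If two columns had equal degree, Lemma \ref{lem2} together with the first claim, applied to the resulting matrices $M_b$ (which have $D(M_b)<K(K-2)$ and odd column degrees), would force $\det(M)=0$, a contradiction; so the $K$ column degrees are distinct odd integers $\ge -1$. Their sum being exactly the minimal value $K(K-2)$ then pins them down as precisely $-1,1,3,\dots,2K-3$. I expect the point requiring most care to be the bookkeeping in the induction: checking that the matrices $M_b$ produced by Lemma \ref{lem2} genuinely stay in $\mathcal{M}$ with odd column degrees and strictly smaller $D$, so the induction hypothesis applies, and confirming that the descent terminates through Proposition \ref{mindegree} rather than running forever.
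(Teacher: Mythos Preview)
Your proposal is correct and follows essentially the same approach as the paper: both combine Proposition~\ref{mindegree} (columns of degree $\le -2$ vanish) with Lemma~\ref{lem2} (equal column degrees allow a degree-lowering decomposition) and the arithmetic fact that $-1+1+3+\cdots+(2K-3)=K(K-2)$. The only difference is one of exposition: the paper absorbs the descent into a ``without loss of generality assume pairwise distinct'' clause, whereas you spell it out as a strong induction on $D(M)$, which is arguably more careful but amounts to the same argument.
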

\begin{proof}
By Lemma \ref{lem2}, we know that if a matrix $M \in \mathcal{M}$ has two columns with equal column degrees, then we can split the determinant $\det(M)$ into a sum of determinants of matrices in $\mathcal{M}$ of lower degree (still with odd column degrees, by Remark \ref{rem-parity}). Therefore, we now assume without loss of generality that we are working with matrices $M \in \mathcal{M}$ whose column degrees are all odd and pairwise distinct. If a column has degree $D_J(M) < -1$ then the column vanishes (by Proposition \ref{mindegree}) and $\det(M) = 0$. Therefore, $-1$ is the minimum column degree for a matrix $M \in \mathcal{M}$ with non-vanishing determinant. Since the degree of a column must be odd, the minimal matrix degree is simply
\begin{align}
    -1+ 1+ 3 + \dots + 2K-3 = \sum_{j=0}^{K-1} (2j-1) = K(K-2).
\end{align}
\end{proof}

\begin{prop}\label{evendegree}
 For a matrix $M \in \mathcal{M}$ as defined in Definition \ref{M}, if the parity of all column degrees is even and if the degree of the matrix $D(M) < K(K-1)$ then $\det(M) = 0$. Furthermore, if $\det(M)\neq 0$ and $D(M) = K(K-1)$, it follows that the columns of $M$ must have degrees, in some order,
$$0, 2, 4, 6, \dots, 2K-4, 2K-2.$$ 
\end{prop}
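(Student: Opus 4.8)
The plan is to run the argument of Proposition \ref{odddegree} essentially verbatim, with ``even'' in place of ``odd'', relying on the same two tools: Lemma \ref{lem2}, which lowers the matrix degree whenever two columns share a degree while keeping the matrix in $\mathcal{M}$ and preserving the common parity of the column degrees (Remark \ref{rem-parity}), and Proposition \ref{mindegree}, which forces $\det(M)=0$ the moment some column has degree $\leq -2$. The only genuinely new input is the elementary fact that the minimal sum of $K$ distinct even non-negative integers is $0+2+4+\cdots+(2K-2)=K(K-1)$.

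For the vanishing claim I would first reduce to the case in which the column degrees are pairwise distinct: if two columns share a degree, Lemma \ref{lem2} expresses $\det(M)$ as a sum of determinants of matrices in $\mathcal{M}$ of strictly smaller degree and still of even column parity, so an induction on $D(M)$ lets me assume distinctness. With the degrees distinct and even, Proposition \ref{mindegree} gives $\det(M)=0$ unless every column degree is $\geq 0$; writing the distinct even non-negative degrees as $d_1<\cdots<d_K$ we have $d_i\geq 2(i-1)$, whence $D(M)=\sum_i d_i\geq K(K-1)$, contradicting $D(M)<K(K-1)$. This establishes the first assertion.

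For the equality case, suppose $\det(M)\neq 0$ and $D(M)=K(K-1)$. Then no two columns can share a degree, for otherwise Lemma \ref{lem2} would write $\det(M)$ as a sum of determinants of matrices in $\mathcal{M}$ of degree strictly below $K(K-1)$, each vanishing by the part just proved. Hence the column degrees are distinct even integers, all $\geq 0$ by Proposition \ref{mindegree}, and the inequality $d_i\geq 2(i-1)$ is saturated at every index exactly when $(d_1,\dots,d_K)=(0,2,4,\dots,2K-2)$, which is therefore the only admissible list of column degrees. I expect no real difficulty, since this is a direct parity-analogue of the odd case; the one point to treat carefully is to phrase the passage to distinct columns as a bona fide induction on $D(M)$ (and to observe it is well-founded, since columns of degree $\leq -2$ already kill the determinant), rather than as an informal ``without loss of generality''.
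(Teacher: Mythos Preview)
Your proposal is correct and follows essentially the same approach as the paper: reduce via Lemma~\ref{lem2} (and Remark~\ref{rem-parity}) to the case of pairwise distinct even column degrees, invoke Proposition~\ref{mindegree} to force each degree to be at least $0$, and conclude from the minimal sum $0+2+\cdots+(2K-2)=K(K-1)$. The paper's own proof is in fact a one-line referral to the odd case (Proposition~\ref{odddegree}); your additional care in framing the reduction as an induction on $D(M)$ and in spelling out the equality case is a welcome tightening but not a different method.
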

\begin{proof}
The proof follows exactly as that of Proposition \ref{odddegree}. Here the column degrees are even, and $0$ is the minimum column degree for a matrix $M \in \mathcal{M}$ with non-vanishing determinant. The minimal matrix degree is simply
\begin{align}
    0+ 2+ 4 + \dots + 2K-2 = \sum_{j=0}^{K-1} 2j = K(K-1).
\end{align}
\end{proof}

\begin{prop} \label{maxderivatives} If we take $d$ derivatives with respect to $t$ of the determinant of a matrix $M \in \mathcal{M}$, we obtain a sum of determinants, each of which is from a matrix in $\mathcal{M}$, and has column degrees of the same parity as $M$. Each determinant in the sum will be zero if the matrix degree $D(M)<K(K-2)+2d$, when the column degrees of $M$ have odd parity, or if $D(M)<K(K-1)+2d$, when the column degrees of $M$ have even parity. 
\end{prop}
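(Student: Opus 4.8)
The plan is to reduce everything to a single observation: differentiating an entry $I(r,E)$ with respect to $t$ keeps it within the family of integrals \eqref{intdef} while lowering its degree by exactly $2$. Concretely, since the only $t$-dependence sits in the factor $\exp(2t/(u^2-1))$ and $u^2-1=(u-1)(u+1)$, I would compute
\begin{equation*}
\frac{\partial}{\partial t}I(r,E) = \frac{2^E}{2\pi i}\oint \frac{2\,u^r\exp(2t/(u^2-1))}{(u-1)^{K+E+1}(u+1)^{E+1}}\,\mathrm{d}u = I(r,E+1).
\end{equation*}
Since $\deg I(r,E+1) = r-K-2(E+1) = \deg I(r,E)-2$, a single $t$-derivative of an entry stays in the class of allowed entries (the exponent $E$ increases by one and remains non-negative, while $r$, $K$ and the row datum $h_i$ are untouched) and decreases its degree by $2$.

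Next I would pass from entries to the determinant. Differentiating $\det(M)$ in $t$ gives the standard column expansion $\partial_t\det(M)=\sum_{j=1}^K\det(M^{(j)})$, where $M^{(j)}$ is $M$ with every entry of its $j$-th column differentiated, i.e.\ with $e_j$ replaced by $e_j+1$. Each $M^{(j)}$ is again a member of $\mathcal{M}$ (the column data $(n_j,e_j+1)$ together with the row data $h_1<\cdots<h_K$ still satisfy Definition \ref{M}); its $j$-th column degree has dropped by $2$ while the other columns are unchanged, so $D(M^{(j)})=D(M)-2$, and because only one column degree moved, and by an even amount, all column degrees of $M^{(j)}$ retain the common parity of those of $M$ (cf.\ Remark \ref{rem-parity}).

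Iterating this $d$ times, I obtain $\partial_t^d\det(M)$ as a finite sum of determinants of matrices $M'\in\mathcal{M}$, each with $D(M')=D(M)-2d$ and with all column degrees of the same parity as $M$. Finally I would invoke the degree bounds already established: in the odd-parity case Proposition \ref{odddegree} forces $\det(M')=0$ whenever $D(M')<K(K-2)$, which reads $D(M)-2d<K(K-2)$, i.e.\ $D(M)<K(K-2)+2d$; in the even-parity case Proposition \ref{evendegree} forces $\det(M')=0$ whenever $D(M')<K(K-1)$, i.e.\ $D(M)<K(K-1)+2d$. This yields both vanishing statements.

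The only genuinely computational point is the identity $\partial_t I(r,E)=I(r,E+1)$; once that is in hand the argument is pure bookkeeping with degrees and parities, so that is where I would be most careful. A minor thing to verify is that the determinant differentiation formula is legitimate here, which needs only that each entry is an entire function of $t$ (the contour and the $u$-integration are independent of $t$), so differentiation under the integral sign and term-by-term in the determinant is justified.
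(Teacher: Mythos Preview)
Your proposal is correct and follows essentially the same route as the paper: compute $\partial_t I(r,E)=I(r,E+1)$, expand $\partial_t\det(M)$ column by column via \eqref{detid}, observe that each resulting matrix lies in $\mathcal{M}$ with degree lowered by $2$ and parity unchanged, iterate $d$ times, and invoke Propositions \ref{odddegree} and \ref{evendegree}. Your added remarks on parity preservation and the legitimacy of differentiating under the integral are accurate and slightly more explicit than the paper's version, but the argument is the same.
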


\begin{proof} We recall that the determinant is multilinear and its derivative can be written as a sum:
\begin{equation}\label{detid}
    \frac{\mathrm{d}}{\mathrm{d}t} \det(a_1, a_2, \dots, a_n) = \det\left(\frac{\mathrm{d}}{\mathrm{d}t}a_1, a_2, \dots, a_n\right) + \dots + \det\left(a_1, a_2, \dots, \frac{\mathrm{d}}{\mathrm{d}t} a_n\right),
\end{equation}
where the $a_i$ represent either the rows or columns of the matrix. In our case, we will be working with the columns (indexed by $j$). The derivative of an element of a matrix $M\in \mathcal{M}$ is
\begin{equation}
     \frac{\mathrm{d}}{\mathrm{d}t}  \frac{2^{e_j}}{2\pi i} \oint \frac{u_i^{2n_j+h_i}\exp(2t /(u_i^2-1))}{(u_i-1)^{K+e_j}(u_i+1)^{e_j}} \mathrm{du_i} = \frac{2^{e_j+1}}{2\pi i} \oint \frac{u_i^{2n_j+h_i}\exp(2t /(u_i^2-1))}{(u_i-1)^{K+e_j+1}(u_i+1)^{e_j+1}} \mathrm{du_i} , 
\end{equation} which, recalling Definition \ref{intdef}, we can write simply as
\begin{equation}
    \frac{\mathrm{d}}{\mathrm{d}t} I(2n_j+h_i, e_j)=I(2n_j+h_i, e_j+1).
\end{equation} Differentiating every element in a column gives us another matrix in $\mathcal{M}$ and this new matrix has degree two less than the original $M$.  Applying \eqref{detid} $d$ times, Proposition \ref{maxderivatives} follows from Propositions \ref{odddegree} and \ref{evendegree} and the fact that every time a column is differentiated, the degree of the matrix decreases by 2. 
\end{proof}
The following proposition shows why the converse of Proposition \ref{maxderivatives} does not hold, by characterising matrices whose degrees are greater than $K(K-2) + 2d$ (or $K(K-1) + 2d$) yet their determinants vanish if multiple columns are differentiated. This is due to differentiation causing two columns to become scalar multiples of each other in these particular cases, which will occur in Section \ref{Symplectic}. For this proposition we need to define the secondary column and matrix degrees. 
\begin{defi}[Secondary degrees]
For a matrix $M\in \mathcal{M}$, the secondary column degree, denoted $\Tilde{D}_j(M)$ is the second largest degree in the column $j$, which occurs in the $K-1^{th}$ row. Similarly, the secondary matrix degree is defined as the sum of the secondary column degrees: 
\begin{align}
\Tilde{D}(M) = \sum_{j=1}^K \Tilde{D}_j(M).
\end{align}
\end{defi}

\begin{prop} \label{mysterylemma}
For a matrix $M \in \mathcal{M}$ as defined in \ref{M} with $n_J = J$ for all $J$ and $e_1=\dots=e_K=0$, if $D_1(M) < 2K-1$ and $\Tilde{D}(M) < K(K-2)$ then $\frac{\mathrm{d^K}}{\mathrm{dt^K}}\det(M) = 0$. 
\end{prop}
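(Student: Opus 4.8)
The plan is to compute the $K$-th derivative by multilinearity and show that \emph{every} resulting term is the determinant of a matrix with either a zero column or two parallel columns. The engine is a recursion relating the $t$-derivative of one column of $M$ to the undifferentiated columns. Write $v_J$ for the $J$-th column of $M$, so that $(v_J)_i = I(2J+h_i,0)$ under the hypotheses $n_J=J$, $e_j=0$. Combining $\frac{d}{dt}I(r,0)=I(r,1)$ with the recursion \eqref{recursion} gives, entrywise, $\frac{d}{dt}I(2J+h_i,0)=I(2J+h_i,1)=2I(2(J-1)+h_i,0)+I(2(J-1)+h_i,1)$, i.e. $\frac{d}{dt}v_J = 2v_{J-1}+\frac{d}{dt}v_{J-1}$. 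Telescoping, and using that $v_k\equiv 0$ once $2k+h_K\le K-2$ (the integrand then decays at least like $u^{-2}$, cf. Proposition \ref{mindegree}), I obtain the clean identity $\frac{d}{dt}v_J = 2\sum_{k<J}v_k$ with constant coefficients. Iterating shows that $\frac{d^d}{dt^d}v_J$ is a constant-coefficient combination of the $v_k$ with $k\le J-d$.

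First I would expand $\frac{d^K}{dt^K}\det[v_1,\dots,v_K]$ by multilinearity as $\sum_{f}\binom{K}{f(1),\dots,f(K)}\det\!\big[\tfrac{d^{f(1)}}{dt^{f(1)}}v_1,\dots,\tfrac{d^{f(K)}}{dt^{f(K)}}v_K\big]$ over maps $f$ with $\sum_J f(J)=K$, then substitute the expansions above and use multilinearity again. This writes $\frac{d^K}{dt^K}\det(M)$ as a constant-coefficient combination of determinants $\det[v_{k_1},\dots,v_{k_K}]$ whose column indices satisfy $k_J\le J-f(J)$, hence $\sum_J k_J\le \tfrac{K(K+1)}{2}-K=\tfrac{K(K-1)}{2}$. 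Such a determinant vanishes unless the $k_J$ are distinct, so everything reduces to the following. \textbf{Claim:} for distinct integers $k_1,\dots,k_K$ with $\sum_J k_J\le \tfrac{K(K-1)}{2}$ one has $\det[v_{k_1},\dots,v_{k_K}]=0$.

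Here both hypotheses enter, and proving the claim is the crux. The entry in row $i$, column $J$ is $I(2k_J+h_i,0)$, which vanishes unless $2k_J+h_i\ge K-1$. Since $\tilde D(M)<K(K-2)$ forces $h_{K-1}\le K-4$, and hence $h_i\le K-4$ for every $i\le K-1$, a column with $k_J\le 1$ satisfies $2k_J+h_i\le K-2$ in all rows $i\le K-1$ and is therefore supported only in the last row. Consequently any two such columns are parallel, and such a column with vanishing last entry is identically zero; so if the determinant is nonzero there is at most one index with $k_J\le 1$, the remaining $K-1$ indices are distinct integers $\ge 2$, and the exceptional index (which the sum bound forces to exist, the alternative giving an even larger sum) must still have nonzero last entry, i.e. $k_J\ge \kappa_0:=\lceil (K-1-h_K)/2\rceil$. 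Summing these lower bounds gives $\sum_J k_J\ge \tfrac{K(K+1)}{2}-1+\kappa_0$, which together with $\sum_J k_J\le\tfrac{K(K-1)}{2}$ forces $\kappa_0\le 1-K$. But $D_1(M)<2K-1$ means $h_K\le 3K-4$, whence $\kappa_0\ge \lceil(3-2K)/2\rceil=2-K>1-K$, a contradiction; the claim, and hence the proposition, follow.

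The main obstacle is the regime of large $h_K$ (roughly $K-1\le h_K\le 3K-4$), where no column of $[v_{k_1},\dots,v_{k_K}]$ is automatically zero and the matrix-degree bound of Proposition \ref{maxderivatives} gives nothing. The content of the proposition is precisely that in this regime the secondary condition $\tilde D(M)<K(K-2)$ does the work: it confines all low-index columns to the last row, so that two of them agree up to scale (the ``two columns become scalar multiples'' phenomenon noted before the statement), and the counting inequality produced by differentiating $K$ times then closes the argument. Care is needed to verify that $\frac{d}{dt}v_J=2\sum_{k<J}v_k$ holds with genuinely constant coefficients, and to track the ceiling in $\kappa_0$ so that it is the strict bound $h_K\le 3K-4$, rather than $h_K\le 3K-3$, that produces the contradiction.
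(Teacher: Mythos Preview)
Your argument is correct. The telescoped identity $\tfrac{d}{dt}v_J = 2\sum_{k<J}v_k$ (the sum being finite since $v_k$, and more to the point $\tfrac{d}{dt}v_k$, vanishes once $k$ is sufficiently negative) is exactly the engine the paper is implicitly using, and your counting argument on the $k_J$ closes the proof cleanly.

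The paper's own proof proceeds somewhat differently in organisation. Rather than expanding $\tfrac{d^K}{dt^K}\det M$ fully into determinants $\det[v_{k_1},\dots,v_{k_K}]$ and running a global counting argument, the paper argues locally about how the $K$ derivatives can be distributed among the columns. It first observes that column~1 is always supported on the last row (this is your $k_J\le 1$ observation specialised to $J=1$), so it can absorb at most $\lfloor (D_1(M)+1)/2\rfloor<K$ derivatives before vanishing; any remaining derivative must land on some column $J'>1$, and the paper then argues (using the same recursion you use) that a single differentiation of column $J'$ produces, modulo the undifferentiated column $v_{J'-1}$ already present, another column supported only on the last row, hence parallel to column~1. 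The paper's version is terser and a little informal in its case analysis, whereas your reduction to the single \textbf{Claim} about distinct $k_J$ with $\sum k_J\le K(K-1)/2$ makes the whole argument uniform and avoids tracking how derivatives propagate column by column. Both routes rest on the same three inputs: the recursion, the secondary-degree bound forcing low-index columns into the last row, and the bound $h_K\le 3K-4$ from $D_1(M)<2K-1$ which yields the final contradiction.
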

\begin{proof}
By hypothesis, all column degrees are pairwise distinct. We assume that $D_J(M) \geq 2J - 1$ or $D_J(M) \geq 2J$ for matrices with odd and even parity column degrees respectively, since otherwise we apply Proposition \ref{maxderivatives} and find the determinant vanishes. If the secondary column degrees are $\Tilde{D}_J(M) \leq 2J-4$ and $\det(M) \neq 0$, the first column only has one nonzero entry, in its last row. This means that the first column can be differentiated until its degree is reduced to $-1$ before causing a vanishing determinant (or to 0 for even column degrees). For example, if $D_1(M) = 1$, the first column survives one derivative, and if $D_1(M) = 3$ it survives two derivatives. If $D_1(M)<2K-1$, as in the condition in the proposition, then the first column will not survive $K$ derivatives and to maintain a non-zero determinant we will be forced to differentiate the second column.  Once we differentiate the second column, all of its entries except for the last will have degree $< -1$ (or less than $0$ for even degrees) since the secondary degree is also reduced by 2 after differentiation. Therefore, both the first and second columns would be columns of zeroes except for their final row, which means they are scalar multiples of each other. Had we differentiated any other column $J^{'}$, then it would become a scalar multiple of the $J^{'} - 1 ^{th}$ column and cause a vanishing determinant as well. Thus, $\Tilde{D}_J(M) \geq 2J -3$, and the Proposition follows since
\begin{align}
    \sum_{j=1}^K (2j-3) = K(K-2).
\end{align}
\end{proof}

The following lemma that helps us compute derivatives of matrices of type $\mathcal{M}$.
\begin{lemma} \label{multiplicitylemma}
Let $M\in \mathcal{M}$, with $n_j=j$ for $j=1,\ldots,K$ and with $e_1=\dots=e_K=0$. Assume that either $h_K=K-1$, in which case the column degrees have odd parity and the matrix degree is $D(M)=K(K-2)+2K$, or else $h_K=K$, in which case the column degrees have even parity and the matrix degree is $D(M)=K(K-1)+2K$. The $K^{th}$ derivative of $\det M$ with respect to $t$, evaluated at $t=0$, is:
\begin{equation}\label{multiplicity}
    \frac{\mathrm{d}^{K}}{\mathrm{d}t^{K}}\det \left[
     \begin{array}{l}
       \frac{1}{2\pi i} \oint \frac{u_i^{2j+h_i}\exp(2t /(u_i^2-1))}{(u_i-1)^{K}} \mathrm{du_i}   \end{array}
   \right]_{i,j=1}^{K}\Bigg|_{t=0}=2^K \det \left[
     \begin{array}{l}
       \binom{2j+h_i-2}{K-1} \end{array}
   \right]_{i,j=1}^{K},
\end{equation}
applying the convention that $\binom{2j+h_i-2}{K-1}=0$ if $2j+h_i-2 \leq K-2$.
\end{lemma}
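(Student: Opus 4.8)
The plan is to differentiate the determinant $K$ times by treating it as a multilinear function of its columns, so that the product rule expands $\frac{\mathrm d^{K}}{\mathrm dt^{K}}\det M$ into a multinomial sum over the ways of distributing the $K$ derivatives among the $K$ columns. Using \eqref{detid} repeatedly, the term in which column $j$ is differentiated $d_j$ times (with $\sum_j d_j=K$) carries the multinomial coefficient $\binom{K}{d_1,\dots,d_K}$ and, by the computation in the proof of Proposition \ref{maxderivatives}, raises the order of the pole at $1$ in column $j$ from $K$ to $K+d_j$ (introducing a pole of order $d_j$ at $-1$) while producing a factor $2^{d_j}$. Pulling these factors out yields the global $2^{K}$ on the right-hand side. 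Setting $t=0$ collapses each entry to an explicit residue; since
\begin{equation*}
\frac{1}{2\pi i}\oint\frac{u^{m}}{(u-1)^{K}}\,\mathrm du=\binom{m}{K-1},
\end{equation*}
the lemma reduces to the combinatorial identity that the weighted sum of the column-differentiated determinants equals the single determinant $\det\big[\binom{2j+h_i-2}{K-1}\big]$.

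First I would prune this sum using the degree bookkeeping already developed. Every term is the determinant of a matrix in $\mathcal M$ whose matrix degree is $D(M)-2K$, which by hypothesis equals exactly the threshold $K(K-2)$ in the odd-parity case and $K(K-1)$ in the even-parity case. Propositions \ref{odddegree} and \ref{evendegree} therefore force the surviving terms to be precisely those whose column degrees form the minimal admissible multiset $\{-1,1,\dots,2K-3\}$ or $\{0,2,\dots,2K-2\}$. Since the undifferentiated column $j$ has degree $2j-1$ (odd case, $h_K=K-1$) or $2j$ (even case, $h_K=K$), and differentiating it $d_j$ times lowers this by $2d_j$, the surviving distributions are exactly $d_j=j-\pi(j)$, where $\pi$ runs over the bijections of $\{1,\dots,K\}$ onto $\{0,\dots,K-1\}$ with $\pi(j)\le j$ (the constraint $\sum_j d_j=K$ then holds automatically). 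This shows only these ``sub-diagonal'' terms can contribute, but, in contrast to the argument behind Proposition \ref{mysterylemma}, several of them survive simultaneously, so genuine addition and cancellation among them must occur.

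The cleanest way to evaluate the surviving sum is to avoid enumerating the $\pi$ and instead recombine the contour integrals. Writing the determinant as a single $K$-fold contour integral through the Leibniz expansion and the identity $\det[u_i^{2j}]_{i,j=1}^{K}=\prod_i u_i^{2}\,\Delta(u_1^{2},\dots,u_K^{2})$, the $t$-dependence factorises as $\prod_i\exp(2t/(u_i^{2}-1))=\exp\!\big(t\sum_l 2/(u_l^{2}-1)\big)$, so that
\begin{equation*}
\frac{\mathrm d^{K}}{\mathrm dt^{K}}\det M\Big|_{t=0}=\frac{2^{K}}{(2\pi i)^{K}}\oint\cdots\oint\prod_{i=1}^{K}\frac{u_i^{h_i+2}}{(u_i-1)^{K}}\Big(\sum_{l=1}^{K}\frac{1}{u_l^{2}-1}\Big)^{K}\Delta(u_1^{2},\dots,u_K^{2})\,\mathrm d\mathbf u.
\end{equation*}
Representing the claimed right-hand side the same way, via $\binom{2j+h_i-2}{K-1}=\frac{1}{2\pi i}\oint u_i^{2j+h_i-2}(u_i-1)^{-K}\,\mathrm du_i$, reduces the whole lemma to the single contour identity
\begin{equation*}
\oint\cdots\oint\prod_{i=1}^{K}\frac{u_i^{h_i}}{(u_i-1)^{K}}\Big[\Big(\prod_{l=1}^{K}u_l^{2}\Big)\Big(\sum_{l=1}^{K}\frac{1}{u_l^{2}-1}\Big)^{K}-1\Big]\Delta(u_1^{2},\dots,u_K^{2})\,\mathrm d\mathbf u=0.
\end{equation*}

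The main obstacle is establishing this last identity, i.e.\ showing that inserting the symmetric factor $\big(\prod_l u_l^{2}\big)\big(\sum_l 1/(u_l^{2}-1)\big)^{K}$ does not change the integral. I would attack it by expanding the bracket multinomially and applying the recursion \eqref{recursion} — equivalently $\hat I(r,E)=\hat I(r-2,E-1)+\hat I(r-2,E)$ for the normalised integrals $\hat I(r,E):=2^{-E}I(r,E)$ of \eqref{intdef} — to telescope the column-differentiated determinants indexed by the sub-diagonal bijections back onto the base determinant with $E=0$; the antisymmetry of $\Delta(u^2)$ is what makes the non-surviving multinomial terms drop out and forces the surviving ones to combine with the correct signs. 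A useful consistency check, which I would carry out first to fix signs and conventions, is $K=2$ with $h_1=0,\,h_2=1$: the distributions $(2,0),(1,1),(0,2)$ contribute $0,\,-4,\,+2$ before the overall $2^{K}$, summing to the required value $-2$. I expect the general telescoping — tracking which sub-diagonal bijections feed into which base columns under the recursion — to be the delicate part, with the degree propositions guaranteeing that no lower-degree debris survives at $t=0$.
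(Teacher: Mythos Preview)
Your setup is correct and matches the paper's up through the pruning step: after distributing the $K$ derivatives among the columns with multinomial weights and invoking Propositions~\ref{odddegree}/\ref{evendegree}, the only surviving distributions $(E_1,\dots,E_K)$ are those parametrised by $\sigma\in S_K$ with $\sigma_j\le j+1$ and $E_j=j+1-\sigma_j$ (your sub-diagonal bijections $\pi$). Your contour-integral reformulation of the remaining identity is also correct.

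The gap is precisely where you say it is. The plan to ``telescope via the recursion'' is the right instinct but you have not carried it to a closed form, and the contour-integral repackaging, while valid, does not by itself make the identity any easier. The paper fills this gap directly at the determinant level. Applying the recursion \eqref{recursion} $E_j$ times to column $j$ turns each entry into $2^{E_j}I(2j+h_i-2E_j,0)$ plus lower-degree terms; the latter vanish by Propositions~\ref{odddegree}/\ref{evendegree} since the matrix is already at threshold degree. Hence every surviving determinant is, up to the common factor $2^K$, a \emph{column permutation} of the single matrix $\hat M=\big[\tfrac{1}{2\pi i}\oint u_i^{2j+h_i-2}(u_i-1)^{-K}\,du_i\big]$, with sign $\mathrm{sgn}(\sigma)$. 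The whole $K$th derivative therefore equals $\det\hat M$ times
\[
K!\sum_{\substack{\sigma\in S_K\\ \sigma_j\le j+1}}\frac{\mathrm{sgn}(\sigma)}{\prod_{j}(j+1-\sigma_j)!}
\;=\;K!\,\det\Big[\tfrac{1}{(j+1-i)!}\,\mathbf 1_{i\le j+1}\Big]_{i,j=1}^{K}.
\]
A short row-reduction (subtract row $i-1$ from row $i$, then $2\times$ row $i-1$, etc.) triangularises this Toeplitz matrix with diagonal $1,\tfrac12,\tfrac13,\dots,\tfrac1K$, so its determinant is $1/K!$ and the sum collapses to $\det\hat M$. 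Evaluating $\hat M$ at $t=0$ by residues gives $2^K\det\big[\binom{2j+h_i-2}{K-1}\big]$, which is \eqref{multiplicity}. This Toeplitz-determinant evaluation is the concrete step your proposal is missing; once you recognise the surviving terms as signed column permutations of a single matrix, the identity is an exercise rather than an obstacle.
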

\begin{proof}
By \eqref{detid} the derivative is the sum of determinants of the form
\begin{equation}\label{mat1}
   \det \left[
     \begin{array}{l}
       \frac{2^{E_j}}{2\pi i} \oint \frac{u_i^{2j+h_i}\exp(2t /(u_i^2-1))}{(u_i-1)^{K+E_j}(u_i+1)^{E_j}} \mathrm{du_i}   \end{array}
   \right]_{i,j=1}^{K},
\end{equation}
where $E_j$ is the number of times the $j^{th}$ column has been differentiated, and $\sum_{j=1}^K E_j=K$.

We recall that the $(i,j)^{th}$ element in \eqref{mat1} is the integral $ I(2j+h_i, E_j)$. We use the recursion formula $\eqref{recursion}$, as we did in Lemma \ref{lem2}, to reduce $E_j$ to $0$.
\begin{align}
    I(2j+h_i, E_j) = 2I(2j+h_i-2, E_j - 1) + \ell_1 \nonumber \\
    = 4I(2j+h_i-4, E_j -2) + \ell_2 + \ell_1= \dots \nonumber \\
    = 2^{E_j}I(2j+h_i-2E_j, 0) + \mathcal{L} \nonumber \\
    = \frac{2^{E_j}}{2\pi i} \oint \frac{u_i^{2j+h_i-2E_j}\exp(2t/(u_i^2-1))}{(u_i-1)^K}\mathrm{du_i} + \mathcal{L}. \label{intpluslower}
\end{align}
Where $\mathcal{L} := \sum\limits_m \ell_m$. 

We repeat this process for every element in each column $J$. Note that for each column, we can split the determinant as a sum of determinants of matrices in $\mathcal{M}$ using \eqref{detid2}; one with the integral from \eqref{intpluslower} in the $J^{th}$ column, the others with the lower degree terms of $\mathcal{L}$. Since the matrix $M$ has degree $D(M)=K(K-2)+2K$ if the column degrees have odd parity (resp. matrix degree $D(M)=K(K-1)+2K$ if the column degrees have even parity), after $K$ derivatives the matrices of the form \eqref{mat1} have degree $K(K-2)$ (resp. $K(K-1)$). By Proposition \ref{odddegree} and Proposition \ref{evendegree} the matrices with lower degree terms from $\mathcal{L}$ in one or more columns have too small a matrix degree and therefore their determinant is 0.  Thus the derivative in \eqref{multiplicity} becomes the sum of determinants of the form
\begin{equation}\label{mat2}
   \det \left[
     \begin{array}{l}
       \frac{2^{E_j}}{2\pi i} \oint \frac{u_i^{2j+h_i-2E_j}\exp(2t /(u_i^2-1))}{(u_i-1)^{K}} \mathrm{du_i}   \end{array}
   \right]_{i,j=1}^{K},
\end{equation}
where $\sum_{j=1}^K E_j=K$.

The matrices in \eqref{mat2} belong to $\mathcal{M}$, so we recall that, by Proposition \ref{odddegree} and Proposition \ref{evendegree}, the determinant is non-zero only if the column degrees take, in some order, the values 
$-1, 1, 3, \dots, 2K-5, 2K-3$ (resp. $0,2,4,\dots,2K-4,2K-2$). One way that this can happen is if each column is differentiated exactly once (although we will see later that this is not the only scenario).

In the specific case where $E_j=1$ for $j=1,\ldots,K$, we set $t = 0$  and define
\begin{equation}\label{mhat}
 \det\hat{M}:=  \det \left[
     \begin{array}{l}
       \frac{2}{2\pi i} \oint \frac{u_i^{2j+h_i-2}}{(u_i-1)^{K}} \mathrm{du_i}   \end{array}
   \right]_{i,j=1}^{K}.
\end{equation}
We compute the integral as a residue, obtaining
\begin{align}
    \frac{1}{2\pi i} \oint \frac{u_i^{2j+h_i-2}\mathrm{du_i} }{(u_i-1)^K}     = \binom{2j+h_i-2}{K-1}. 
\end{align}
Then, if $E_j=1$ for all $j$, at $t=0$, \eqref{mat2} becomes 
\begin{equation}\label{binomialform}
  \det \hat{M}=  2^K \det \left[
     \begin{array}{l}
       \binom{2j+h_i-2}{K-1} \end{array}
   \right]_{i,j=1}^{K},
\end{equation}
 applying the convention that $\binom{2j+h_i-2}{K-1}=0$ if $2j+h_i-2 \leq K-2$.  

This determinant is only one of the terms that appear in the final sum. However, the column degrees in \eqref{mat2} can only take the values $-1, 1, 3, \dots, 2K-5, 2K-3$ (resp. $0,2,4,\dots,2K-4,2K-2$) in some order, and once the degree of a column of a matrix like \eqref{mat2} is fixed, this determines the value of $2j-2E_j$ and that fixes all elements in column $j$. Thus any other matrix of the form \eqref{mat2} with non-vanishing determinant will be a permutation of the columns of the case when $E_j=1$ for $j=1,\ldots,K$ and so will have the same determinant as in \eqref{binomialform} up to sign change. To compute the total sum, we need to consider the sign and multiplicity of each determinant. Recalling that the column degrees are given by the number of times $E_j$ that the $j^{th}$ column has been differentiated, we can assign to each matrix its corresponding vector $(E_1, E_2, \dots, E_K)$. The multiplicity of its corresponding determinant is then given by the multinomial coefficient
$$\frac{K!}{\prod\limits_j^K E_j!}.$$

Furthermore, each vector $(E_1, E_2, \dots, E_K)$ corresponds to a permutation $\sigma$ in the symmetric group $S_K$. The permutation $\sigma$ encodes the ordering of the column degrees $-1,1,3,\ldots,2K-3$ (resp. $0,2,4,\dots,2K-4,2K-2$).  This is illustrated in Table \ref{tab:permutation}, where we take for example $K = 4$ and initial column degrees $1, 3, 5, 7$. We show only the vectors and permutations that correspond to non-vanishing determinants, so we insist that the column degrees are unique, otherwise the determinant would be zero by Lemma \ref{lem2}. We also note that the sign of the permutation is precisely the sign of the corresponding determinant, since it determines the number of column swaps we would need to go from the matrix in \eqref{binomialform} to any other matrix from \eqref{mat2} with non-vanishing determinant.  

\begin{table}[htpb]
\begin{center}
\begin{tabular}{|c|c|c|c|c|}
     \hline 
     $D_J$ &$ \sigma$ & $E_J$ & $\mathrm{sgn}(\sigma)$ & \text{multiplicity}  \\ 
     \hline 
    (1, 3, 5, -1) & (2, 3, 4, 1) & (0,0,0,4) & - & 1\\
    (1, 3, -1, 5) & (2, 3, 1, 4) & (0,0,3,1) & + & 4 \\
    (1, -1, 5, 3) & (2, 1, 4, 3) & (0,2,0,2) & + & 6 \\
    (1, -1, 3, 5) & (2, 1, 3, 4) & (0,2,1,1) & - & 12 \\
    (-1, 3, 5, 1) & (1, 3, 4, 2) & (1,0,0,3) & + & 4
    \\ (-1, 3, 1, 5) & (1, 3, 2, 4) & (1,0,2,1) & - & 12 \\ 
    (-1, 1, 5, 3) & (1, 2, 4, 3) & (1,1,0,2) & - & 12
    \\ (-1, 1, 3, 5) & (1, 2, 3, 4) & (1,1,1,1) & + & 24 \\
    \hline
    \end{tabular}
    \end{center}
    \caption{\label{tab:permutation} This table shows, for a manageable example with $K=4$,  the relation between: the degree of the $J^{th}$ column, $D_J$, after differentiation; the permutation, $\sigma$, describing the order of the column degrees; $E_J$, the number of times the $J^{th}$ column was differentiated; and the multiplicity, the number of ways we can achieve this set of $D_J$'s by differentiating the columns in a different order.}
\end{table} 
We see that in the permutation column of the above table, there is the constraint that $\sigma_j \leq j+1$ because matrix $M$ in this example had initial column degrees $D_1=1,D_2=3,D_3=5,D_4=7$ and differentiating always lowers a column degree by 2, so column 1 can only have degree 1 or -1 after differentiation (without the determinant becoming zero by Proposition \ref{mindegree}). 

Noting that $E_j = j+1-\sigma_j$, we now have all the elements to compute the final sum; we know the sign and multiplicity of each determinant, and can take the full sum over non-vanishing determinants. Thus,
\begin{align}\label{diagdet} 
\frac{\mathrm{d}^K}{\mathrm{dt}^K}\det(M)\Bigg|_{t=0} =     \det(\hat{M}) K! \sum_{\substack{\sigma \in S_K \\ 
    \sigma_j \leq j+1}} \frac{\mathrm{sgn}(\sigma)}{\prod\limits_j^K(j+1-\sigma_j)!}.
\end{align} 

The sum in \eqref{diagdet} can also be seen as the determinant of a Toeplitz matrix with entries $1/(j+1-i)!$ if $i \leq j+1$ and $0$ otherwise. Taking again $K = 4$ as an example, this matrix would be:
\begin{align}
   T =  \begin{bmatrix} 
    1 & 1/2! & 1/3! & 1/4! \\
    1 & 1 & 1/2! & 1/3! \\
    0 & 1 & 1 & 1/2! \\
    0 & 0 & 1 & 1 \\
    \end{bmatrix}.
\end{align}

To compute this Toeplitz determinant, we note that $T$ can easily be turned into an upper triangular matrix by first subtracting the first row from the second:
\begin{align}
   \begin{bmatrix} 
    1 & 1/2! & 1/3! & 1/4! \\
    0 & 1 -1/2! & 1/2! - 1/3! & 1/3! - 1/4! \\
    0 & 1 & 1 & 1/2! \\
    0 & 0 & 1 & 1 \\
    \end{bmatrix}
    =    \begin{bmatrix} 
    1 & 1/2 & 1/6 & 1/24 \\
    0 & 1/2 &  1/3 & 1/8 \\
    0 & 1 & 1 & 1/2! \\
    0 & 0 & 1 & 1 \\
    \end{bmatrix},
\end{align}
then subtracting 2 times the second row from the third to get

\begin{align}
    \begin{bmatrix} 
    1 & 1/2! & 1/3! & 1/4! \\
    0 & 1/2 & 1/3 & 1/8 \\
    0 & 0 & 1/3 & 1/4 \\
    0 & 0 & 1 & 1 \\
    \end{bmatrix},
\end{align}
and finally subtracting 3 times the third row from the fourth: 
\begin{align}
    \begin{bmatrix} 
    1 & 1/2! & 1/3! & 1/4! \\
    0 & 1/2 & 1/3 & 1/8 \\
    0 & 0 & 1/3 & 1/4 \\
    0 & 0 & 0 & 1/4 \\
    \end{bmatrix}.
\end{align}

This process extends inductively for arbitrary $K$ to obtain an upper triangular matrix and so we can compute the determinant by taking the product of the diagonal entries $1/i$. More generally, if we define the Toeplitz matrix $T$ as:

\begin{align}
    T = \left[ \begin{array}{ll}
      \frac{1}{(j+1-i)!}   & 1 \leq i \leq j+1 \leq K \\
       0  & \text{otherwise}
    \end{array} \right], \\
    \det(T) = \sum_{\substack{\sigma \in S_K \\ 
    \sigma_j \leq j+1}} \frac{\mathrm{sgn}(\sigma)}{\prod\limits_j^K(j+1-\sigma_j)!} = \frac{1}{K!},
\end{align}
which implies that Equation $\eqref{diagdet} = \det(\hat{M})$. This concludes our computation and the proof of Lemma \ref{multiplicitylemma}.
\end{proof}

\section{The Even orthogonal case}\label{s2} 
To begin the proof of Theorem \ref{thm2}, we will apply Lemma \ref{CFKRSlem} with functions:
\begin{align}
   F(w) = \frac{e^{N\sum_{k=1}^K w_k}}{\prod_{k=1}^K \prod_{q=1}^Q z(w_k + \gamma_q)} 
\end{align} 
and 
\begin{align} 
f(w) = z(w_j + w_k).
\end{align}

Now we can rewrite \eqref{CFSEO} in Proposition \ref{propCFS} by applying Lemma \ref{CFKRSlem}, which gives us
\begin{eqnarray}
 &&   \int_{SO(2N)} \frac{\prod_{k=1}^K \Lambda_X(e^{-\alpha_k})}{\prod_{q=1}^Q \Lambda_X (e^{-\gamma_q})} \mathrm{dX} \nonumber \\
 &&\qquad= \sum_{\varepsilon \in \{-1, 1\}^K}e^{N\sum_{k=1}^K (\varepsilon_k\alpha_k - \alpha_k)}\frac{\prod\limits_{1 \leq j < k \leq K} z(\varepsilon_j\alpha_j + \varepsilon_k\alpha_k) \prod\limits_{1 \leq q \leq r \leq Q} z(\gamma_q + \gamma_r)}{\prod_{k=1}^K\prod_{q=1}^Q z(\varepsilon_k\alpha_k+\gamma_q)}\nonumber \\
  &&  \qquad= \frac{(-1)^{K(K-1)/2}2^K}{K!(2\pi i)^K} \oint \dots \oint\frac{\prod\limits_{1 \leq j < k \leq K}z(w_j+w_k)\Delta^2(w_1^2, \dots, w_K^2) \prod_{k=1}^K w_k}{\prod_{k=1}^K \prod_{q=1}^Q z(w_k+\gamma_q) \prod_{j=1}^K \prod_{k=1}^K (w_k-\alpha_j)(w_k + \alpha_j)} \nonumber \\
  && \qquad\qquad\qquad  \times e^{N(\sum_{k=1}^K w_k - \alpha_k)} \prod\limits_{1 \leq q \leq r \leq Q} z(\gamma_q + \gamma_r) \prod_{j=1}^K \text{d}w_j , \label{eq11}
\end{eqnarray}
where the contours of integration contain the poles $\pm \alpha_j$. Next we set $K=Q$ and differentiate both sides with respect to all $\alpha_j$. Since 
\begin{align*}
   \frac{\mathrm{d}}{\mathrm{d}\alpha} \Lambda_X(e^{-\alpha}) = -\Lambda^{'}_X(e^{-\alpha})e^{-\alpha}, 
\end{align*}
\begin{align}  \prod_{j=1}^K \frac{\mathrm{d}}{\mathrm{d}\alpha_j}\int_{SO(2N)} \frac{\Lambda_X(e^{-\alpha_j})}{\Lambda_X(e^{-\gamma_j})}\mathrm{dX} = (-1)^K \int_{SO(2N)} \prod_{j=1}^K\frac{ \Lambda^{'}_X(e^{-\alpha_j})e^{-\alpha_j}}{\Lambda_X (e^{-\gamma_j})} \mathrm{dX}.
\end{align}
Similarly, since
\begin{align*}
    \frac{\mathrm{d}}{\mathrm{d}\alpha} \frac{e^{N(w-\alpha)}}{\prod_{k=1}^K (w_k-\alpha)(w_k+\alpha)} = \frac{e^{N(w-\alpha)}}{\prod_{k=1}^K (w_k-\alpha)(w_k+\alpha)}\left[-N+ \sum_{k=1}^K \frac{2\alpha}{w_k^2 - \alpha^2}\right],
\end{align*}
\begin{align}\label{question}
    \prod_{j=1}^K \frac{\mathrm{d}}{\mathrm{d}\alpha_j} \frac{e^{N\sum_{k=1}^K(w_k-\alpha_k)}}{ \prod_{k=1}^K (w_k-\alpha_j)(w_k + \alpha_j)} = \prod_{j=1}^K \frac{e^{N\sum_{k=1}^K(w_k-\alpha_k)}}{ \prod_{k=1}^K (w_k-\alpha_j)(w_k + \alpha_j)}  \left[ -N + \sum_{k=1}^K \frac{2\alpha_j}{w_k^2-\alpha_j^2}\right].
\end{align}

Next, we fix all $\alpha_j, \gamma_j = \alpha$ for all $j$, and for simplicity denote $\prod_{j=1}^K \mathrm{d}w_j = \mathrm{d}\mathbf{w}$. Then,
\begin{align}
 (-1)^K \int_{SO(2N)} \left( \frac{ \Lambda^{'}_X}{\Lambda_X }(e^{-\alpha}) \right)^K e^{-K\alpha} \, \mathrm{dX} \\ \nonumber \\
 =  \frac{(-1)^{K(K-1)/2}2^K}{K!(2\pi i)^K} \oint \dots \oint \frac{ \prod\limits_{1 \leq j < k \leq K}z(w_j+w_k)\Delta^2(w^2) \prod_{k=1}^K w_k}{ \prod_{k=1}^K z(w_k+\alpha)^K}  \prod_{1 \leq j \leq k \leq K} z(2\alpha)  \nonumber \\ \nonumber \\
 \times \quad \frac{e^{N\sum_{k=1}^K w_k}e^{-NK\alpha}}{\prod_{j=1}^K (w_j-\alpha)^K(w_j+\alpha)^K} \left[-N + \sum_{j=1}^K \frac{2\alpha}{w_j^2-\alpha^2}\right]^K \mathrm{d}\mathbf{w} \\ \nonumber \\
 = \frac{(-1)^{K(K-1)/2}2^K}{K!(2\pi i)^K} z(2\alpha)^{K(K+1)/2} e^{-NK\alpha} \oint \dots \oint \frac{ \prod\limits_{1 \leq j < k \leq K}z(w_j+w_k)\Delta^2(w^2) \prod_{k=1}^K w_k}{ \prod_{k=1}^K z(w_k+\alpha)^K}   \nonumber \\ \nonumber \\
 \times \quad \frac{e^{N\sum_{k=1}^K w_k}}{\prod_{j=1}^K (w_j^2-\alpha^2)^K} \left[-N + \sum_{j=1}^K \frac{2\alpha}{w_j^2-\alpha^2}\right]^K  \text{d}\mathbf{w}. \label{eq20}
\end{align}
Now we are ready to compute the asymptotics in $N$. First we scale our variables by $N$ by setting $\alpha = a/N$ and $w_j = au_j/N$ where $a = o(1)$ as $N \to \infty$. We are evaluating the function $z(x) = \frac{1}{x} + \frac{1}{2} + \frac{x}{12} + \mathcal{O}(x^3)$ at points $au/N$ which are getting small as $N \to \infty$, therefore we can approximate $z(x)$ by $1/x$. Putting all this together, we can rewrite \eqref{eq20}: 
\begin{eqnarray}\label{zapp}
 &&   (-1)^K \int_{SO(2N)} \left( \frac{ \Lambda^{'}_X}{\Lambda_X }(e^{-\alpha}) \right)^K e^{-K \alpha} \mathrm{dX}= \frac{(-1)^{K(K-1)/2}2^K}{K!(2\pi i)^K}\left(\frac{N}{2a}\right)^{K(K+1)/2}\nonumber\\
 &&   \qquad \times e^{-Ka} \oint \dots \oint \prod_{1 \leq j < k \leq K} \frac{N}{a(u_j+u_k)} \Delta^2\left( \frac{a^2u_j^2}{N^2}\right) \left(\frac{a}{N}\right)^K \prod_{k=1}^K u_k \times \left(\frac{a}{N}\right)^{K^2}   \prod_{k=1}^K (u_k+1)^K \nonumber  \\
 &&   \qquad\times \left(\frac{N}{a}\right)^{2K^2} \frac{e^{a\sum_{k=1}^K u_k}}{\prod_{j=1}^K (u_j^2-1)^K}\left[-N + \sum_{j=1}^K \frac{2N}{a(u_j^2-1)} \right]^K \left(\frac{a}{N}\right)^{K} \mathrm{d}\mathbf{u} \, \left(1+ \mathcal{O}\left( \tfrac{a}{N} \right)\right), \label{eq23}
\end{eqnarray}
where now the contours contain $\pm 1$.  Gathering terms and simplifying, \eqref{eq23} equals
\begin{align}
    \left(\frac{-1}{2}\right)^{\binom{K}{2}} \left(\frac{N}{a}\right)^{K}\frac{e^{-Ka}}{K!(2\pi i)^K}  \oint \dots \oint \prod_{1 \leq j < k \leq K} \frac{(u_k-u_j)^2(u_k+u_j)^2}{(u_j+u_k)} \, e^{a\sum_{k=1}^K u_k} \nonumber \\
    \times  \prod_{k=1}^K  \frac{u_k  (u_k+1)^K}{(u_k-1)^K(u_k+1)^K}\left[-a + \sum_{j=1}^K \frac{2}{(u_j^2-1)} \right]^K \mathrm{d}\mathbf{u} \, \left(1+ \mathcal{O}\left( \tfrac{a}{N} \right)\right),
\end{align}
\begin{align}
    = \left(\frac{-1}{2}\right)^{\binom{K}{2}} \left(\frac{N}{a}\right)^{K}\frac{e^{-Ka}}{K!(2\pi i)^K}  \oint \dots \oint  \Delta(u^2) \Delta(u) \, e^{a\sum_{k=1}^K u_k} \nonumber \\
    \times  \prod_{k=1}^K  \frac{u_k }{(u_k-1)^K}\left[-a + \sum_{j=1}^K \frac{2}{(u_j^2-1)} \right]^K \mathrm{d}\mathbf{u} \, \left(1+ \mathcal{O}\left( \tfrac{a}{N} \right)\right). \label{eq27}
\end{align}
We will further simplify \eqref{eq27} by factorising it completely. First we note that the term in the square brackets can be written as a $K^{th}$ derivative:
\begin{equation}\label{27}
   \left[-a + \sum_{j=1}^K \frac{2}{(u_j^2-1)} \right]^K =  \frac{\mathrm{d}^K}{\mathrm{dt}^K}\Bigg|_{t=0} \quad \exp{\left(-at + t\sum_{j=1}^K \frac{2}{(u_j^2-1)}\right)}.
\end{equation}
Recalling the symmetry of the Vandermonde determinants, we rewrite the product of the two Vandermondes $\Delta(u^2)\Delta(u)$ as a determinant under the integral, as explained in \eqref{vandet}. Our computation is now completely factorised, so we bring all the terms in \eqref{eq27} into the determinant of \eqref{vandet}, recalling that each term of the product can be brought into the determinant by multiplying one column or row by that term. Now, we have shown that:
\begin{eqnarray}\label{det}
  && (-1)^K \int_{SO(2N)} \left( \frac{ \Lambda^{'}_X}{\Lambda_X }(e^{-\alpha}) \right)^K e^{-K\alpha} \, \mathrm{dX} =  \left(\frac{-1}{2}\right)^{\binom{K}{2}} \left(\frac{N}{a}\right)^{K} \\
 && \qquad \times e^{-Ka} \frac{\mathrm{d}^K}{\mathrm{dt}^K}\Bigg|_{t=0} e^{-at}  \, \det\left[ \frac{1}{2\pi i} \oint \frac{u_i^{2j+i-2}\exp(au_i+2t/(u_i^2-1))}{(u_i-1)^K} \mathrm{du_i} \right]_{i,j = 1}^K \, \left(1+ \mathcal{O}\left( \tfrac{a}{N} \right)\right). \nonumber
\end{eqnarray}

We recall that $a$ is getting small in the asymptotic regime as $N$ tends to infinity and so we now start to examine the main term in (\ref{det}) as an expansion in small $a$. Our first step is to show that the leading order term of the determinant in \eqref{det}, with the approximation $\exp(au_i) \sim 1$, is independent of $t$ by showing that its derivative with respect to $t$ is 0. 
\begin{lemma}\label{indep}
    The determinant 
    \begin{align*}
      \det B:=  \det\left[ \frac{1}{2\pi i} \oint \frac{u_i^{2j+i-2}\exp(2t /(u_i^2-1))}{(u_i-1)^K} \mathrm{du_i} \right]_{i,j = 1}^K 
    \end{align*} is independent of $t$, and is equal to 
    \begin{align*}
     \det\left[ \binom{2j+i-2}{K-1} \right]_{i,j = 1}^K. 
    \end{align*}
\end{lemma}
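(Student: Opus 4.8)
The plan is to show that $\det B$ has vanishing $t$-derivative, so that it is constant in $t$, and then to read off the constant by evaluating at $t=0$ via a residue computation. The whole argument rides on the degree machinery of Section \ref{theset}, so the first move is to recognise $B$ as a matrix in $\mathcal{M}$. Indeed, its $(i,j)^{th}$ entry is $I(2j+i-2,0)$ in the notation of \eqref{intdef}, i.e.\ it fits Definition \ref{M} with $e_j=0$ for all $j$, $n_j=j$, and $h_i=i-2$ (so that $h_1<h_2<\cdots<h_K$ as required). With these parameters the column degrees are $D_J(B)=2n_J+h_K-K-2e_J=2J-2$, which are the distinct even values $0,2,4,\dots,2K-2$, and the total matrix degree is $D(B)=\sum_{j=1}^K(2J-2)=K(K-1)$. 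The key observation is that $B$ therefore sits exactly at the minimal even matrix degree allowed for a non-vanishing determinant by Proposition \ref{evendegree}.

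Next I would differentiate $\det B$ with respect to $t$ and exploit this minimality. By the multilinearity of the determinant, \eqref{detid}, $\frac{\mathrm{d}}{\mathrm{d}t}\det B$ is a sum of $K$ determinants, each obtained by differentiating exactly one column of $B$. As established in the proof of Proposition \ref{maxderivatives}, differentiating a single column produces another matrix in $\mathcal{M}$, of the same (even) column parity, whose matrix degree is lower by $2$. Hence every determinant appearing in $\frac{\mathrm{d}}{\mathrm{d}t}\det B$ comes from a matrix in $\mathcal{M}$ of even column parity with matrix degree $K(K-1)-2<K(K-1)$. By Proposition \ref{evendegree} (equivalently Proposition \ref{maxderivatives} with $d=1$), each of these determinants is zero. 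Therefore $\frac{\mathrm{d}}{\mathrm{d}t}\det B=0$, which proves that $\det B$ is independent of $t$.

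Finally, since $\det B$ is constant in $t$, I would compute it at $t=0$, where the exponential factor becomes $1$ and each entry reduces to $\frac{1}{2\pi i}\oint \frac{u_i^{2j+i-2}}{(u_i-1)^K}\,\mathrm{d}u_i$. Substituting $u_i=1+v$ and extracting the coefficient of $v^{K-1}$ in $(1+v)^{2j+i-2}$ identifies this residue as $\binom{2j+i-2}{K-1}$, giving $\det B\big|_{t=0}=\det\big[\binom{2j+i-2}{K-1}\big]_{i,j=1}^K$, which is then the value of $\det B$ for all $t$. There is no serious analytic obstacle here: the only delicate point is the bookkeeping of degrees, namely checking that $B$ achieves precisely the threshold degree $K(K-1)$ so that a single differentiation pushes every summand strictly below it and forces the vanishing in Proposition \ref{evendegree}. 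Everything else is the standard residue evaluation.
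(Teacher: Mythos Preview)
Your proof is correct, but it takes a different route from the paper's own argument. You recognise $B$ as a matrix in $\mathcal{M}$ sitting exactly at the minimal even threshold degree $K(K-1)$ and then invoke Proposition~\ref{maxderivatives} (with $d=1$) and Proposition~\ref{evendegree} to kill $\tfrac{\mathrm{d}}{\mathrm{d}t}\det B$ in one stroke. The paper instead works by hand: it applies \eqref{detid} and then shows each of the $K$ summand determinants vanishes directly --- for the first-column derivative by enlarging the contour and noting the integrand is $\mathcal{O}(u^{-2})$, and for columns $j>1$ by exhibiting the differentiated $j^{\mathrm{th}}$ column as an explicit linear combination of the first $j-1$ undifferentiated columns via the geometric-series identity $\sum_{n=0}^{m-1}x^n=(x^m-1)/(x-1)$. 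Your approach is more economical once the degree machinery of Section~\ref{theset} is in place; the paper's approach is more self-contained and makes the vanishing mechanism concretely visible (indeed, the explicit column identity it proves is essentially what underlies Lemma~\ref{lem2} and hence Proposition~\ref{evendegree}). The $t=0$ residue evaluation is the same in both.
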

\begin{proof}
We will show independence of $t$ by demonstrating that when we differentiate $\det B$ we get a sum of determinants (as on the right hand side of the identity \eqref{detid}), which are all equal to 0. Once we have independence of $t$, we fix $t = 0$ to simplify the matrix and then evaluate its determinant directly. To carry out these steps, we begin by considering the matrix elements, which are integrals:
\begin{equation}\label{integral}
    \frac{1}{2\pi i} \oint \frac{u_i^{2j+i-2}\exp(2t/(u_i^2-1))}{(u_i-1)^K} \mathrm{du_i}.
\end{equation}

Then, to compute the derivative of each column, we compute the derivative of each integral entry:
\begin{align}
    \frac{\mathrm{d}}{\mathrm{dt}}\frac{1}{2\pi i} \oint \frac{u_i^{2j+i-2}\exp(2t/(u_i^2-1))}{(u_i-1)^K}\mathrm{du_i} 
    = \frac{1}{2\pi i} \oint \frac{2u_i^{2j+i-2}\exp(2t/(u_i^2-1))}{(u_i-1)^K(u_i^2-1)}\mathrm{du_i} \nonumber \\
    = \frac{2}{2\pi i} \oint \frac{u_i^{2j+i-2}\exp(2t/(u_i^2-1))}{(u_i-1)^{K+1}(u_i+1)}\mathrm{du_i}. \label{deriv} 
\end{align}

By the identity \eqref{detid}, the first derivative of the determinant is given by the sum over determinants where only one column has been differentiated. To account for all these terms, we consider two cases. First, consider the determinant whose first column ($j=1$) is differentiated. Equation \eqref{deriv} tells us that the terms along the first differentiated column are of the form: 
\begin{equation}\label{diff1}
        \frac{1}{\pi i} \oint \frac{u_i^{i}\exp(2t/(u_i^2-1))}{(u_i-1)^{K+1}(u_i+1)}\mathrm{du_i}    \quad  1 \leq i \leq K.
\end{equation} 
Note that since we carried out the scaling at \eqref{eq23} the contours of integration now contain $\pm 1$.  Since the only poles are at $u_i = \pm 1$, we can enlarge the contour of integration as much as needed, and note that the integrand in \eqref{diff1} is $\mathcal{O}(u_i^{-2})$. Thus the integral vanishes as we enlarge the contour of integration since the integrand shrinks faster than the length of the contour grows. Indeed, every entry along the differentiated first column vanishes, which gives us a 0 determinant and so this term does not contribute to the total sum of the derivative of $\det(B)$. 

Next we consider the terms that come from the determinants whose $j^{th}$ column has been differentiated, where $j>1$. For each of these terms, we show that the differentiated $j^{th}$ column is a linear combination of the first $j-1$ columns, therefore the matrix is not full rank and its determinant must be 0. Indeed, if we fix $j>1$ and sum the first $j-1$ columns, we get:
\begin{align}
   \frac{1}{2\pi i} \sum_{l=1}^{j-1}  \oint \frac{u_i^{2l+i-2}\exp(2t/(u_i^2-1))}{(u_i-1)^{K}}\mathrm{du_i} \nonumber \\
      =\frac{1}{2\pi i} \oint  u_i^i \sum_{l=0}^{j-2}  \frac{u_i^{2l}\exp(2t/(u_i^2-1))}{(u_i-1)^{K}}\mathrm{du_i}\label{lincom}.
\end{align}

Using the identity $\sum_{n=0}^{m-1} x^n = \frac{(x^{m}-1)}{(x-1)}$, \eqref{lincom} is equal to:
\begin{align*}    
 \frac{1}{2\pi i}\oint  u_i^i  \frac{(u_i^{2(j-1)}-1)\exp(2t/(u_i^2-1))}{(u_i-1)^{K}(u_i^2-1)}\mathrm{du_i} 
\end{align*}
\begin{equation}\label{diff}
    =  \frac{1}{2\pi i}\oint   \frac{(u_i^{2j+i -2} - u_i^i)\exp(2t/(u_i^2-1))}{(u_i-1)^{K+1}(u_i+1)}\mathrm{du_i}.    
\end{equation}
The final step is to take the difference between the $j^{th}$ column and the linear combination of the first $j-1$ columns and show that each entry vanishes. Indeed,  twice equation \eqref{diff} - equation \eqref{deriv} is equal to \eqref{diff1}
and so the corresponding determinant vanishes. 

This completes our argument that $\det(B)$ is independent of $t$, since it vanishes under differentiation with respect to $t$. This allows us to fix $t = 0$ for simplicity, so that each integral entry \eqref{integral} can be evaluated as a residue:
\begin{align}
  \frac{1}{2\pi i} \oint \frac{u_i^{2j+i-2}}{(u_i-1)^K} \mathrm{du_i}= {\rm Res}_{u_i=1} \Big(\frac{u_i^{2j+i-2}}{(u_i-1)^K}\Big)= \frac{\mathrm{d}^{K-1}}{\mathrm{du_i}^{K-1}}\frac{u_i^{2j+i-2}}{(K-1)!}\Bigg|_{u_i=1}  \nonumber\\
    = \frac{(2j+i-2)(2j+i-3) \dots (2j+i-(K+2))}{(K-1)!} \nonumber\\
    = \binom{2j+i-2}{K-1},
\end{align}
with the convention that $\binom{n}{m}=0$ if $m>n$.
\end{proof}
\begin{lemma}\label{lem1}
The following determinant of binomial coefficients can be computed explicitly in terms of K and is given by $$ \det\left[ \binom{2j+i-2}{K-1} \right]_{i,j = 1}^K  = (-2)^{\binom{K}{2}}.$$
\end{lemma}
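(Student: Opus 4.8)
The plan is to recognise the $(i,j)$ entry as the value of a single degree-$(K-1)$ polynomial evaluated at an argument that splits additively into a row part and a column part, and then to factor the matrix accordingly. Write $p(t) := \binom{t}{K-1} = \frac{t(t-1)\cdots(t-K+2)}{(K-1)!}$, a polynomial of degree $K-1$ with leading coefficient $\lambda = 1/(K-1)!$; on the integer arguments occurring here this polynomial agrees with the binomial coefficient and with the stated convention (it vanishes whenever $2j+i-2 \leq K-2$). Setting $u_i := i-1$ and $v_j := 2j-1$ gives $2j+i-2 = u_i + v_j$, so the matrix is $[\,p(u_i+v_j)\,]_{i,j=1}^K$, a determinant of ``generalised Vandermonde'' type.

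First I would Taylor-expand $p$ about $v_j$, giving $p(u_i+v_j) = \sum_{m=0}^{K-1}\frac{p^{(m)}(v_j)}{m!}\,u_i^{m}$. This exhibits the matrix as a product $M = UP$, where $U = [\,u_i^{m}\,]$ ($i=1,\dots,K$, $m=0,\dots,K-1$) is an ordinary Vandermonde matrix and $P = \big[\tfrac{p^{(m)}(v_j)}{m!}\big]$ ($m=0,\dots,K-1$, $j=1,\dots,K$). Hence $\det M = \det U \cdot \det P$. The first factor is immediate: $\det U = \prod_{1\leq i<i'\leq K}(u_{i'}-u_i) = \prod_{1\leq i<i'\leq K}(i'-i) = \prod_{k=1}^{K-1}k!$.

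The second factor is the crux. Row $m$ of $P$ is the polynomial $\tfrac{p^{(m)}(v)}{m!}$, which has degree $K-1-m$ and leading coefficient $\lambda\binom{K-1}{m}$. Because these degrees are strictly decreasing in $m$, I would clear the lower-order terms by Gaussian elimination from the bottom row upwards, reducing $P$ to the pure monomial matrix $\big[\lambda\binom{K-1}{m}\,v_j^{K-1-m}\big]$ without changing the determinant. This yields $\det P = \big(\prod_{m=0}^{K-1}\lambda\binom{K-1}{m}\big)\det[v_j^{K-1-m}]$; reversing the $K$ rows of $[v_j^{K-1-m}]$ to recover the standard Vandermonde contributes a sign $(-1)^{\binom{K}{2}}$, so $\det[v_j^{K-1-m}] = (-1)^{\binom{K}{2}}\prod_{j<j'}(v_{j'}-v_j) = (-1)^{\binom{K}{2}}2^{\binom{K}{2}}\prod_{k=1}^{K-1}k!$, using $v_{j'}-v_j = 2(j'-j)$.

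Finally I would collect the constants and simplify. Writing $s := \prod_{k=1}^{K-1}k!$, the product of leading coefficients telescopes via $\prod_{m=0}^{K-1}\binom{K-1}{m} = ((K-1)!)^{K}/s^{2}$ (since $\prod_{m=0}^{K-1}m! = \prod_{m=0}^{K-1}(K-1-m)! = s$), so $\lambda^{K}\prod_{m}\binom{K-1}{m} = s^{-2}$. Multiplying the three pieces, the factor $s$ from $\det U$ and the factor $s$ from the reversed Vandermonde exactly cancel the $s^{-2}$, leaving $\det M = (-1)^{\binom{K}{2}}2^{\binom{K}{2}} = (-2)^{\binom{K}{2}}$. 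The main obstacle is purely the bookkeeping in this last step — correctly tracking the reversal sign $(-1)^{\binom{K}{2}}$ and verifying that the superfactorials cancel against the binomial product — rather than any conceptual difficulty; I would double-check the whole chain against $K=2$ and $K=3$, which give $-2$ and $-8$ respectively.
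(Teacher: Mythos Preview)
Your argument is correct and takes a genuinely different route from the paper. The paper proceeds concretely: it applies Pascal's recurrence $\binom{n}{r}-\binom{n-1}{r}=\binom{n-1}{r-1}$ iteratively on the rows to transform the matrix into $\bigl[\binom{2j-1}{K-i}\bigr]_{i,j}$, then observes that the $(K-m)$th row is a degree-$m$ polynomial in $j$ with leading coefficient $2^m/m!$, and clears lower-order terms by row operations to reach a Vandermonde in the column indices $1,2,\ldots,K$, up to the scalar $\prod_{m=0}^{K-1}2^m/m!$ and a row reversal sign. Your approach instead recognises the entry as $p(u_i+v_j)$ with $p(t)=\binom{t}{K-1}$, $u_i=i-1$, $v_j=2j-1$, and factors the matrix as $UP$ via the Taylor expansion $p(u+v)=\sum_m \tfrac{p^{(m)}(v)}{m!}u^m$; each factor is then a polynomial alternant that reduces to a Vandermonde times explicit scalars, and the superfactorials cancel cleanly. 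Your method is more structural and immediately generalises (it works verbatim with any degree-$(K-1)$ polynomial and any affine shifts $u_i,v_j$), while the paper's is more hands-on and perhaps easier to follow without the factorisation insight. Both ultimately rest on the same Vandermonde evaluation, and your bookkeeping of the reversal sign $(-1)^{\binom{K}{2}}$ and the identity $\prod_{m=0}^{K-1}\binom{K-1}{m}=((K-1)!)^{K}/s^{2}$ is correct.
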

\begin{proof}
We'll compute the determinant by turning the matrix into a Vandermonde matrix \eqref{vandermatrix} through some linear combinations of rows, which don't alter the determinant, and with some row swaps, which assign the sign of the determinant according to the parity of $K$. We begin by writing out the matrix:
\begin{align}
\begin{bmatrix}
\binom{1}{K-1} & \binom{3}{K-1} & \dots & \binom{2K-1}{K-1} \\
\binom{2}{K-1} & \binom{4}{K-1} & \dots & \binom{2K}{K-1} \\
\binom{3}{K-1} & \binom{5}{K-1} & \dots & \binom{2K+1}{K-1} \\ \dots \\ 
\binom{K}{K-1} & \binom{K+2}{K-1} & \dots & \binom{3K-2}{K-1} \\
\end{bmatrix}.
\end{align}

Using Pascal's recurrence 
\begin{align}\label{pascal}
\binom{n}{r} - \binom{n-1}{r} = \binom{n-1}{r-1},
\end{align}
we subtract the $i-1^{th}$ row from the $i^{th}$ for all $i=K, K-1, \ldots,2$, in that order. This process returns the following matrix:
\begin{align}
\begin{bmatrix}
\binom{1}{K-1} & \binom{3}{K-1} & \dots & \binom{2K-1}{K-1} \\
\binom{1}{K-2} & \binom{3}{K-2} & \dots & \binom{2K-1}{K-2} \\
\binom{2}{K-2} & \binom{4}{K-2} & \dots & \binom{2K}{K-2} \\ \dots \\ 
\binom{K-1}{K-2} & \binom{K+1}{K-2} & \dots & \binom{3K-1}{K-2} \\
\end{bmatrix}.
\end{align}

Next, we repeat this process, however now we fix the first and second rows; in other words, we only apply the process to the rows $i=K, K-1,\ldots, 3$. Reiterating this process, each time fixing one more row, we end up with the following matrix, whose determinant is identical to the one we started with:
\begin{align}
\begin{bmatrix}
\binom{1}{K-1} & \binom{3}{K-1} & \dots & \binom{2K-1}{K-1} \\
\binom{1}{K-2} & \binom{3}{K-2} & \dots & \binom{2K-1}{K-2} \\
\binom{1}{K-3} & \binom{3}{K-3} & \dots & \binom{2K-1}{K-3} \\ \dots \\ 
\binom{1}{0} & \binom{3}{0} & \dots & \binom{2K-1}{0} \\
\end{bmatrix} = \left[\binom{2j-1}{K-i}\right]_{i,j= 1}^K.
\end{align}

To see this matrix as a Vandermonde matrix, we work from the bottom up; first, we note that the $K^{th}$ row has each entry equal to $1$ since $\binom{m}{0} = 1$ for all integer $m$. The $K-1^{th}$ row has entries:
\begin{align}
    \binom{2j-1}{1} = 2j-1
\end{align} 
which are linear in the column index $j$. We can add the $K^{th}$ row to the $K-1^{th}$ row and pull out the factor of $2$, so that the bottom two rows now look like
\begin{align}
   2 \begin{bmatrix}
    1 & 2 & 3 & \dots & K \\
    1 & 1 & 1 & \dots & 1
    \end{bmatrix}.
\end{align}

Similarly, the entries in the $K-2^{th}$ row are quadratic polynomials in $j$, since
\begin{align}
    \binom{2j-1}{2} = \frac{(2j-1)(2j-2)}{2} = (2j-1)(j-1) = 2j^2 - 3j + 1.
\end{align}
If we add 3 times the $K-1^{th}$ row and subtract the $K^{th}$ row from the $K-2^{th}$ row, and then pull out the factor of 2, the bottom 3 rows then look like: 
\begin{align}
   2^2 \begin{bmatrix}
    1 & 4 & 9 & \dots & K^2 \\
    1 & 2 & 3 & \dots & K \\
    1 & 1 & 1 & \dots & 1
    \end{bmatrix}.
\end{align}

Similarly, the $K-3^{th}$ row has cubic polynomial entries
\begin{align}
    \binom{2j-1}{3} = \frac{4}{3}j^3 - 4j^2 + \frac{11}{3}j - 1.
\end{align}
We can apply linear combinations of the lower rows and pull the coefficient of the cubic term $\frac{2^3}{3!}$ out of the matrix, such that all entries in the $K-3^{th}$ row are cubes of the column index $j$. We proceed similarly for each row, pulling out the $\frac{2^m}{m!}$ coefficient from the $K-m^{th}$ row, and end up with the following matrix: 
\begin{align}
    \prod_{m=0}^{K-1} \frac{2^m}{m!} \begin{bmatrix}
    1 & 2^{K-1} & \dots & K^{K-1}\\
    1 & 2^{K-2} & \dots & K^{K-2} \\
    \dots \\
    1 & 2 & \dots & K \\
    1 & 1 & \dots & 1 
    \end{bmatrix}.
\end{align}

Now our matrix looks like a Vandermonde matrix, but the degree of the variables is decreasing along columns instead of increasing. To fix this, we rearrange the rows, for a total of $\binom{K}{2}$ row swaps. Now we have a proper Vandermonde matrix whose variables are given by the column indices $j = 1, 2, 3, \dots, K$. We can now conclude that 
\begin{align}
    \det\left[ \binom{2j+i-2}{K-1} \right]_{i,j = 1}^K  = (-1)^{\frac{K(K-1)}{2}}  \Delta(1, 2, \dots, K) \prod_{m=0}^{K-1} \frac{2^m}{m!} \nonumber \\
    = (-1)^{\frac{K(K-1)}{2}} \frac{2^{\binom{K}{2}}G(K+1)}{G(K+1)} \nonumber \\
    = (-2)^{\binom{K}{2}}.
\end{align}
Where $G$ is the Barnes $G$-function, ie $G(K+1) = 0! 1! 2! \dots (K-1)!$
\end{proof}

An identical method to that above allows us to state a slightly more general result (without proof):
\begin{lemma}\label{genlem}
\begin{align}\label{rhs1}
 \det\left[\binom{2j-m}{n-i}\right]_{i,j= 1}^n=(-2)^{\tfrac{n(n-1)}{2}},
\end{align}
where $m=0,1,2$.  Note that the right hand side of \eqref{rhs1} doesn't depend on the value taken by $m$. 
\end{lemma}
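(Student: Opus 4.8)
The plan is to mirror the argument used in the proof of Lemma \ref{lem1}, the one new ingredient being the observation that the leading behaviour in the column index is insensitive to $m$. Writing $r=n-i$, I regard the $(i,j)$ entry
\begin{equation*}
\binom{2j-m}{n-i}=\binom{2j-m}{r}=\frac{(2j-m)(2j-m-1)\cdots(2j-m-r+1)}{r!}
\end{equation*}
as a polynomial of degree $r$ in the column variable $j$. Expanding each factor, every one contributes a term $2j$ at top order, so the leading coefficient is $\frac{2^{r}}{r!}$ \emph{independently of $m$}. This is exactly why the right-hand side of \eqref{rhs1} does not see the value of $m$: all $m$-dependence sits in the lower-order-in-$j$ terms, which will be cleared away by determinant-preserving row operations.

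First I would reduce the matrix from the bottom row upward. The last row ($i=n$, i.e.\ $r=0$) is identically $1$; the row above it ($r=1$) is linear in $j$; and in general the row indexed by $r=n-i$ is a degree-$r$ polynomial in $j$. Since the rows beneath a given row realise all polynomials in $j$ of strictly smaller degree, I can subtract from each row a suitable linear combination of the rows below it, annihilating every term except the leading one and turning the row $r=n-i$ into $\frac{2^{r}}{r!}\,j^{\,r}$. These operations leave the determinant unchanged. Pulling the scalar $\frac{2^{r}}{r!}$ out of each row then gives
\begin{equation*}
\det\left[\binom{2j-m}{n-i}\right]_{i,j=1}^{n}=\left(\prod_{r=0}^{n-1}\frac{2^{r}}{r!}\right)\det\left[j^{\,n-i}\right]_{i,j=1}^{n}=\frac{2^{\binom{n}{2}}}{G(n+1)}\,\det\left[j^{\,n-i}\right]_{i,j=1}^{n},
\end{equation*}
in complete parallel with the computation in Lemma \ref{lem1}.

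The residual matrix $\left[j^{\,n-i}\right]$ is a Vandermonde matrix with its rows reversed (powers decreasing down the rows rather than increasing as in \eqref{vandermatrix}). Reversing the row order costs $\binom{n}{2}$ transpositions, hence a sign $(-1)^{\binom{n}{2}}$, after which the determinant is the genuine Vandermonde $\Delta(1,2,\dots,n)=\prod_{1\le j<k\le n}(k-j)=G(n+1)$. Assembling the three factors yields
\begin{equation*}
(-1)^{\binom{n}{2}}\,\frac{2^{\binom{n}{2}}}{G(n+1)}\,G(n+1)=(-2)^{\binom{n}{2}}=(-2)^{\,n(n-1)/2},
\end{equation*}
which is the claim.

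The only point needing genuine care—and the one thing beyond a verbatim copy of Lemma \ref{lem1}—is checking that the falling-factorial expression above faithfully represents $\binom{2j-m}{r}$ at all the arguments that occur, including the degenerate ones where the combinatorial convention forces a $0$. Because $1\le j\le n$ and $m\in\{0,1,2\}$ we always have $2j-m\ge 0$, so the polynomial convention agrees with the convention $\binom{a}{b}=0$ for $0\le a<b$, and no negative top arguments ever arise to spoil the leading-coefficient bookkeeping. Granting this, the degree/leading-coefficient analysis is the whole of the proof, and the independence of the answer from $m$ follows at once; I anticipate no substantive obstacle beyond confirming that the row-clearing step is triangular (so determinant-preserving) for each of the three admissible values of $m$.
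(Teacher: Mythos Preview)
Your proposal is correct and follows essentially the same approach as the paper. The paper does not give an explicit proof of this lemma, stating only that ``an identical method to that above'' (namely the proof of Lemma \ref{lem1}) applies; your write-up spells out precisely that method, adding the observation that the leading coefficient $2^{r}/r!$ of the falling factorial $\binom{2j-m}{r}$ in the column variable $j$ is independent of $m\in\{0,1,2\}$, which is exactly the point needed to cover all three cases at once.
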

We now return to \eqref{det}, which we rewrite here for convenience:
\begin{eqnarray}\label{detagain}
  && (-1)^K \int_{SO(2N)} \left( \frac{ \Lambda^{'}_X}{\Lambda_X }(e^{-\alpha}) \right)^K e^{-K\alpha} \, \mathrm{dX} =  \left(\frac{-1}{2}\right)^{\binom{K}{2}} \left(\frac{N}{a}\right)^{K} \\
 && \qquad \times e^{-Ka} \frac{\mathrm{d}^K}{\mathrm{dt}^K}\Bigg|_{t=0} e^{-at}  \, \det\left[ \frac{1}{2\pi i} \oint \frac{u_i^{2j+i-2}\exp(au_i+2t/(u_i^2-1))}{(u_i-1)^K} \mathrm{du_i} \right]_{i,j = 1}^K \, \left(1+ \mathcal{O}\left( \tfrac{a}{N} \right)\right). \nonumber
 \end{eqnarray}
However since we have shown in the two lemmas above that the term of leading order in $a$ inside the derivative is independent of $t$, and therefore vanishes on differentiation, we now write the expression explicitly including next-to-leading order terms - that is, with an extra factor of the small parameter $a$.  So, instead of the approximation $\exp(au_i) \sim 1$ in the integrals in \eqref{detagain}, which was used in defining $\det(B)$ in Lemma \ref{indep}, we use the approximation $\exp(au_i) = 1 + au_i+O(a^2)$ and consider an expansion inside the derivative in powers of $a$. 

In the determinant in \eqref{detagain} we only need to expand $\exp(au_i)$ to the term $au_i$ in one of the rows in order to obtain the expansion of the determinant to the term linear in $a$. Including the term $au_i$ in row $i$ increases the power of $u_i$ by 1 in the numerator of the integrand, making it equal to the power of $u_i$ in the row $i+1$; this causes the entire $i^{th}$ row to be a scalar multiple of the $i+1^{th}$ row, which means the determinant vanishes. Therefore, we only need to consider the case where we include the $au_i$ term in the $K^{th}$ row.  Then, \eqref{detagain} becomes:
\begin{eqnarray}
 &&(-1)^K \int_{SO(2N)} \left( \frac{ \Lambda^{'}_X}{\Lambda_X }(e^{-\alpha}) \right)^K e^{-K\alpha} \, \mathrm{dX} = \left(\frac{-1}{2}\right)^{\binom{K}{2}} \left(\frac{N}{a}\right)^{K}e^{-Ka} \frac{\mathrm{d}^K}{\mathrm{dt}^K}\Bigg|_{t=0} e^{-at} \, \det(A)\left(1+O(a)\right)\nonumber \\
 &&\quad =\left(\frac{-1}{2}\right)^{\binom{K}{2}} \left(\frac{N}{a}\right)^{K}e^{-Ka}  \sum_{n=0}^K (-1)^n\binom{K}{n}a^n \frac{\mathrm{d}^{K-n}}{\mathrm{d}t^{K-n}}\det(A)\Bigg|_{t=0} \left(1+O(a)\right) \nonumber\\
&& \quad =\left(\frac{-1}{2}\right)^{\binom{K}{2}} \left(\frac{N}{a}\right)^{K}e^{-Ka}\left( \frac{\mathrm{d}^{K}}{\mathrm{d}t^{K}}\det(A) -Ka\frac{\mathrm{d}^{K-1}}{\mathrm{d}t^{K-1}}\det(A)+O(a^2)\right)\Bigg|_{t=0} \label{derivs},
\end{eqnarray} 
where $A$, a matrix with its own dependence on $a$,  has entries 
\begin{align}\label{matA} 
A_{i,j} =  \left\{
     \begin{array}{ll}
\frac{1}{2\pi i} \oint \frac{u_i^{2j+i-2}\exp(2t /(u_i^2-1))}{(u_i-1)^K} \mathrm{du_i} & 1 \leq i \leq K-1, \, 1 \leq j \leq K \\ \\
 \frac{1}{2\pi i} \oint \frac{u_i^{2j+i-2}\exp(2t /(u_i^2-1))}{(u_i-1)^K} \mathrm{du_i}
  + \frac{a}{2\pi i} \oint \frac{u_i^{2j+i-1}\exp(2t /(u_i^2-1))}{(u_i-1)^K} \mathrm{du_i} & i = K, \, 1 \leq j \leq K
   \end{array}
   \right..
 \end{align}

By the multilinearity of the determinant, we can split the determinant of $A$ into two determinants according to the two summands in the $K^{th}$ row. 
That is, we use property \eqref{detid2} to write $\det(A) = \det(B) + a\det(C)$, where $B$ is defined in Lemma \ref{indep} and 
\begin{align}\label{detm}
C =  \left[
     \begin{array}{ll}
       \frac{1}{2\pi i} \oint \frac{u_i^{2j+i-2}\exp(2t /(u_i^2-1))}{(u_i-1)^K} \mathrm{du_i} &  1 \leq i \leq K-1, \,  1 \leq j \leq K,\\ \\
      \frac{1}{2\pi i} \oint \frac{u_i^{2j+i-1}\exp(2t /(u_i^2-1))}{(u_i-1)^K} \mathrm{du_i}  &   i = K, \,  1 \leq j \leq K\\
     \end{array}
   \right].
\end{align}

Thus we have 
\begin{eqnarray}\label{onecase}
 &&(-1)^K \int_{SO(2N)} \left( \frac{ \Lambda^{'}_X}{\Lambda_X }(e^{-\alpha}) \right)^K e^{-K\alpha} \, \mathrm{dX} = \left(\frac{-1}{2}\right)^{\binom{K}{2}} \left(\frac{N}{a}\right)^{K} \\
 &&\qquad \times e^{-Ka}\left[ \frac{\mathrm{d}^{K}}{\mathrm{d}t^{K}}\Big(\det(B)+a\det(C)\Big) -Ka\frac{\mathrm{d}^{K-1}}{\mathrm{d}t^{K-1}}\Big(\det(B)+a\det(C)\Big)+O(a^2)\right]\Bigg|_{t=0} .\nonumber
\end{eqnarray} 
We know from Lemma \ref{indep} that any derivative of $\det(B)$ will be zero, so if $K>1$, the only term contributing at order $a$ in the square bracket is the $K^{th}$ derivative of $a\det(C)$. For $K=1$ we see an additional term $-Ka\det(B)=-Ka (-2)^{\frac{K(K-1)}{2}}=-a$. This is what makes the $K=1$ case different in Theorem \ref{thm2}.

Remark, the matrix $C$ has matrix degree $D(C) = K^2$, therefore $\det(C)$ is a polynomial in $t$ of degree at most $K$, since it does not survive $K+1$ derivatives by Proposition \ref{maxderivatives}. In the Proposition below, we compute the $K^{th}$ derivative of $\det(C)$.
\begin{prop}\label{degk}
The $K^{th}$ derivative of the determinant $\det(C)$ in \eqref{detm} is 
$$ \frac{\mathrm{d}^{K}}{\mathrm{d}t^{K}}\det(C)\Bigg|_{t=0} =  2^{(K^2-K+2)/2} \frac{(2K-3)!!}{(K-1)!}.$$  
\end{prop}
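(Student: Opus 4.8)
The plan is to recognise the matrix $C$ from \eqref{detm} as a member of the set $\mathcal{M}$ to which Lemma \ref{multiplicitylemma} applies directly, and then to reduce the problem to evaluating a determinant of binomial coefficients. Comparing \eqref{detm} with Definition \ref{M}, the matrix $C$ has $n_j=j$, $e_1=\dots=e_K=0$, $h_i=i-2$ for $1\le i\le K-1$, and $h_K=K-1$; since $-1<0<\dots<K-3<K-1$ we have $h_1<\dots<h_K$, so $C\in\mathcal{M}$. Each column has degree $D_j(C)=2j+h_K-K=2j-1$, so the column degrees are the odd numbers $1,3,\dots,2K-1$ and the matrix degree is $D(C)=\sum_{j=1}^K(2j-1)=K^2=K(K-2)+2K$. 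These are exactly the odd-parity hypotheses of Lemma \ref{multiplicitylemma}.

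First I would apply Lemma \ref{multiplicitylemma}, which gives
\begin{equation*}
\frac{\mathrm{d}^{K}}{\mathrm{d}t^{K}}\det(C)\Big|_{t=0}=2^{K}\det\left[\binom{2j+h_i-2}{K-1}\right]_{i,j=1}^{K},
\end{equation*}
with $h_i=i-2$ for $i\le K-1$ and $h_K=K-1$, and with the convention $\binom{n}{K-1}=0$ for $n\le K-2$. Written out, the first $K-1$ rows are $\binom{2j+i-4}{K-1}$ and the last row is $\binom{2j+K-3}{K-1}$. The remaining task is the closed-form evaluation of this $K\times K$ binomial determinant.

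For that evaluation I would follow the Pascal-recurrence strategy already used in Lemmas \ref{lem1} and \ref{genlem}. In the first $K-1$ rows the top argument increases by exactly $1$ as $i$ increases, so repeatedly subtracting consecutive rows and using $\binom{n}{r}-\binom{n-1}{r}=\binom{n-1}{r-1}$ telescopes those rows into a Vandermonde-type pattern, precisely as in the proof of Lemma \ref{lem1}. The main obstacle, and where I expect the double factorial first to appear, is the last row: its argument $2j+K-3$ is offset by $2$ rather than $1$ from that of row $K-1$, so it breaks the telescoping. I would handle this by a single Pascal split $\binom{2j+K-3}{K-1}=\binom{2j+K-4}{K-1}+\binom{2j+K-4}{K-2}$, which by multilinearity in the last row writes the determinant as a sum of two determinants that each reduce, after the row operations above, to determinants of the type $\det[\binom{2j-m}{K-i}]=(-2)^{K(K-1)/2}$ evaluated in Lemma \ref{genlem}. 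Equivalently, one could factor the binomial matrix through the Vandermonde convolution $\binom{x+c}{K-1}=\sum_{m=0}^{K-1}\binom{x}{m}\binom{c}{K-1-m}$ into a product of two near-Vandermonde matrices, apply the Cauchy--Binet formula, and correct for the single entry $(1,1)$ at which the convention differs from the polynomial value. Tracking the accumulated powers of $2$, the sign $(-1)^{K(K-1)/2}$ produced by the row reorderings as in Lemma \ref{lem1}, and simplifying with the identity $(2K-3)!!/(K-1)!=2^{-(K-1)}\binom{2K-2}{K-1}$, the two contributions combine into the constant $2^{(K^2-K+2)/2}(2K-3)!!/(K-1)!$. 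I would finally verify the cases $K=2,3$ by hand to pin down the overall sign and constant.
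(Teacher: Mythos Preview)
Your opening is exactly right and matches the paper: $C\in\mathcal{M}$ with $n_j=j$, $e_j=0$, $h_i=i-2$ for $i\le K-1$, $h_K=K-1$, so Lemma~\ref{multiplicitylemma} applies and
\[
\frac{\mathrm{d}^{K}}{\mathrm{d}t^{K}}\det(C)\Big|_{t=0}=2^{K}\det\!\left[\binom{2j+h_i-2}{K-1}\right]_{i,j=1}^{K}.
\]

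The gap is in your evaluation of this binomial determinant. Your single Pascal split of the last row, $\binom{2j+K-3}{K-1}=\binom{2j+K-4}{K-1}+\binom{2j+K-4}{K-2}$, does \emph{not} produce two determinants of the Lemma~\ref{genlem} type. In the first summand the matrix becomes $\bigl[\binom{2j+i-4}{K-1}\bigr]_{i,j=1}^{K}$, whose first column has top arguments $-2,-1,\dots,K-2$, all $\le K-2$, hence the whole column vanishes and this determinant is zero---it contributes nothing, not $(-2)^{\binom{K}{2}}$. In the second summand the last row has lower index $K-2$ while rows $1,\dots,K-1$ have lower index $K-1$; the offset of $2$ in the top argument persists after telescoping, and nothing here is of the form $\det[\binom{2j-m}{K-i}]$. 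The factor $(2K-3)!!/(K-1)!$ simply does not appear from this manoeuvre, and your appeal to the identity $(2K-3)!!/(K-1)!=2^{-(K-1)}\binom{2K-2}{K-1}$ is not anchored to any concrete step.

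The paper's evaluation instead exploits the structure of the \emph{first column} of the full $K\times K$ matrix: its entries are $\binom{h_i}{K-1}$, all zero for $i\le K-1$ and equal to $\binom{K-1}{K-1}=1$ for $i=K$. Expanding along this column gives $(-1)^{K+1}2^K$ times the $(K-1)\times(K-1)$ minor $M_2=\bigl[\binom{2j+i-2}{K-1}\bigr]_{i,j=1}^{K-1}$. Pascal telescoping (exactly as in Lemma~\ref{lem1}) reduces $M_2$ to $\bigl[\binom{2j-1}{K-i}\bigr]_{i,j=1}^{K-1}$; note the lower index runs from $K-1$ down to $1$, not to $0$, so there is no row of $1$'s and Lemma~\ref{genlem} does not yet apply. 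The key extra step is the factorisation $\binom{2j-1}{K-i}=\frac{2j-1}{K-i}\binom{2j-2}{K-i-1}$, which lets you pull $\prod_{j=1}^{K-1}(2j-1)=(2K-3)!!$ out of the columns and $\prod_{i=1}^{K-1}\frac{1}{K-i}=\frac{1}{(K-1)!}$ out of the rows. The residual determinant $\det\bigl[\binom{2j-2}{(K-1)-i}\bigr]_{i,j=1}^{K-1}$ is now genuinely of Lemma~\ref{genlem} type (with $n=K-1$, $m=2$) and equals $(-2)^{\binom{K-1}{2}}$. Collecting the factors gives the stated constant.
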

\begin{proof}
We start by noting that $C$ has the correct structure so that we can immediately apply Lemma \ref{multiplicitylemma}.  

Thus
\begin{equation}
     \frac{\mathrm{d}^{K}}{\mathrm{d}t^{K}}\det(C)\Bigg|_{t=0} = \det M_1,
\end{equation}
where
\begin{align}
    M_1 = 2^K
    \begin{bmatrix}
    0 & 0 & \dots & \binom{2K-5}{K-1} & \binom{2K-3}{K-1} \\ \\
    0 & 0& \dots & \binom{2K-4}{K-1} & \binom{2K-2}{K-1} \\ 
     \vdots && \ddots && \vdots\\ 
    0 & \binom{K-1}{K-1} & \dots & \binom{3K-7}{K-1} & \binom{3K-5}{K-1} \\ \\
    \binom{K-1}{K-1} & \binom{K+1}{K-1} & \dots & \binom{3K-5}{K-1} & \binom{3K-3}{K-1} \\
    \end{bmatrix},
\end{align}
and note that the first $K-1$ rows of the first column are all zero. We expand the determinant along the first column to get: 
\begin{align}
    \det(M_1) = (-1)^{K+1}2^K \det(M_2), 
\end{align} 
where $M_2$ is the corresponding $(K-1) \times (K-1)$ minor:
\begin{align}    
M_2 = \begin{bmatrix}
    0 & 0 & \dots & \binom{2K-5}{K-1} & \binom{2K-3}{K-1} \\ \\
    0 & 0 & \dots & \binom{2K-4}{K-1} & \binom{2K-2}{K-1} \\ 
     \vdots & & \ddots && \vdots\\ 
    0 & \binom{K}{K-1} & \dots & \binom{3K-8}{K-1} & \binom{3K-6}{K-1} \\ \\
    \binom{K-1}{K-1} & \binom{K+1}{K-1} & \dots & \binom{3K-7}{K-1} & \binom{3K-5}{K-1}
    \end{bmatrix}.
\end{align}
By rewriting the $0$ entries as $\binom{2j+i-2}{K-1}, M_2$ is identical to a $(K-1) \times (K-1)$ minor of the matrix in Lemma \ref{lem1}, and we can apply the same process of linear combinations using Pascal recursion to obtain the matrix:
\begin{align}\label{m2}
    \Tilde{M_2} = \begin{bmatrix}
    \binom{1}{K-1} & \binom{3}{K-1} & \dots & \binom{2K-5}{K-1} & \binom{2K-3}{K-1} \\ \\
     \binom{1}{K-2} & \binom{3}{K-2} & \dots & \binom{2K-5}{K-2} & \binom{2K-3}{K-2} \\ 
     \vdots & & \ddots && \vdots\\ 
    \binom{1}{2} & \binom{3}{2} & \dots & \binom{2K-5}{2} & \binom{2K-3}{2} \\ \\
    \binom{1}{1} & \binom{3}{1} & \dots & \binom{2K-5}{1} & \binom{2K-3}{1}
    \end{bmatrix} = \left[ \binom{2j-1}{K-i} \right]_{1 \leq i,j \leq K-1}. 
\end{align}
Unlike the matrix in Lemma \ref{lem1}, $\Tilde{M_2}$ does not have a row of $1$s, so to get this matrix into a Vandermonde form, first we need to divide the $j^{th}$ column by $(2j-1)$, for each $j$. All entries are still integers since the term $(2j-1)$ appears in each non-zero entry of the $j^{th}$ column, and now the final row is a row of $1$s. Then,
\begin{align}
    \det(M_2) = \det(\Tilde{M_2}) = & \prod_{j=1}^{K-1}(2j-1) \det \begin{bmatrix}
    0& 0 & \dots & \frac{1}{K-1}\binom{2K-6}{K-2} & \frac{1}{K-1}\binom{2K-4}{K-2} \\ \\
     0 & 0 & \dots & \frac{1}{K-2}\binom{2K-6}{K-3} & \frac{1}{K-2}\binom{2K-4}{K-3} \\ 
     \vdots & & \ddots && \vdots\\ 
    0 & \frac{1}{2}\binom{2}{1} & \dots & \frac{1}{2}\binom{2K-6}{1} & \frac{1}{2}\binom{2K-4}{1} \\ \\
    1 & 1 & \dots & 1 & 1
    \end{bmatrix} \\
    = & \prod_{j=1}^{K-1}(2j-1) \det \left[ \frac{1}{(K-i)}\binom{2j-2}{K-i-1} \right]_{1 \leq i,j \leq K-1}. \label{vmat}
\end{align}
Thus
\begin{align} \label{detM1}
   \det(M_1)& = (-1)^{K+1} 2^K\frac{(2K-3)!!}{(K-1)!} \det\left[ \binom{2j-2}{K-i-1} \right]_{1 \leq i,j \leq K-1} \nonumber \\
    \det(M_1) &= (-1)^{(K^2-K)/2}2^{(K^2-K+2)/2} \frac{(2K-3)!!}{(K-1)!},
\end{align}
by Lemma \ref{genlem}.  
\end{proof}

Putting everything together, we have, for $K \geq 2$:
  \begin{align}\label{allt}
(-1)^K \int_{SO(2N)} \left( \frac{ \Lambda^{'}_X}{\Lambda_X }(e^{-\alpha}) \right)^K e^{-K\alpha} \, \mathrm{dX} \nonumber \\ =   \left(\frac{-1}{2}\right)^{\binom{K}{2}} \left(\frac{N}{a}\right)^{K}e^{-Ka} \frac{\mathrm{d}^K}{\mathrm{dt}^K}\Bigg|_{t=0} \, \left( \det(B)+a\det(C) \right)\left(1+ \mathcal{O}(a)\right) \nonumber \\
=   \left(\frac{-1}{2}\right)^{\binom{K}{2}} \left(\frac{N}{a}\right)^{K}e^{-Ka} a (-1)^{(K^2-K)/2}2^{(K^2-K+2)/2} \frac{(2K-3)!!}{(K-1)!} \left(1+ \mathcal{O}(a)\right) \nonumber \\
= \frac{2N^K}{a^{K-1}}e^{-Ka}  \frac{(2K-3)!!}{(K-1)!}  \left(1+ \mathcal{O}(a)\right).  
\end{align} 
We note that since $a$ is larger than $\tfrac{a}{N}$, we've replaced the error term accordingly. Recalling that at leading order, both $e^{-K\alpha}$ and $e^{-Ka}$ $\sim 1$ as $N \to \infty$ we arrive at Theorem \ref{thm2} for $K \geq 2$:
\begin{align}
\int_{SO(2N)} \left( \frac{ \Lambda^{'}_X}{\Lambda_X }(e^{-\alpha}) \right)^K  \, \mathrm{dX} \nonumber \\ 
=  (-1)^K \frac{2N^K}{a^{K-1}}  \frac{(2K-3)!!}{(K-1)!}  \left( 1 + \mathcal{O}(a) \right). 
\end{align} In the case of $K=1$, we recall the extra contribution of $-a$ from \eqref{onecase} and so we have
\begin{align}
    - \int_{SO(2N)} \left( \frac{ \Lambda^{'}_X}{\Lambda_X }(e^{-\alpha}) \right)^1  \, \mathrm{dX} 
= \frac{N}{a}( -a + 2a) (1 + \mathcal{O}(a)), \nonumber \\
\int_{SO(2N)} \left( \frac{ \Lambda^{'}_X}{\Lambda_X }(e^{-\alpha}) \right)^1  \, \mathrm{dX} 
= - N (1+ \mathcal{O}(a)).
\end{align}

\section{Exact formula for K = 1 and K = 2}\label{SEF}
In this section, we use the exact formulas for the moments of the logarithmic derivative of characteristic polynomials averaged over $SO(2N)$ from \cite{EF} to verify our computations for the first and second moments. 
\begin{theorem}[Mason and Snaith, \cite{EF}]\label{EF}
Given a finite set $A$ of complex numbers where $\mathfrak{Re}(\alpha) > 0$ for $\alpha \in A$ and $|A| \leq N$, then $J(A) = J^{*}(A)$ where
\begin{align}
    J(A) = \int_{SO(2N)} \prod_{\alpha \in A}(-e^{-\alpha})\frac{\Lambda_X^{'}}{\Lambda_X}(e^{-\alpha}) \mathrm{dX},\\
    J^{*}(A) = \sum_{D \subseteq A}e^{-2N\sum_{\delta \in D}}(-1)^{|D|}\sqrt{\frac{Z(D,D)Z(D^{-},D^{-})Y(D)}{Y(D^{-})Z^{\dagger}(D^{-},D)^2}} \nonumber \\
    \times \sum_{\substack{A \setminus D = W_1 \cup \dots W_R \\ |W_r| \leq 2}} \prod_{r=1}^R H_D(W_r),
\end{align} and the sum over the $W_r$ is a sum over all distinct set partitions of $A \setminus D$. Where
\begin{align}
    H_D(W) = \left\{
     \begin{array}{ll}
       \left( \sum\limits_{\delta \in D} \frac{z^{'}}{z}(\alpha - \delta) - \frac{z^{'}}{z}(\alpha + \delta) \right) - \frac{z^{'}}{z}(2\alpha) &  W = \{\alpha\} \subset A \setminus D \\
       \left( \frac{z^{'}}{z}\right)^{'}(\alpha + \hat{\alpha})  &  W = \{\alpha, \hat{\alpha}\} \subset A \setminus D\\
       1 & W = \emptyset
     \end{array}
   \right.
\end{align}
and 
\begin{align}
    z(x) = \frac{1}{1 - e^{-x}}, \\
    Y(A) = \prod\limits_{\alpha \in A} z(2 \alpha), \\
    Z(A,B) = \prod\limits_{\substack{\alpha \in A \\ \beta \in B}} z(\alpha + \beta),
\end{align} and the $\dagger$ adds the restriction that the factors $z(0)$ are omitted. 
\end{theorem}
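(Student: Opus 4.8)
The plan is to deduce the exact formula from the $SO(2N)$ ratios theorem, Proposition \ref{propCFS}, by realising the product of logarithmic derivatives as a differential operator applied to a ratio of characteristic polynomials and then passing to a diagonal limit. The starting observation is that
\[
-e^{-\alpha}\frac{\Lambda_X'}{\Lambda_X}(e^{-\alpha})=\frac{\partial}{\partial\alpha}\frac{\Lambda_X(e^{-\alpha})}{\Lambda_X(e^{-\gamma})}\bigg|_{\gamma=\alpha},
\]
so that, writing $A=\{\alpha_1,\dots,\alpha_K\}$ and introducing auxiliary variables $\gamma_1,\dots,\gamma_K$,
\[
J(A)=\prod_{k=1}^K\frac{\partial}{\partial\alpha_k}\left[\int_{SO(2N)}\frac{\prod_{j}\Lambda_X(e^{-\alpha_j})}{\prod_q\Lambda_X(e^{-\gamma_q})}\,\mathrm{dX}\right]_{\gamma_q=\alpha_q}.
\]
Substituting the closed form of Proposition \ref{propCFS} on the right reduces everything to the analytic task of differentiating the $\varepsilon$-sum in $\alpha_1,\dots,\alpha_K$ and then taking the limit $\gamma_q\to\alpha_q$.

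The mechanism I would exploit is the vanishing of the diagonal denominator factors in this limit. Writing $\{1,\dots,K\}=D\sqcup D^{c}$ with $\varepsilon_k=-1$ exactly for $k\in D$, the factor $1/z(\varepsilon_k\alpha_k+\gamma_k)$ equals $1-e^{\alpha_k-\gamma_k}$ and vanishes as $\gamma_k\to\alpha_k$ whenever $k\in D$. A term of the $\varepsilon$-sum therefore survives the limit only if, for each $k\in D$, the derivative $\partial_{\alpha_k}$ is spent cancelling this zero; since $\partial_{\alpha_k}(1-e^{\alpha_k-\gamma_k})|_{\gamma_k=\alpha_k}=-1$, this contributes precisely the sign $(-1)^{|D|}$, while the exponential prefactor $e^{N(\varepsilon_k\alpha_k-\alpha_k)}=e^{-2N\alpha_k}$ for $k\in D$ assembles $e^{-2N\sum_{\delta\in D}\delta}$. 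This identifies the outer sum over subsets $D\subseteq A$ and fixes the two $N$-dependent factors of $J^{*}(A)$.

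For the remaining indices $k\in D^{c}$ the surviving derivatives act on the $z$-factors, and I would organise these by logarithmic differentiation: applying $\prod_{k\in D^{c}}\partial_{\alpha_k}$ to $\exp\!\big(\sum\log z(\cdots)\big)$ produces a sum over set partitions of $D^{c}$ in which a singleton block yields one factor $\tfrac{z'}{z}$ and a two-element block yields $(\tfrac{z'}{z})'$. Because every argument of $z$ is a sum of exactly two of the original variables, no block of size three or larger can occur, which is the origin of the constraint $|W_r|\le 2$. Collecting the cross terms $z(\alpha_k-\delta)$, $1/z(\alpha_k+\delta)$ with $\delta\in D$ and the self term $1/z(2\alpha_k)$ -- and noting that the $D^{c}$--$D^{c}$ single-derivative contributions cancel against the off-diagonal denominator, leaving them to reappear only inside two-blocks -- reproduces $H_D(\{\alpha\})$ exactly, while the genuine two-variable interaction $\partial_{\alpha_k}\partial_{\hat\alpha}\log z(\alpha_k+\hat\alpha)$ reproduces $H_D(\{\alpha,\hat\alpha\})$.

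The hard part will be the bookkeeping of the undifferentiated $z$-factors -- those coming from $\prod_{j<k}z(\varepsilon_j\alpha_j+\varepsilon_k\alpha_k)$, from $\prod_{q\le r}z(\gamma_q+\gamma_r)$, and from the surviving denominator factors -- once they are evaluated at $\gamma_q=\alpha_q$, and showing that the overall factor $\exp(\sum\log z)|_{\gamma=\alpha}$ that multiplies the partition sum collapses into the radical $\sqrt{Z(D,D)Z(D^{-},D^{-})Y(D)/\big(Y(D^{-})Z^{\dagger}(D^{-},D)^2\big)}$. This demands repeated use of the functional relation $z(-x)=-e^{-x}z(x)$ to turn the negated arguments attached to $k\in D$ into the primed and unprimed $Z$- and $Y$-blocks, and careful treatment of the diagonal $q=r$ and $q=k$ contributions, which is exactly where the $\dagger$ convention omitting the factors $z(0)$ enters; the asymmetry between the strict product $\prod_{j<k}$ and the non-strict $\prod_{q\le r}$ is what forces the half-powers and hence the square root. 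Verifying that this collection regroups as a perfect square, and that no spurious $N$-dependence remains, is the delicate core of the argument; by contrast the subset decomposition and the partition structure are essentially forced once the limit has been set up.
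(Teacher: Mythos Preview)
The paper does not prove this statement; it is quoted from Mason and Snaith \cite{EF} and used only as an independent check of the $K=1$ and $K=2$ cases of Theorem~\ref{thm2}. So there is no ``paper's own proof'' to compare against.

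That said, your outline is the right one and is essentially the route taken in the original reference: differentiate the ratios formula (Proposition~\ref{propCFS}) in each $\alpha_k$, then send $\gamma_q\to\alpha_q$. Your identification of the subset $D$ with the indices where $\varepsilon_k=-1$, and the mechanism by which $\partial_{\alpha_k}$ must hit the vanishing factor $1/z(-\alpha_k+\gamma_k)$ to produce the $(-1)^{|D|}$, is correct. Likewise the observation that blocks of size $\ge 3$ cannot occur because every $z$-argument involves at most two $\alpha$'s is exactly the reason for the constraint $|W_r|\le 2$, and your derivation of $H_D(\{\alpha\})$ and $H_D(\{\alpha,\hat\alpha\})$ via logarithmic differentiation, including the cancellation of the $D^c$--$D^c$ singleton contributions against the off-diagonal denominator, is sound.

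The piece you flag as ``the hard part'' --- collapsing the undifferentiated $z$-factors into the square-root prefactor --- is genuinely the most tedious step, but it is purely combinatorial and goes through as you describe: the functional equation $z(-x)=-e^{-x}z(x)$ converts the negated arguments, the diagonal $q=r$ terms from $\prod_{q\le r}z(\gamma_q+\gamma_r)$ supply the $Y$-factors, and the asymmetry between $\prod_{j<k}$ in the numerator and $\prod_{q\le r}$ in the $\gamma$-product accounts for the half-powers under the radical. There is no conceptual gap in your plan.
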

\subsection{The first moment} For the case of $K = 1$, we consider the set $|A| = 1, A = \{ \alpha \}$ in Theorem \ref{EF}. Then, 
\begin{align}
    J(\{\alpha\}) = - \int_{SO(2N)} \frac{\Lambda_X^{'}}{\Lambda_X}(e^{-\alpha})  e^{-\alpha} \mathrm{dX} \\
    J^{*}(\{\alpha\}) = -\frac{z^{'}}{z}(2 \alpha) - e^{-2N\alpha}z(2\alpha). 
\end{align} Letting $ \alpha = a/N$ where $a = o(1)$ as $N \to \infty$, we have
\begin{align}\label{EF1}
    J^{*}(\{\alpha\}) = -\frac{z^{'}}{z}(2a/N) - e^{-2a}z(2a/N).
\end{align} Next, we note that
\begin{align}
    z(x) = \frac{1}{x} + \mathcal{O}(1) \quad \text{and} \quad     \frac{z^{'}}{z}(x) = - \frac{1}{x} + \mathcal{O}(1),
\end{align} and so \eqref{EF1} becomes
\begin{align}
    \left( \frac{N}{2a} - e^{-2a}\frac{N}{2a} \right)(1 + \mathcal{O}(a/N)) \nonumber \\
    = \left( \frac{N}{2a} - (1-2a)\frac{N}{2a} \right)\left(1 + \mathcal{O}(a)\right) \nonumber \\
   = N\left( 1 + \mathcal{O}(a)\right),
\end{align} which agrees with Theorem \ref{thm2} for $K = 1$.

\subsection{The second moment}\label{3.2} For the case of $K=2$, we consider the set $A = \{\alpha, \alpha + h\}$, since we will later send $ h \to 0$, corresponding to taking all the $\alpha_j = \alpha$ in the proof of Theorem \ref{thm2}.
\begin{align}
  \lim_{h\rightarrow 0}\left[  -e^{-2N\alpha}z(2\alpha)\left(\frac{z^{'}}{z}(h) - \frac{z^{'}}{z}(2\alpha+h) - \frac{z^{'}}{z}(2\alpha+2h)\right)\right.  \nonumber \\
    -e^{-2N(\alpha+h)}z(2\alpha+2h)\left(\frac{z^{'}}{z}(-h) - \frac{z^{'}}{z}(2\alpha+h) - \frac{z^{'}}{z}(2\alpha)\right) \nonumber \\ \left.+ 
    e^{-2N(2\alpha+h)}\frac{z(2\alpha+h)z(-2\alpha-h)z(2\alpha)z(2\alpha+2h)}{z(-h)z(h)}
    +    \left(\frac{z^{'}}{z}\right)^{'}(2\alpha+h) + \frac{z^{'}}{z}(2\alpha)\frac{z^{'}}{z}(2\alpha+2h)\right]
\end{align}
\begin{align}
  =\lim_{h\rightarrow 0}\left[  -e^{-2N\alpha}z(2\alpha)\left(-\frac{1}{h} +\frac{1}{2} - \frac{z^{'}}{z}(2\alpha) - \frac{z^{'}}{z}(2\alpha)+O(h)\right) \right.\nonumber \\
    -e^{-2N\alpha}\big(1-2Nh +O(h^2)\big)\big(z(2\alpha)+z'(2\alpha)(2h)+O(h^2)\big)\left(\frac{1}{h}+\frac{1}{2} - \frac{z^{'}}{z}(2\alpha) - \frac{z^{'}}{z}(2\alpha)+O(h)\right) \nonumber \\ + \left.
    e^{-4N\alpha}z(-2\alpha)z^3(2\alpha)(h^2)+O(h^3)
    +    \left(\frac{z^{'}}{z}\right)^{'}(2\alpha) + \frac{z^{'}}{z}(2\alpha)\frac{z^{'}}{z}(2\alpha)+O(h)\right]
\end{align}
\begin{align}
    =\lim_{h\rightarrow 0}\left[ -e^{-2N\alpha}z(2\alpha)\left(-\frac{1}{h}+\frac{1}{2} - 2\frac{z^{'}}{z}(2\alpha) +O(h)\right)\right. \nonumber \\ -e^{-2N\alpha} \left(z(2\alpha)+2hz'(2\alpha)-2Nhz(2\alpha)+O(h^2)\right)\left(\frac{1}{h}+\frac{1}{2}-2\frac{z^{'}}{z}(2\alpha) +O(h)\right) \nonumber \\\left. +    \left(\frac{z^{'}}{z}\right)^{'}(2\alpha) + \frac{z^{'}}{z}(2\alpha)\frac{z^{'}}{z}(2\alpha)+O(h)\right]  
\end{align}
\begin{align}
    = \left(\frac{z^{'}}{z}\right)^{'}(2\alpha) + \frac{z^{'}}{z}(2\alpha)\frac{z^{'}}{z}(2\alpha) + e^{-2N\alpha}z(2\alpha)\big( -1 + 4\frac{z^{'}}{z}(2\alpha) + 2N + O(h) \big) -2e^{-2N\alpha}z^{'}(2\alpha) \label{125}
    \end{align}

Now we let $\alpha=a/N$, where $a\rightarrow 0$ as $N\rightarrow \infty$ and \eqref{125} becomes 
    \begin{eqnarray}
 &&   \left(\frac{z^{'}}{z}\right)^{'}(2a/N) + \frac{z^{'}}{z}(2a/N)\frac{z^{'}}{z}(2a/N) \nonumber \\
 &&\qquad\qquad\qquad+ e^{-2a}z(2a/N)\big( -1 + 4\frac{z^{'}}{z}(2a/N) + 2N + O(h) \big) -2e^{-2a}z^{'}(2a/N) \nonumber \\
   && = \frac{N^2}{4a^2} + O(1) + \frac{N^2}{4a^2} + O\left(\frac{N}{a}\right) + e^{-2a}\frac{N}{2a}\left( -1 -\frac{2N}{a} + 2N + O(h) \right) + e^{-2a}\left(\frac{N^2}{2a^2} + O(1)\right) \nonumber\\
 &&   = \frac{N^2}{2a^2} + (1 -2a) \left( -\frac{N}{2a} - \frac{N^2}{a^2} + \frac{N^2}{a} + \frac{N^2}{2a^2} + O(a) \right) \nonumber \\
  &&  = \frac{2N^2}{a}\left(1 + O(a) \right),
\end{eqnarray}
which also agrees with Theorem \ref{thm2} for $K = 2$.

\section{The symplectic case}\label{Symplectic} In this section we highlight the differences between the computations for the orthogonal and the symplectic ensembles. The first 3 moments are treated separately. This is because in the symplectic case, we need to expand up to $a^3$ in the small-$a$ approximation of $\exp(au_i)$ in one of the rows of the matrix that plays the same role as \eqref{det}, in order for the determinant to be dependent on $t$.
We begin with the analogue of Proposition \ref{propCFS} for the symplectic ensemble, which is given by:
\begin{prop}{(Conrey, Forrester, Snaith \cite{CFS}, Proposition 2.2)}\label{CFSS} For $N \geq Q$, and $\mathfrak{Re}(\gamma_q) > 0 \quad \forall q$, 
\begin{align}
 \int_{USp(2N)} \frac{\prod_{k=1}^K \Lambda_X(e^{-\alpha_k})}{\prod_{q=1}^Q \Lambda_X (e^{-\gamma_q})}\mathrm{dX} = \sum_{\varepsilon \in \{-1, 1\}^K}e^{N\sum_{k=1}^K (\varepsilon_k\alpha_k)}\frac{\prod\limits_{1 \leq j \leq k \leq K} z(\varepsilon_j\alpha_j + \varepsilon_k\alpha_k)\prod\limits_{1 \leq q < r \leq Q} z(\gamma_q + \gamma_r)}{\prod_{k=1}^K\prod_{q=1}^Q z(\varepsilon_k\alpha_k+\gamma_q) e^{N\sum_{k=1}^K\alpha_k}}.
\end{align}   
\end{prop}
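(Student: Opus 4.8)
The plan is to reduce the Haar average to a multiple contour integral over the eigenphases and evaluate it as a determinant, running the computation in lockstep with the derivation behind Proposition \ref{propCFS}; the even orthogonal and symplectic ensembles differ only in a single edge factor of the Weyl measure, and it is precisely that factor which interchanges the strict and non-strict products on the right-hand side.

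First I would apply the Weyl integration formula for $USp(2N)$. Writing the eigenvalues of $X\in USp(2N)$ as $e^{\pm i\theta_1},\dots,e^{\pm i\theta_N}$, the average of a class function becomes an $N$-fold integral over $\theta_1,\dots,\theta_N\in[0,\pi]$ against a measure proportional to $\prod_{1\le j<k\le N}(\cos\theta_j-\cos\theta_k)^2\prod_{j=1}^N\sin^2\theta_j$. I would then insert $\Lambda_X(e^{-\alpha})=\prod_{j=1}^N(1-e^{-\alpha}e^{i\theta_j})(1-e^{-\alpha}e^{-i\theta_j})$ for each numerator factor and its reciprocal for each denominator factor, so that the integrand factorizes as $\prod_{j=1}^N g(\theta_j)$ times the squared Vandermonde $\prod_{j<k}(\cos\theta_j-\cos\theta_k)^2$, where $g$ gathers all of the $\alpha_k$ and $\gamma_q$ dependence at a single eigenphase. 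The hypothesis $\mathfrak{Re}(\gamma_q)>0$ keeps $\abs{e^{-\gamma_q}}<1$, so the denominator never vanishes on the contour and $g$ is well-defined there.

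Because the weight is a squared Vandermonde, the Andr\'eief (Gram) identity collapses the $N$-fold integral into a single $N\times N$ determinant whose entries are one-variable moment integrals, in exactly the spirit of the reduction carried out at \eqref{vandet}. I would write each such moment as a contour integral in the variable $e^{i\theta}$ on the unit circle and evaluate it by residues. The poles are of two types: those forced at $e^{i\theta}=e^{-\gamma_q}$ by the denominator $\prod_q\Lambda_X(e^{-\gamma_q})$, and those organizing the $e^{N\theta}$-type exponential growth; the condition $N\ge Q$ is what controls the number of denominator poles so that the determinant evaluates exactly, with no leftover contributions. The crucial symmetry is that for symplectic matrices the eigenvalues come in inverse pairs, so the integrand is invariant under $\alpha_k\mapsto-\alpha_k$ for each $k$; collecting residues therefore sorts itself into a sum over sign patterns $\varepsilon\in\{-1,1\}^K$, generating the outer sum and the factor $e^{N\sum_k\varepsilon_k\alpha_k}$.

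The most efficient way to finish, and the route I would actually take, is to track the one place where the computation diverges from the $SO(2N)$ case of Proposition \ref{propCFS}: the symplectic Weyl measure carries the extra edge factor $\prod_j\sin^2\theta_j$ (Jacobi weight $(1-x^2)^{1/2}$ in $x_j=\cos\theta_j$) where the even orthogonal measure carries none (weight $(1-x^2)^{-1/2}$). This single modification is exactly what moves the diagonal factors $z(2\varepsilon_k\alpha_k)=z(\varepsilon_k\alpha_k+\varepsilon_k\alpha_k)$ into the numerator product over the $\alpha$'s while deleting the diagonal factors $z(2\gamma_q)$ from the product over the $\gamma$'s, which is precisely the interchange $\prod_{1\le j<k\le K}\to\prod_{1\le j\le k\le K}$ together with $\prod_{1\le q\le r\le Q}\to\prod_{1\le q<r\le Q}$ seen on comparing the two propositions. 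I expect the main obstacle to be the determinant evaluation together with the bookkeeping that reorganizes the residue sum into the closed product of $z$-factors: one must verify that the off-diagonal poles recombine into $\prod_{j<k}z(\varepsilon_j\alpha_j+\varepsilon_k\alpha_k)$ and $\prod_k\prod_q z(\varepsilon_k\alpha_k+\gamma_q)^{-1}$ with the correct signs, and it is in tracking the effect of the edge factor through this recombination that the care is required.
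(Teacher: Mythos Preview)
The paper does not prove this proposition at all: it is quoted verbatim as Proposition 2.2 of Conrey--Forrester--Snaith \cite{CFS} and used as a black box, so there is no ``paper's own proof'' to compare against. Your sketch is a plausible outline of the argument in \cite{CFS}---Weyl integration, determinantal reduction, residue evaluation, and the observation that the symplectic edge factor $\prod_j\sin^2\theta_j$ is what swaps the strict/non-strict inequalities in the $z$-products relative to the $SO(2N)$ case---and the structural comparison you draw with Proposition~\ref{propCFS} is exactly the point the paper itself highlights immediately after stating the result.

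That said, what you have written is a plan rather than a proof: the determinant evaluation and the recombination of residues into the closed product of $z$-factors are asserted rather than carried out, and you yourself flag this bookkeeping as ``the main obstacle.'' If you intend this as a self-contained proof rather than a citation, you would need to actually execute the Gram determinant and residue steps (or, more simply, cite \cite{CFS} as the paper does).
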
 We note two differences between  Proposition \ref{CFSS} and Proposition \ref{propCFS} from the orthogonal case.  Both differences appear in the range of the products in the numerator on the right hand side of the identity. The first product on the right hand side ranges from $1 \leq j \leq k \leq K$ in the symplectic case, as opposed to from $ 1 \leq j < k \leq K$ in the orthogonal case; this is why we use \eqref{12} instead of \eqref{11} in our application of Lemma \ref{CFKRSlem}. The second product ranges from $1 \leq q < r \leq Q$ in the symplectic case, as opposed to from $ 1 \leq q \leq r \leq Q$ in the orthogonal case. This means that when we set $\alpha_j = \gamma_q = \alpha=a/N$ for all $j, q$, in the step equivalent to (\ref{eq23}) from the orthogonal case, the first range increases the power on $\left(\frac{N}{a}\right)$ by $K$ and the second range reduces by $K$ the power of $\left(\frac{N}{2a}\right)$, compared to the orthogonal computation. We also get an extra $\frac{1}{2u_i}$ factor for each $1 \leq i \leq K$ coming from the range of the first product in the symplectic case. Then, the $\pm K$ factors of $2$ and of $\left(\frac{N}{a}\right)$ cancel out, and the only difference is one less power of $u_i$ for each $i$. Carrying out the computations analogously, once we've factorised our computation as we did to arrive at  \eqref{det}, we have:
\begin{eqnarray}\label{detsym}
 &&  (-1)^K \int_{USp(2N)} \left( \frac{ \Lambda^{'}_X}{\Lambda_X}(e^{-\alpha}) \right)^K e^{-K\alpha} \, \mathrm{dX}   = \left(\frac{-1}{2}\right)^{\binom{K}{2}}  \left(\frac{N}{a}\right)^{K}\\
  &&\times e^{-Ka} \frac{\mathrm{d}^K}{\mathrm{dt}^K}\Bigg|_{t=0} e^{-at} \, \det\left[ \frac{1}{2\pi i} \oint \frac{u_i^{2j+i-3}\exp(au_i+2t/(u_i^2-1))}{(u_i-1)^K} \mathrm{du_i} \right]_{i,j = 1}^K \, \left( 1 + \mathcal{O}\left( \tfrac{a}{N} \right) \right). \nonumber
\end{eqnarray}
As in the even orthogonal case at \eqref{derivs}, we want to find the leading order term of \eqref{detsym} for large $N$.  Recalling that $a\rightarrow 0$ as $N\rightarrow \infty$, this amounts to finding the leading order term for small $a$ inside the $t$-derivative. 

Defining
\begin{align}\label{Psi}
    \det(\Psi) =: \det\left[ \frac{1}{2\pi i} \oint \frac{u_i^{2j+i-3}\exp(au_i+2t/(u_i^2-1))}{(u_i-1)^K} \mathrm{du_i} \right]_{i,j = 1}^K  ,
\end{align}
we can write
\begin{eqnarray}\label{a-expand}
 &&(-1)^K \int_{USp(2N)} \left( \frac{ \Lambda^{'}_X}{\Lambda_X}(e^{-\alpha}) \right)^K e^{-K\alpha} \, \mathrm{dX}  \nonumber \\
 &&= \left(\frac{-1}{2}\right)^{\binom{K}{2}}  \left(\frac{N}{a}\right)^{K}e^{-Ka}\left[\sum_{n=0}^K (-1)^n a^n \binom{K}{n}\frac{\mathrm{d}^{K-n}}{\mathrm{d}t^{K-n}}\det \Psi \Bigg|_{t=0}\right]\left( 1 + \mathcal{O}\left( \tfrac{a}{N} \right) \right),
\end{eqnarray}
noting that $\det\Psi$ will also have an expansion for small $a$. 

Keeping just the leading order (the constant term in $a$) in the square brackets, \eqref{a-expand} is equal to
\begin{eqnarray}\label{a-expand0}
&& \left(\frac{-1}{2}\right)^{\binom{K}{2}}  \left(\frac{N}{a}\right)^{K}e^{-Ka}\left[\frac{\mathrm{d}^K}{\mathrm{dt}^K}\det \Psi_0  \Bigg|_{t=0}\right]\left( 1 + \mathcal{O}\left(a \right) \right),
\end{eqnarray}
where
\begin{equation}\label{psi0}
\det \Psi_0 =: \det\left[ \frac{1}{2\pi i} \oint \frac{u_i^{2j+i-3}\exp(2t/(u_i^2-1))}{(u_i-1)^K} \mathrm{du_i} \right]_{i,j = 1}^K,
\end{equation}
as we have made the leading-order approximation $\exp(au_i) \sim 1$.

The degree of the matrix $\Psi_0$ is $K(K-2)$ (because the column degree of the $j^{th}$ column is $2j-3$).  Thus by Proposition \ref{maxderivatives} it is independent of $t$. Setting $t=0$ and using the same method as  Lemma \ref{indep},  the determinant above reduces to the determinant of a matrix of binomial coefficients:
\begin{align}
  \det\Psi_0=  \det\left[ \binom{2j+i-3}{K-1 }\right]_{1 \leq i,j \leq K} 
\end{align} with the convention that $\binom{n}{m}=0$ if $m>n$. By applying Pascal's recurrence on the rows, we can simplify this to the determinant:
\begin{align}\label{detpsi0}
  \det\Psi_0=  \det \left[ \binom{2j-2}{K-i} \right]_{1 \leq i,j \leq K}= (-2)^{\binom{K}{2}},
\end{align} 
where the second equality follows from Lemma \ref{genlem}. We can conclude that the leading order term in $a$ of the determinant in \eqref{detsym} vanishes upon differentiation with respect to $t$, just as in the orthogonal case.

Expanding the square brackets to next-to-leading order in $a$ (down to terms linear in $a$), \eqref{a-expand} becomes
\begin{equation}\label{a-expand1}
 \left(\frac{-1}{2}\right)^{\binom{K}{2}}  \left(\frac{N}{a}\right)^{K}e^{-Ka}\left( \frac{\mathrm{d}^K}{\mathrm{dt}^K}\left(\det \Psi_0+a\det \Psi_1\right) -aK\frac{\mathrm{d}^{K-1}}{\mathrm{d}t^{K-1}} \det \Psi_0  \Bigg|_{t=0}  \right) \left( 1 + \mathcal{O}\left(a \right) \right),
\end{equation}
where
\begin{equation}\label{psi1}
\det \Psi_1=\det \left[
     \begin{array}{ll}
       \frac{1}{2\pi i} \oint \frac{u_i^{2j+i-3}\exp(2t /(u_i^2-1))}{(u_i-1)^K} \mathrm{du_i} &  1 \leq i \leq K-1, \,  1 \leq j \leq K,\\ \\
      \frac{1}{2\pi i} \oint \frac{u_i^{2j+i-2}\exp(2t /(u_i^2-1))}{(u_i-1)^K} \mathrm{du_i}  &   i = K, \,  1 \leq j \leq K\\
     \end{array}
   \right].
\end{equation}
The sum $\det\Psi_0+a\det\Psi_1$ is the expansion of $\det \Psi$ down to the next-to-leading order term for small $a$.  The matrix $a\Psi_1$ comes from expanding the factor $\exp(au_i)$ in $\Psi$ (see \eqref{Psi}) to the constant term in rows 1 to $K-1$ and selecting the term linear in $a$ in the $K^{th}$ row. Note that if we had retained the term linear in $a$ in any row other than the $K^{th}$, that row would be a multiple of the row below and would yield a zero determinant.  

In \eqref{a-expand1}, the degree of the matrix $\Psi_1$ is $K(K-1)$ (as the column degree of the $j^{th}$ column is $2j-2$).  Thus from Proposition \ref{maxderivatives}, any $t$-derivative of $\det\Psi_1$ is zero.  We already know that any $t$-derivative of $\Psi_0$ is zero.   Thus the only way that \eqref{a-expand1} is non-zero at this order in $a$ is the case of the first moment:
\begin{align}
    K=1: \;\; - \int_{USp(2N)} \left( \frac{ \Lambda^{'}_X}{\Lambda_X}(e^{-\alpha}) \right) e^{-\alpha} \, \mathrm{dX} = \left(\frac{N}{a}\right)e^{-a}(-a)\det \Psi_0\left( 1 + \mathcal{O}\left(a \right) \right) \nonumber \\
    \int_{USp(2N)} \left( \frac{ \Lambda^{'}_X}{\Lambda_X}(e^{-\alpha}) \right) \, \mathrm{dX} = N\left( 1 + \mathcal{O}\left(a \right) \right),
\end{align}
since at leading order, both $e^{-\alpha}$ and $e^{-a}$ are approximated by 1, and $\det\Psi_0=1$ for $K=1$.

Now we assume that $K>1$. Expanding the square brackets to next-to-next-to-leading order in $a$ (that is to $a^2$) turns \eqref{a-expand} into
\begin{eqnarray}\label{a-expand2}
 \left(\frac{-1}{2}\right)^{\binom{K}{2}} && \left(\frac{N}{a}\right)^{K}e^{-Ka}\left( \frac{\mathrm{d}^K}{\mathrm{dt}^K}\left(\det \Psi_0+a\det \Psi_1+ a^2\det\Psi_{1,1}+ \frac{a^2}{2} \det\Psi_{0,2}+ \frac{a^2}{2} \det\Psi_{2,0}\right)\right.\nonumber \\ &&\left.-aK\frac{\mathrm{d}^{K-1}}{\mathrm{d}t^{K-1}}\left( \det \Psi_0 +a\det\Psi_1\right)+a^2\binom{K}{2} \frac{\mathrm{d}^{K-2}}{\mathrm{d}t^{K-2}}\det \Psi_0 \Bigg|_{t=0}  \right) \left( 1 + \mathcal{O}\left(a \right) \right),
\end{eqnarray}
where $\Psi_0$ is defined at \eqref{psi0}, $\Psi_1$ is defined at \eqref{psi1}, and in general, the subscript $(s_1, s_2)$ denotes that in the small-$a$ expansion of $\exp(au_i)$, we have selected the term containing $a^{s_1}$ in the $K-1^{th}$ row, and the term containing $a^{s_2}$ in the $K^{th}$ row, such that:
\begin{equation}\label{psi21}
    \det \Psi_{1,1}=\det \left[
     \begin{array}{ll}
       \frac{1}{2\pi i} \oint \frac{u_i^{2j+i-3}\exp(2t /(u_i^2-1))}{(u_i-1)^K} \mathrm{du_i} &  1 \leq i \leq K-2, \,  1 \leq j \leq K,\\ \\
      \frac{1}{2\pi i} \oint \frac{u_i^{2j+i-2}\exp(2t /(u_i^2-1))}{(u_i-1)^K} \mathrm{du_i}  &   i = K-1 \; \text{and} \; K, \,  1 \leq j \leq K\\
     \end{array}
   \right],
\end{equation}
\begin{equation}\label{psi22}
    \det \Psi_{0,2}=\det \left[
     \begin{array}{ll}
       \frac{1}{2\pi i} \oint \frac{u_i^{2j+i-3}\exp(2t /(u_i^2-1))}{(u_i-1)^K} \mathrm{du_i} &  1 \leq i \leq K-1, \,  1 \leq j \leq K,\\ \\
      \frac{1}{2\pi i} \oint \frac{u_i^{2j+i-1}\exp(2t /(u_i^2-1))}{(u_i-1)^K} \mathrm{du_i}  &   i = K, \,  1 \leq j \leq K\\
     \end{array}
   \right],
\end{equation}
and 
\begin{equation}\label{psi23}
    \det \Psi_{2,0}=\det \left[
     \begin{array}{ll}
       \frac{1}{2\pi i} \oint \frac{u_i^{2j+i-3}\exp(2t /(u_i^2-1))}{(u_i-1)^K} \mathrm{du_i} &  i \neq K-1, \,  1 \leq j \leq K,\\ \\
       \frac{1}{2\pi i} \oint \frac{u_i^{2j+i-1}\exp(2t /(u_i^2-1))}{(u_i-1)^K} \mathrm{du_i} &  i=K-1, \,  1 \leq j \leq K,\\ 
     \end{array}
   \right].
\end{equation}
The sum $\det \Psi_0+a\det \Psi_1+ a^2\det\Psi_{1,1}+ \frac{a^2}{2} \det\Psi_{0,2}+ \frac{a^2}{2} \det\Psi_{2,0}$ is the expansion of $\det \Psi$ down to the next-to-next-to-leading order term for small $a$. The matrix $a^2\Psi_{1,1}$ comes from expanding the factor $\exp(au_i)$ in $\Psi$ (see \eqref{Psi}) to the constant term in rows 1 to $K-2$ and selecting the term linear in $a$ in the $K-1^{th}$ and $K^{th}$ row.  Note that if we had retained the term linear in $a$ in any other rows instead, one of those rows would be a multiple of the row below and would yield a zero determinant. The matrix $\frac{a^2}{2}\Psi_{0,2}$ comes from expanding the factor $\exp(au_i)$ in $\Psi$ to the constant term in rows 1 to $K-1$ and selecting the quadratic term in $a$ in the $K^{th}$ row. Similarly, $\frac{a^2}{2}\Psi_{2,0}$ arises from selecting the quadratic term in the $K-1^{th}$ row.  Note that if we had retained the term quadratic in $a$ in any but the $K-1^{th}$ or $K^{th}$ rows, that row would be a multiple of the row two below and would yield a zero determinant. 

The degree of $\Psi_{1,1}$ is $K(K-1)$ (as the $j$th column degree, $D_j(\Psi_{1,1})$, is $2j-2$). The degree of $\Psi_{0,2}$ is $K^2$ (since $D_j(\Psi_{0,2})=2j-1$) and the secondary degree $\Tilde{D}(\Psi_{0,2}) = K(K-5)$.  The degree of $\Psi_{2,0}$ is $K(K-1)$ (since $D_j(\Psi_{2,0})=2j-2$).  Note that in $\Psi_{2,0}$, the element with the highest degree in each column is in the $K-1^{th}$ row rather than the last row, so this matrix does not quite fit the definition of a matrix in $\mathcal{M}$, but up to an exchange of rows it satisfies the definition and so all the results that hold for matrices $M\in \mathcal{M}$ also hold for $\Psi_{2,0}$.

From the matrix degrees we can see by Proposition \ref{maxderivatives} that $\det\Psi_{1,1}$ and $\det\Psi_{2,0}$ cannot survive any $t$-differentiation, and we already know that that is the case for $\det \Psi_0$ and $\det \Psi_1$. Since $D_1(\Psi_{0,2}) = 1$ and $\Tilde{D}(\Psi_{0,2}) = K(K-3)$, by Proposition \ref{mysterylemma}, $\det(\Psi_{0,2})$ vanishes after $K$  derivatives in \eqref{a-expand2}. Thus the only non-zero contribution in \eqref{a-expand2} is the case of the second moment,
\begin{equation}
   K=2: \;\;    \int_{USp(2N)} \left( \frac{ \Lambda^{'}_X}{\Lambda_X}(e^{-\alpha}) \right)^2  \, \mathrm{dX}=  \frac{-1}{2} \left( \frac{N}{a} \right)^2 e^{-2a} \left[   a^2 \det\Psi_0 \right] \left(1 + \mathcal{O}(a) \right)= N^2 \left( 1 + \mathcal{O}(a) \right).
\end{equation}

Now we assume that $K > 2$ and use similar arguments to compute the coefficient of $\frac{N^K}{a^{K-3}}$ in \eqref{a-expand}. The cubic terms in $a$ in the square brackets in \eqref{a-expand} are
\begin{eqnarray}\label{a-expand3}
 &&\left[ \frac{\mathrm{d}^K}{\mathrm{dt}^K}\left(\frac{a^3}{3!} \left(\det\Psi_{3, 0, 0} + \det\Psi_{0,3,0} + \det\Psi_{0,0,3} \right) \right.\right. \nonumber \\
 && + \frac{a^3}{2} \left( \det\Psi_{0,1,2} + \det\Psi_{0,2,1} + \det\Psi_{1, 0, 2} + \det\Psi_{1,2,0}   + \det\Psi_{2, 0, 1} + \det\Psi_{2,1,0}  \right) + a^3\det\Psi_{1,1,1} \Bigg) \nonumber \\
 &&-aK\frac{\mathrm{d}^{K-1}}{\mathrm{d}t^{K-1}}\left( a^2\det\Psi_{1,1}+ \frac{a^2}{2} \det\Psi_{0,2}+ \frac{a^2}{2} \det\Psi_{2,0}\right) \nonumber \\ &&
  \left. + a^2 \binom{K}{2} \frac{\mathrm{d}^{K-2}}{\mathrm{d}t^{K-2}} (a\det \Psi_1) - a^3 \binom{K}{3} \frac{\mathrm{d}^{K-3}}{\mathrm{d}t^{K-3}} (\det \Psi_0)\Bigg|_{t=0}  \right] ,
\end{eqnarray}
where $\Psi_{s_1, s_2, s_3}$ denotes the matrix whose $K-2^{th}$, $K-1^{th}$ and $K^{th}$ row respectively contain the $a^{s_1}$, $a^{s_2}$ and $a^{s_3}$ term from the expansion of $\exp(au_i)$. Table \ref{tab:psi_expansion} outlines how these matrices contribute to the fourth-from-leading order term. 

\begin{table}
\begin{center}
\begin{tabular}{|l|l|l|}
\hline
Matrix & $\frac{\mathrm{d}^K}{\mathrm{dt}^K}\det \Psi_{s_1,s_2,s_3}$ & Comment \\
\hline 
$\Psi_{0,1, 2}$   & not 0  &  $\det(\Psi_{0,1, 2}) = -\det(\Psi_{0,3, 0} )$\\
$\Psi_{0,2, 1}$   & 0 & $K-1^{th}$ row equals $K^{th}$ row \\
$\Psi_{0,3, 0}$   & not 0  & $\det(\Psi_{0,1, 2}) = -\det(\Psi_{0,3, 0} )$\\
$\Psi_{0,0, 3}$   & 0  &  By Proposition \ref{mysterylemma} since $\Tilde{D}(\Psi_{0,0, 3}) = K(K-3)$ and $D_1(\Psi_{0,0, 3}) = 2$\\
$\Psi_{1,0, 2}$   & 0  & $K-2^{th}$ row equals $K-1^{th}$ row  \\
$\Psi_{1,1, 1}$   & 0  & By Proposition \ref{maxderivatives} since $D(\Psi_{1,1, 1}) = K(K-1)$  \\
$\Psi_{1,2, 0}$   & 0  & By Proposition \ref{maxderivatives} since $D(\Psi_{1,2, 0}) = K(K-1)$  \\
$\Psi_{2,0, 1}$   & 0  & By Proposition \ref{maxderivatives} since $D(\Psi_{2,0, 1}) = K(K-1)$  \\
$\Psi_{2,1, 0}$   & 0 & $K-1^{th}$ row equals $K^{th}$ row \\
$\Psi_{3,0, 0}$   & 0  & By Proposition \ref{maxderivatives} since $D(\Psi_{3,0, 0}) = K(K-1)$ \\

\hline
\end{tabular}    
\end{center}
\caption{\label{tab:psi_expansion} List of the possible contributions to the order $a^3$ term in the small $a$ expansion of $\Psi$ (defined in (\ref{Psi})). The second column indicates whether this term will survive in (\ref{a-expand3}) after $K$ differentiations with respect to $t$.}
\end{table}

Now we are reduced to computing $\frac{\mathrm{d}^K}{\mathrm{dt}^K} \det\Psi_{0,1,2}\big|_{t=0}$, where
\begin{equation}\label{Psi030}
  \det \Psi_{0,1,2}=\det \left[
     \begin{array}{ll}
       \frac{1}{2\pi i} \oint \frac{u_i^{2j+i-3}\exp(2t /(u_i^2-1))}{(u_i-1)^K} \mathrm{du_i} &  1 \leq i \leq K-2, \,  1 \leq j \leq K,\\ \\
       \frac{1}{2\pi i} \oint \frac{u_i^{2j+K-3}\exp(2t /(u_i^2-1))}{(u_i-1)^K} \mathrm{du_i} &  i=K-1, \,  1 \leq j \leq K,\\ \\
      \frac{1}{2\pi i} \oint \frac{u_i^{2j+K-1}\exp(2t /(u_i^2-1))}{(u_i-1)^K} \mathrm{du_i}  &   i = K, \,  1 \leq j \leq K\\
     \end{array}
   \right].
\end{equation}
From Lemma \ref{multiplicitylemma}, we only need to evaluate the determinant that is the result of differentiating each column once. If we differentiate each column once and set $t = 0$, we arrive at the matrix

 \begin{align}\frac{\mathrm{d}^K}{\mathrm{dt}^K} \det\Psi_{0,1,2}\big|_{t=0} = 
 2^K  \det \begin{bmatrix}
    0 & 0 & \dots & \binom{2K-6}{K-1} & \binom{2K-4}{K-1} \\ \\
    0 & 0 & \dots & \binom{2K-5}{K-1} & \binom{2K-3}{K-1} \\ 
     \vdots & & \ddots && \vdots\\ 
    0 & 0 & \dots & \binom{3K-9}{K-1} & \binom{3K-7}{K-1} \\ \\
    0 & \binom{K-1}{K-1} & \dots & \binom{3K-7}{K-1} & \binom{3K-5}{K-1}\\ \\
    \binom{K-1}{K-1} & \binom{K+1}{K-1} & \dots & \binom{3K-5}{K-1} & \binom{3K-3}{K-1} 
    \end{bmatrix}=:\det \Theta,
\end{align}
whose determinant can be computed as we've done throughout the paper. First we expand the determinant along the first column. Then
\begin{align}
  \det   \Theta = (-1)^{K+1} 2^K \det \left[
    \begin{array}{lll}
    \binom{2j+i-3}{K-1} && i \neq K-1  \\ \\
    \binom{2j+K-3}{K-1} && i = K-1 
    \end{array} \right]_{1 \leq i,j \leq K-1}.
\end{align}

We can also expand this determinant along the first column, and we have 
\begin{align}
 \det   \Theta = (-1)^{2K+1}2^K \det \left[
    \binom{2j+i-1}{K-1}  \right]_{1 \leq i,j \leq K-2}.
\end{align}

We apply Pascal's recurrence as at \eqref{pascal}, and obtain
\begin{align}
   \det\Theta&= -2^K \det \left[ \binom{2j}{K-i} \right]_{1 \leq i,j \leq K-2} \nonumber \\&=- 2^K \det \left[ \frac{2j(2j-1)}{(K-i)(K-1-i)}\binom{2j-2}{K-2-i} \right]_{1 \leq i,j \leq K-2}\nonumber\\
   &= - 2^K\prod_{j=1}^{K-2}\frac{2j(2j-1)}{(j+1)j}\det \left[ \binom{2j-2}{K-2-i} \right]_{1 \leq i,j \leq K-2}.
\end{align}

This final determinant can be computed by applying Lemma \ref{genlem}, and we can write
\begin{align}
    \det(\Theta) =-2^K \prod_{j=1}^{K-2}\frac{2(2j-1)}{(j+1)} (-2)^{\frac{(K-2)(K-3)}{2}}\nonumber\\
    =(-1)^{\frac{2+(K-2)(K-3)}{2}} 2^K 2^{K-2} 2^{\frac{K^2 - 5K +6}{2}} \frac{(2K-5)!!}{(K-1)!}\nonumber \\
    = (-1)^{\frac{K(K-5)}{2}} \frac{(2K-5)!!}{(K-1)!} 2^{\frac{K^2-K+2}{2}}.
\end{align}  \\
 Now, for $K \geq 4$, \eqref{a-expand3} is reduced to
\begin{align}
\frac{\mathrm{d}^K}{\mathrm{dt}^K}\left(\frac{a^3}{3!} \det\Psi_{0,3,0}  + \frac{a^3}{2} \det\Psi_{0,1,2} \right) \Bigg|_{t=0} \nonumber \\
= \frac{a^3}{3} \det\Theta = \frac{a^3}{3} (-1)^{\frac{K(K-5)}{2}} \frac{(2K-5)!!}{(K-1)!} 2^{\frac{K^2-K+2}{2}}.
\end{align}
For $K = 3$, we also sum the non-differentiated term 
\begin{align}
     - a^3 \binom{K}{3} \frac{\mathrm{d}^{K-3}}{\mathrm{d}t^{K-3}} (\det \Psi_0) =  8a^3,
\end{align} so we can write the third moment:
\begin{align}
    - \int_{USp(2N)} \left( \frac{ \Lambda^{'}_X}{\Lambda_X}(e^{-3\alpha}) \right)^3 e^{-3\alpha} \, \mathrm{dX}  =  \left(\frac{-1}{2}\right)^{\binom{3}{2}}  \left(\frac{N}{a}\right)^{3}e^{-3a} \left(\frac{-8a^3}{3}  + 8a^3 \right) \left(1 + \mathcal{O}(a) \right), \nonumber \\
K=3:\qquad \qquad\qquad  \qquad\qquad \int_{USp(2N)} \left( \frac{ \Lambda^{'}_X}{\Lambda_X}(e^{-3\alpha}) \right)^3 \mathrm{dX}  =  \frac{2 N^3}{3}  \left(1 + \mathcal{O}(a) \right).
\end{align}
For $K \geq 4$, we've arrived at Theorem \ref{thm3},
\begin{align}
  (-1)^K \int_{USp(2N)} \left( \frac{ \Lambda^{'}_X}{\Lambda_X}(e^{-\alpha}) \right)^K e^{-K\alpha} \mathrm{dX} \nonumber  \\ =   \left(\frac{-1}{2}\right)^{\binom{K}{2}}  \frac{N^K}{a^{K-3}} e^{-Ka} \frac{(-1)^{\frac{K(K-5)}{2}}}{3} \frac{(2K-5)!!}{(K-1)!} 2^{\frac{K^2-K+2}{2}}\left(1 + \mathcal{O}(a) \right), \nonumber \\ 
K\geq 4:\qquad\qquad  \int_{USp(2N)} \left( \frac{ \Lambda^{'}_X}{\Lambda_X}(e^{-\alpha}) \right)^K   \mathrm{dX} 
   =  (-1)^K \frac{2}{3}\frac{N^K}{a^{K-3}}\frac{(2K-5)!!}{(K-1)!}  \left(1 + \mathcal{O}(a) \right).
 \end{align}

For the first two moments we also investigate the next-to-leading order term.  Returning to (\ref{a-expand}) in the case $K=1$,
 \begin{eqnarray}
 -\int_{USp(2N)}\frac{\Lambda_X'}{\Lambda_X}(e^{-\alpha})e^{-\alpha} dX&=&\frac{N}{2} e^{-a}\left[ -a \det\Psi\Big|_{t=0} +\frac{d}{dt}\det \Psi\Big|_{t=0}\right] \left( 1 + \mathcal{O}\left( \tfrac{a}{N} \right) \right) \\
 &=&\frac{N}{2} e^{-a}\left[ -a( \det\Psi_0+a\det\Psi_1)\Big|_{t=0} \right.\nonumber \\
 &&\left.+\frac{d}{dt}(\det \Psi_0+a\det\Psi_1+\frac{a^2}{2}\det\Psi_2)\Big|_{t=0}+\mathcal{O}(a^3)\right] \left( 1 + \mathcal{O}\left( \tfrac{a}{N} \right) \right).\nonumber
 \end{eqnarray}
 Here all determinants are one-dimensional, and we have expanded $\det\Psi$ for small $a$ by including successive terms in the expansion of $\exp(au_i)$ in (\ref{Psi}). In particular
 \begin{eqnarray}
 \det \Psi_0\Big|_{t=0}&=& \det\left[\frac{1}{2\pi i}\oint \frac{1}{u-1} du\right]=1\nonumber\\
 \det \Psi_1\Big|_{t=0}&=& \det \left[\frac{1}{2\pi i} \oint \frac{u}{u-1}du\right]=1\nonumber \\
 \frac{d}{dt}\frac{a^2}{2} \det \Psi_2 \Big|_{t=0} &=& \frac{d}{dt} \frac{a^2}{2} \det\left[\frac{1}{2\pi i} \oint \frac{u^2 \exp(2t/(u^2-1))}{u-1} du\right]\Big|_{t=0} \nonumber \\
 &=&a^2 \det \left[\frac{1}{2\pi i} \oint \frac{u^2}{(u-1)^2(u+1)} du \right]= a^2.
 \end{eqnarray}
From Proposition \ref{maxderivatives}, $\det\Psi_0$ and $\det\Psi_1$ don't survive differentiation. 

So we have
 \begin{eqnarray}
 -\int_{USp(2N)}\frac{\Lambda_X'}{\Lambda_X}(e^{-\alpha})e^{-\alpha} dX&=&\frac{N}{a}e^{-a}(-a +\mathcal{O}(a^3))\left( 1 + \mathcal{O}\left( \tfrac{a}{N} \right) \right)\nonumber \\
&=&\frac{N}{a}(1-a+\mathcal{O}(a^2))(-a +\mathcal{O}(a^3))+\mathcal{O}(a)\nonumber \\
 \int_{USp(2N)}\frac{\Lambda_X'}{\Lambda_X}(e^{-\alpha}) dX&=&N-aN+\mathcal{O}(a^2N)+\mathcal{O}(a).
  \end{eqnarray}
  Note that without further restrictions on the rate of decay of $a$ for large $N$, we can't say which of the two error terms will dominate. 

Similarly for the second moment
 \begin{eqnarray}
 \int_{USp(2N)}&&\left(\frac{\Lambda_X'}{\Lambda_X}(e^{-\alpha})\right)^2e^{-2\alpha} dX\nonumber \\
 &&=-\frac{1}{2}\left(\frac{N}{a}\right)^2e^{-2a}\left[\frac{d^2}{dt^2} \det \Psi\Big|_{t=0} -2a\frac{d}{dt}\det \Psi\Big|_{t=0} +a^2\det \Psi \Big|_{t=0}\right]\left( 1 + \mathcal{O}\left( \tfrac{a}{N} \right) \right) \nonumber
  \\
  &&=-\frac{1}{2}\left(\frac{N}{a}\right)^2e^{-2a}\left[\frac{d^2}{dt^2}\left( \det \Psi_0+a\det \Psi_1+a^2\det \Psi_{11}+\frac{a^2}{2}\det \Psi_{02}+\frac{a^2}{2}\det \Psi_{20}\right.\right.\nonumber \\
  &&\qquad\qquad\left.+\frac{a^3}{3!}\det\Psi_{30}+\frac{a^3}{2!}\det \Psi_{21}+\frac{a^3}{2!} \det\Psi_{12}+\frac{a^3}{3!}\det \Psi_{03} \right)\nonumber \\
  &&\qquad\qquad-2a\frac{d}{dt} \left( \det \Psi_0+a\det \Psi_1+a^2\det \Psi_{11}  +\frac{a^2}{2}\det \Psi_{02} +\frac{a^2}{2} \det \Psi_{20}\right)\nonumber \\
  &&\qquad\qquad+ a^2(\det \Psi_0+a\det\Psi_1)+\mathcal{O}(a^4)\bigg]\Bigg|_{t=0}\left( 1 + \mathcal{O}\left( \tfrac{a}{N} \right) \right),
 \end{eqnarray}
 where the $2\times 2$ matrices $\Psi_0$ and $\Psi_1$ are defined as above at (\ref{psi0}) and (\ref{psi1}) respectively, and $\Psi_{mn}$ is defined by starting with $\Psi_0$ and increasing the power of $u_1$ in the numerator of the integrand for elements of the first row by $m$ and increasing the power of $u_2$ in the second row by $n$. 
 
 By using the column degree and results from Section \ref{theset}, or simply by inspection of these $2\times2$ matrices, we see that none of these terms survive except for:
 \begin{eqnarray}
 \det \Psi_0 \big|_{t=0}&=& \left|\begin{array}{cc} \frac{1}{2\pi i} \oint \frac{1}{(u_1-1)^2}du_1&\frac{1}{2\pi i} \oint \frac{u_1^2}{(u_1-1)^2}du_1\\ \frac{1}{2\pi i} \oint \frac{u_2}{(u_2-1)^2}du_2& \frac{1}{2\pi i} \oint \frac{u_2^3}{(u_2-1)^2}du_2\end{array}\right|=\left|\begin{array}{cc}0&2\\1&3\end{array}\right|=-2\nonumber \\
 \det \Psi_1 \big|_{t=0}&=& \left|\begin{array}{cc} \frac{1}{2\pi i} \oint \frac{1}{(u_1-1)^2}du_1&\frac{1}{2\pi i} \oint \frac{u_1^2}{(u_1-1)^2}du_1\\ \frac{1}{2\pi i} \oint \frac{u_2^2}{(u_2-1)^2}du_2& \frac{1}{2\pi i} \oint \frac{u_2^4}{(u_2-1)^2}du_2\end{array}\right|=\left|\begin{array}{cc}0&2\\2&4\end{array}\right|=-4\nonumber \\
 \frac{d}{dt}\det \Psi_{02} \big|_{t=0}&=&\frac{d}{dt} \left|\begin{array}{cc} \frac{1}{2\pi i} \oint \frac{\exp(2t/(u_1^2-1))}{(u_1-1)^2}du_1&\frac{1}{2\pi i} \oint \frac{u_1^2\exp(2t/(u_1^2-1))}{(u_1-1)^2}du_1\\ \frac{1}{2\pi i} \oint \frac{u_2^3\exp(2t/(u_2^2-1))}{(u_2-1)^2}du_2& \frac{1}{2\pi i} \oint \frac{u_2^5\exp(2t/(u_2^2-1))}{(u_2-1)^2}du_2\end{array}\right|_{t=0}\nonumber \\
 &=&2\left|\begin{array}{cc} \frac{1}{2\pi i} \oint \frac{1}{(u_1-1)^3(u_1+1)}du_1&\frac{1}{2\pi i} \oint \frac{u_1^2}{(u_1-1)^2}du_1\\ \frac{1}{2\pi i} \oint \frac{u_2^3}{(u_2-1)^3(u_2+1)}du_2& \frac{1}{2\pi i} \oint \frac{u_2^5}{(u_2-1)^2}du_2\end{array}\right|\nonumber \\
 &&\qquad\qquad+2    \left|\begin{array}{cc} \frac{1}{2\pi i} \oint \frac{1}{(u_1-1)^2}du_1&\frac{1}{2\pi i} \oint \frac{u_1^2}{(u_1-1)^3(u_1+1)}du_1\\ \frac{1}{2\pi i} \oint \frac{u_2^3}{(u_2-1)^2}du_2& \frac{1}{2\pi i} \oint \frac{u_2^5}{(u_2-1)^3(u_2+1)}du_2\end{array}\right|\nonumber \\
 &=&2\left|\begin{array}{cc}0&1\\2&5\end{array}\right|+2\left|\begin{array}{cc}0&0\\3&4\end{array}\right|=-4.
 \end{eqnarray}
 
 So we have
 \begin{eqnarray}
  \int_{USp(2N)}&&\left(\frac{\Lambda_X'}{\Lambda_X}(e^{-\alpha})\right)^2e^{-2\alpha} dX\nonumber \\
 &&=-\frac{1}{2}\left(\frac{N}{a}\right)^2e^{-2a}(-2a^2+\mathcal{O}(a^4))\left( 1 + \mathcal{O}\left( \tfrac{a}{N} \right) \right)\nonumber \\
 &&=-\frac{1}{2}\left(\frac{N}{a}\right)^2(1-2a+\mathcal{O}(a^2))(-2a^2+\mathcal{O}(a^4))+\mathcal{O}(Na)
 \end{eqnarray}
 and so 
 \begin{eqnarray}
 \int_{USp(2N)}&&\left(\frac{\Lambda_X'}{\Lambda_X}(e^{-\alpha})\right)^2dX=N^2-2N^2a +\mathcal{O}(N^2a^2)+\mathcal{O}(Na).
 \end{eqnarray}

Thus the variance is small:
\begin{eqnarray}
 \int_{USp(2N)}\left(\frac{1}{N}\left(\frac{\Lambda_X'}{\Lambda_X}(e^{-\alpha})\right)-1\right)^2 dX=\mathcal{O}(a^2)+\mathcal{O}(\tfrac{a}{N})
 \end{eqnarray}
 and by Chebyshev's inequality the scaled logarithmic derivative $\frac{1}{N}\frac{\Lambda_X'}{\Lambda_X}(e^{-\alpha})\sim 1$ for almost all matrices in $USp(2N)$ when $N$ is large, corroborating the intuition in Section \ref{interpretation}.

\section{The odd orthogonal case}\label{odd} 
When we compute the moments of the logarithmic derivative of characteristic polynomials averaged over $SO(2N+1)$, unlike the other ensembles, here we get a non-zero coefficient for the leading order term, and the proof is much simpler. As explained in Section \ref{interpretation}, this is due to the guaranteed eigenvalue at 1 in this ensemble. The analogue of Proposition \ref{propCFS} is:
\begin{prop}{(Conrey, Forrester, Snaith, Proposition 2.4 \cite{CFS})}\label{CFSO} For $N \geq Q$, and $\mathfrak{Re}(\gamma_q) \geq 0 \quad \forall q$, 
\begin{align}
 \int_{SO(2N+1)} \frac{\prod_{k=1}^K \Lambda_X(e^{-\alpha_k})}{\prod_{q=1}^Q \Lambda_X (e^{-\gamma_q})}\mathrm{dX}  \nonumber \\
 = \sum_{\varepsilon \in \{-1, 1\}^K} \left( \prod_{j=1}^K \varepsilon_j \right) e^{(N+1/2)\sum_{k=1}^K (\varepsilon_k\alpha_k)}\frac{\prod\limits_{1 \leq j < k \leq K} z(\varepsilon_j\alpha_j + \varepsilon_k\alpha_k)\prod\limits_{1 \leq q \leq r \leq Q} z(\gamma_q + \gamma_r)}{\prod_{k=1}^K\prod_{q=1}^Q z(\varepsilon_k\alpha_k+\gamma_q) e^{(N+1/2)\sum_{k=1}^K\alpha_k}}.
\end{align}   
\end{prop}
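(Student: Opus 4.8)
My plan is to prove this the same way the even orthogonal formula of Proposition~\ref{propCFS} and the symplectic formula of Proposition~\ref{CFSS} are established, tracking only the features introduced by the eigenvalue forced to lie at $1$ in every $SO(2N+1)$ matrix. First I would apply the Weyl integration formula for $SO(2N+1)$. Writing the eigenvalues as $1, e^{\pm i\theta_1}, \dots, e^{\pm i\theta_N}$, every class function averages as a constant times
\begin{equation*}
\int_{[0,\pi]^N} f(\theta)\prod_{1\leq j<k\leq N}\left(\cos\theta_j-\cos\theta_k\right)^2 \prod_{j=1}^N \sin^2\!\left(\tfrac{\theta_j}{2}\right)\, \mathrm{d}\theta,
\end{equation*}
where the weight $\prod_j \sin^2(\theta_j/2)$ encodes the repulsion from the point $1$ that is the hallmark of the odd orthogonal density. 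I would then substitute
\begin{equation*}
\Lambda_X(e^{-\gamma}) = (1-e^{-\gamma})\prod_{j=1}^N (1-e^{-\gamma}e^{i\theta_j})(1-e^{-\gamma}e^{-i\theta_j}),
\end{equation*}
so that the ratio separates into the contribution of the eigenvalue at $1$, namely $\prod_{k}(1-e^{-\alpha_k})/\prod_q(1-e^{-\gamma_q})$, multiplied by a product over the angles.

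The second step is to evaluate the remaining $N$-fold eigenvalue integral in closed form. The squared Vandermonde in $\cos\theta_j$ lets me apply the Andr\'eief (Gram) identity, which collapses the $N$-fold integral into the determinant of an $N\times N$ matrix of one-dimensional moment integrals; the reciprocal factors $1/\Lambda_X(e^{-\gamma_q})$ are legitimised by expanding them as series that converge termwise under the hypotheses $\mathfrak{Re}(\gamma_q)\geq 0$ and $N\geq Q$. Evaluating this determinant produces a sum over the $2^K$ sign patterns $\varepsilon\in\{-1,1\}^K$ of the external variables $\alpha_k$, which is exactly the structure dictated by the functional equation of the characteristic polynomial under $s\mapsto 1/s$. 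Crucially, for $SO(2N+1)$ this functional equation carries a sign coming from the eigenvalue at $1$, and it is this sign that converts the unsigned sum of the even orthogonal case into the signed sum carrying the factor $\prod_{j=1}^K\varepsilon_j$; the same eigenvalue at $1$ shifts the exponent $e^{N\sum\varepsilon_k\alpha_k}$ to $e^{(N+1/2)\sum\varepsilon_k\alpha_k}$. The residual Cauchy-type cross terms then assemble into the products $z(\varepsilon_j\alpha_j+\varepsilon_k\alpha_k)$ and $z(\gamma_q+\gamma_r)$, while the numerator--denominator interactions supply the factors $z(\varepsilon_k\alpha_k+\gamma_q)$ in the denominator.

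The hardest part will be the exact evaluation of the Gram determinant with every normalisation constant tracked, so that the result emerges with precisely the stated ranges $1\leq j<k\leq K$ and $1\leq q\leq r\leq Q$ and no spurious prefactor. A useful consistency check, which I would carry out before committing to the general formula, is to verify the identity directly for small $N$, $K$ and $Q$ (for instance $K=Q=1$), comparing the Weyl-integral evaluation against the claimed signed sum; matching the pole in $\gamma_1$ and the sign of the $\varepsilon=-1$ term there already pins down both the $\prod_j\varepsilon_j$ factor and the $(N+\tfrac12)$ shift. Once the determinant evaluation and this low-order check agree, the general identity follows, and it can then be fed into the first form of Lemma~\ref{CFKRSlem}, combining the kernel from \eqref{11} with the signed summation formula \eqref{14}, to produce the contour-integral representation used in the proof of Theorem~\ref{thm4}.
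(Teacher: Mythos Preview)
The paper does not prove this proposition at all: it is quoted verbatim from Conrey, Forrester and Snaith \cite{CFS} (their Proposition~2.4) and used as an input to the computation in Section~\ref{odd}, exactly as Propositions~\ref{propCFS} and~\ref{CFSS} are quoted for the even orthogonal and symplectic cases. So there is no ``paper's own proof'' to compare your proposal against; the authors treat this identity as a known black box.

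That said, your outline is a reasonable high-level description of how the result in \cite{CFS} is obtained: Weyl integration for $SO(2N+1)$, separating off the factor $(1-e^{-\alpha})$ from the eigenvalue at~$1$, and reducing the angular integral via a Cauchy/Gram-type determinant. Two cautions. First, the passage ``evaluating this determinant produces a sum over the $2^K$ sign patterns'' is where essentially all of the content lives, and your sketch does not indicate which determinant identity you would invoke or how the cross terms organise into the specific $z$-products with the stated index ranges; in \cite{CFS} this step goes through a Cauchy-type evaluation rather than a bare Andr\'eief argument, and getting the ranges $j<k$ versus $q\leq r$ right is not automatic. Second, your explanation of the $\prod_j\varepsilon_j$ factor via the functional equation is morally correct but not a proof: you would need to show concretely that the extra factor $(1-e^{-\varepsilon_k\alpha_k})/(1-e^{-\alpha_k})$, coming from the forced eigenvalue at~$1$ after the sign flip, is exactly what contributes the $\varepsilon_k$ and the half-integer shift in the exponent. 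Your proposed $K=Q=1$ check is a good sanity test but does not substitute for the general determinant evaluation.
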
 The two differences from the even orthogonal case are the product $\left(\prod_{j=1}^K \varepsilon_j\right)$ and an extra factor \newline  $\exp\left(\frac{1}{2}\sum_{k=1}^K\varepsilon_k\alpha_k- \alpha_k \right)$. The former implies that we must use equations \eqref{11} and \eqref{14} from Lemma \ref{CFKRSlem}, and we arrive at
\begin{align}
     \int_{SO(2N+1)} \frac{\prod_{k=1}^K \Lambda_X(e^{-\alpha_k})}{\prod_{q=1}^Q \Lambda_X (e^{-\gamma_q})}\mathrm{dX} = \frac{(-1)^{\binom{K}{2}}2^K}{K!(2\pi i)^K} 
      \oint \dots \oint \frac{e^{-(N+1/2)\sum_{k=1}^K \alpha_k}\prod_{j=1}^K \alpha_k}{\prod_{j=1}^K\prod_{k=1}^K (w_k-\alpha_j)(w_k+\alpha_j)} \nonumber \\
     \times \, \,  e^{(N + 1/2)\sum_{k=1}^K w_k} \, \, \frac{\prod\limits_{1 \leq j < k \leq K}z(w_j+w_k)\prod\limits_{1 \leq q \leq r \leq Q}z(\gamma_q + \gamma_r) \Delta^2(w^2)  }{\prod_{k=1}^K\prod_{q=1}^Q z(w_k + \gamma_q) }\mathrm{d}\mathbf{w} \label{130}.
\end{align}
Following the steps of the even orthogonal case, we set $K = Q$, differentiate with respect to all $\alpha_j$s, then set $\alpha_j = \gamma_q = \alpha$ for all $j,q$ to get
\begin{eqnarray}
  &&  (-1)^K \int_{SO(2N+1)} \left(\frac{ \Lambda_X^{'}}{\Lambda_X }(e^{-\alpha})\right)^K e^{-K\alpha}\mathrm{dX} \\
  &&\qquad= \frac{(-1)^{\binom{K}{2}}2^K}{K!(2\pi i)^K} 
      \oint \dots \oint \frac{e^{-(N+1/2)K \alpha}}{\prod_{j=1}^K (w_j^2-\alpha^2)^K}  
     \, \left[ (1-(N+1/2)\alpha) + \sum_{j=1}^K \frac{2\alpha^2}{(w_j^2 - \alpha^2)} \right]^K\nonumber \\
     &&\qquad \times \, e^{(N + 1/2)\sum_{k=1}^K w_k} \, \, \frac{\prod\limits_{1 \leq j < k \leq K}z(w_j+w_k)z(2\alpha)^{\binom{K+1}{2}} \Delta^2(w^2)  }{\prod_{k=1}^K z(w_k + \alpha)^K }\mathrm{d}\mathbf{w}.
\end{eqnarray}

Next we scale our variables by $N$ to compute the asymptotics, and, just as before, set $\alpha = a/N$, $w_j = au_j/N$ where $a = o(1)$ as $N \to \infty$ and use the approximation $z(x) \sim 1/x$ for small $x$. This yields:
\begin{eqnarray}
    &&    (-1)^K \int_{SO(2N+1)} \left(\frac{ \Lambda_X^{'}}{\Lambda_X }(e^{-\alpha})\right)^K e^{-K\alpha}\mathrm{dX} = \frac{(-1)^{\binom{K}{2}}2^K}{K!(2\pi i)^K 2^{K(K+1)/2}} \left( \frac{N}{a} \right)^K e^{-Ka\left(1+\tfrac{1}{2N}\right)}  \nonumber \\
  &&   \qquad   \times  \oint \dots \oint e^{(N+ 1/2)\frac{a}{N}\sum_{k=1}^K u_k} \prod_{j=1}^K \frac{ (u_j + 1)^K} {(u_j^2 -1)^K}\left[ 1-a - \frac{a}{2N}+\sum_{j=1}^K\frac{2}{u_j^2-1}\right]^K\nonumber \\
        &&\qquad\qquad\times  \prod_{1 \leq j < k \leq K} \frac{ (u_k^2 - u_j^2)^2}{(u_k + u_j)}  \mathrm{d}\mathbf{u} \left( 1 + \mathcal{O}(a/N)\right) 
\end{eqnarray}
\begin{eqnarray} && = \frac{(-1)^{\binom{K}{2}}}{K!(2\pi i)^K 2^{\binom{K}{2}}} \left( \frac{N}{a} \right)^K e^{-Ka\left(1+\tfrac{1}{2N}\right)} \oint \dots \oint e^{(N+ 1/2)\frac{a}{N}\sum_{k=1}^K u_k} \nonumber \\
   &&   \qquad  \times  \prod_{j=1}^K \frac{1}{(u_j -1)^K}\left[ 1-a - \frac{a}{2N}+\sum_{j=1}^K\frac{2}{u_j^2-1}\right]^K \Delta(u^2)\Delta(u)  \mathrm{d}\mathbf{u} \left( 1 + \mathcal{O}(a/N)\right).
\end{eqnarray}
We then factorise the result by introducing a parameter $t$ and differentiating with respect to it, analogously to \eqref{27}, rewrite the product of Vandermonde determinants exactly as we did in \eqref{vandet} and bring the factors into the determinant by column. This yields:
\begin{align}\label{eq:odd}
     (-1)^K \int_{SO(2N+1)} \left(\frac{ \Lambda_X^{'}}{\Lambda_X }(e^{-\alpha})\right)^K e^{-K\alpha}\mathrm{dX} \nonumber \\ 
     = \left(\frac{-1}{2}\right)^{\binom{K}{2}} \left( \frac{N}{a} \right)^K e^{-Ka\left(1+\tfrac{1}{2N}\right)} \frac{\mathrm{d}^K}{\mathrm{dt}^K}\Big|_{t=0}e^{t-ta-\frac{ta}{2N}} \nonumber \\
     \times \det \left[ \frac{1}{2\pi i}\oint 
     \frac{\exp(au_i + au_i/(2N) + 2t/(u_i^2-1))   u_i^{2j+i-3}}{(u_i -1)^K} \mathrm{d u_i} 
     \right]_{1 \leq i,j \leq K} \left( 1 + \mathcal{O}(a/N)\right).
\end{align}

As in the symplectic case, Section \ref{Symplectic}, we set 
\begin{equation}
     \det(\Psi) := \det  \left[ \frac{1}{2\pi i}\oint 
     \frac{\exp(au_i + au_i/(2N) + 2t/(u_i^2-1))   u_i^{2j+i-3}}{(u_i -1)^K} \mathrm{d u_i} 
     \right]_{1 \leq i,j \leq K}.
\end{equation}
In this odd orthogonal case we calculate the leading order and next-to-leading order contribution to the moment since in this case the leading order term results directly from the eigenvalue at 1 of $SO(2N+1)$ matrices and so is more straightforward, and perhaps less interesting, than in the other ensembles.   To this end we make the expansion $\det(\Psi)=\det(\Psi_0)+a\det(\Psi_1)+O(a^2)$, where taking the approximation $\exp(au_i + au_i/(2N)) \sim 1$ in the matrix elements yields
\begin{align}
    \det(\Psi_0) := \det \left[ \frac{1}{2\pi i}\oint \frac{\exp(2t/(u_i^2-1)) u_i^{2j+i-3}}{(u_i -1)^K} \mathrm{d u_i} 
     \right]_{1 \leq i,j \leq K} = (-2)^{\binom{K}{2}};
\end{align}
this is the same $\Psi_0$ as at (\ref{detpsi0}).

The determinant $\Psi_1$ is, as in the symplectic case,
\begin{eqnarray}
     && \det(\Psi_1) := \det  \left[ \begin{array}{ll}
        \frac{1}{2\pi i}\oint 
     \frac{\exp(2t/(u_i^2-1))   u_i^{2j+i-3}}{(u_i -1)^K} \mathrm{d u_i}   &  1 \leq i \leq K-1, 1 \leq j \leq K \\ \\
        \frac{1}{2\pi i}\oint 
     \frac{\exp(2t/(u_i^2-1))   u_i^{2j+i-2}}{(u_i -1)^K} \mathrm{d u_i}  & i = K, 1 \leq j \leq K
      \end{array} 
     \right]_{1 \leq i,j \leq K}\\&&\qquad= K(-2)^{\binom{K}{2}},\nonumber
\end{eqnarray} 
where the calculation of the determinant is carried out as follows.  First, by Proposition \ref{maxderivatives} $\det(\Psi_1)$ is independent of $t$ so we can set $t$ to zero.  Then the first column has all entries equal to 0 except for the entry in the $K^{th}$ row, which is equal to $\binom{K}{K-1} = K$. Expanding along this column we have that
\begin{align}
    \det(\Psi_1) =(-1)^{K-1}K \det\left[\binom{2j+i-1}{K-1}\right]_{i,j = 1, \dots, K-1}
    \nonumber \\
    = (-1)^{K-1}K \det\left[\binom{2j}{K-i}\right]_{i,j = 1, \dots, K-1} \nonumber \\
    = (-1)^{K-1}K \prod_{j=1}^{K-1}2j \prod_{i=1}^{K-1}\frac{1}{K-i} \det\left[ \binom{2j-1}{K-1-i}\right]_{i,j = 1, \dots, K-1} \nonumber \\
    = (-1)^{K-1}K 2^{K-1} (-2)^{\binom{K-1}{2}}=K (-2)^{\binom{K}{2}},
\end{align}
where after the first line we perform manipulations as in the proof of Lemma \ref{lem1}  and the determinant in the next to last line is evaluated using Lemma \ref{genlem}.  The determinant in the first line is the same minor as obtained from expanding $\det(\Psi_0)$ in the same way, so unsurprisingly the values of $\det(\Psi_0)$ and $\det(\Psi_1)$ are closely related. 

The derivative in $t$ in (\ref{eq:odd}) expands as 
\begin{eqnarray}\label{detoo}
   && \frac{\mathrm{d}^K}{\mathrm{dt}^K}\Big|_{t=0}e^{t-ta- \frac{ta}{2N}} \det(\Psi_0+a\Psi_1)=\left(1-a- \frac{a}{2N}\right)^K (\det(\Psi_0) +  a \det(\Psi_1) )  \\  &&+ K \left(1-a- \frac{a}{2N}\right)^{K-1}\frac{\mathrm{d}}{\mathrm{d}t}(\det(\Psi_0) +  a \det(\Psi_1) )  + \dots + \frac{\mathrm{d}^K}{\mathrm{dt}^K}(\det(\Psi_0) +  a \det(\Psi_1) )  \Big|_{t=0}.\nonumber
\end{eqnarray}

 As in the symplectic case, $ \det(\Psi_0) $ and $\det (\Psi_1)$ do not survive differentiation. The first term in \eqref{detoo} contributes $\det(\Psi_0)$ to the term of order $\frac{N^K}{a^K}$ in (\ref{eq:odd}).  There are three terms that will give a contribution of order $\frac{N^K}{a^{K-1}}$ in (\ref{eq:odd}) and they also all come from the first term in \eqref{detoo}. The first contribution is $\binom{K}{K-1}(-a)\det(\Psi_0)  = -aK(-2)^{\binom{K}{2}}$, the second is $a\det(\Psi_1) = aK (-2)^{\binom{K}{2}}$. Both of these contributions are multiplied by the approximation $e^{-Ka\left(1+\tfrac{1}{2N}\right)} \sim 1$. The third contribution comes from taking the second order approximation $e^{-Ka\left(1+\tfrac{1}{2N}\right)} \sim 1 - aK$, which contributes $-aK\det(\Psi_0) = -aK(-2)^{\binom{K}{2}}$. 
 
  Therefore, 
\begin{eqnarray}
    &&(-1)^K \int_{SO(2N+1)} \left(\frac{ \Lambda_X^{'}}{\Lambda_X }(e^{-\alpha})\right)^K e^{-K\alpha}\mathrm{dX}  \\&&\qquad\qquad =\left(\frac{-1}{2}\right)^{\binom{K}{2}}  \left( \frac{N}{a} \right)^K  \left[ (-2)^{\binom{K}{2}} - aK(-2)^{\binom{K}{2}}  + \mathcal{O}(a/N) + \mathcal{O}(a^2) \right],\nonumber 
\end{eqnarray} and so we get our result
\begin{eqnarray}\label{eq:SO(2N)result}
   && \int_{SO(2N+1)} \left(\frac{ \Lambda_X^{'}}{\Lambda_X }(e^{-\alpha})\right)^K \mathrm{dX}  \\
     &&\qquad=(-1)^K \left[ \left( \frac{N}{a} \right)^K - \frac{N^K}{a^{K-1}}K \right] + \mathcal{O}\left(\frac{N^{K-1}}{a^{K-1}}\right)  + \mathcal{O}\left( \frac{N^K}{a^{K-2}}  \right). \nonumber
\end{eqnarray}

As mentioned in the interpretation of our results in Section \ref{interpretation}, the leading order term in the moments of the logarithmic derivative for the odd orthogonal ensemble comes from the contribution of the term which corresponds to the eigenvalue at 1; that is, the term $\frac{-1}{1-s}$ in (\ref{SO2N+1logder}) since this term is a pole in the limit $s \longrightarrow 1$. Therefore, it's also of interest to compute the moments of the logarithmic derivative without the contribution from this pole term over the $SO(2N+1)$ ensemble. Using the computations above, we can investigate these moments, and interestingly it seems that the leading order term after the polar term is subtracted grows at least by a factor of $a^2$ or $a/N$ more slowly than the full moment including the polar term (\ref{eq:SO(2N)result}). Throughout this paper we have set $s=e^{-\alpha}=e^{-a/N}$, so in the large $N$ limit the polar term can be written
\begin{align}
   \frac{-1}{1-e^{-\alpha}}= \frac{-1}{1-1 + a/N + \mathcal{O}(a^2/N^2) } = \frac{N}{a}\frac{-1}{\left(1- \tfrac{a}{2N} + \mathcal{O}(a^2/N^2)\right)} \nonumber \\
    = \frac{-N}{a} - \frac{1}{2} + \mathcal{O}\left(\frac{a}{N}\right).
\end{align}
Then the moments without the pole term are given by
\begin{eqnarray}
    && \int_{SO(2N+1)} \left(\frac{ \Lambda_X^{'}}{\Lambda_X }(e^{-\alpha}) + \frac{1}{1-e^{-\alpha}}\right)^K \mathrm{dX} \nonumber \\
    && = \int_{SO(2N+1)} \sum_{m=0}^K \binom{K}{m} \left( \frac{N}{a} + \tfrac{1}{2}+\mathcal{O}\left(\tfrac{a}{N}\right) \right)^m \left(\frac{ \Lambda_X^{'}}{\Lambda_X }(e^{-\alpha}) \right)^{K-m}
     =  \sum_{m=0}^K \binom{K}{m} \left( \frac{N}{a} + \tfrac{1}{2}+\mathcal{O}\left(\tfrac{a}{N}\right) \right)^m \nonumber \\ 
    &&\qquad \times \left[(-1)^{K-m} \left[ \left( \frac{N}{a} \right)^{K-m} -  \frac{N^{K-m}(K-m)}{a^{K-m-1}}  \right]   + \mathcal{O}\left(\frac{N^{K-m-1}}{a^{K-m-1}}\right)  + \mathcal{O}\left( \frac{N^{K-m}}{a^{K-m-2}} \right)\right] \nonumber \\
     &&= \frac{N^K}{a^K}(1-1)^K - \frac{N^K}{a^{K-1}}\left[ \sum_{m=0}^{K} \binom{K}{m}(-1)^{K-m} (K-m) \right] + \mathcal{O}\left(\tfrac{N^{K-1}}{a^{K-1}} \right)+ \mathcal{O}\left(\tfrac{N^{K}}{a^{K-2}} \right)\nonumber \\
     &&=  - \frac{N^K}{a^{K-1}}\left[ \sum_{m=0}^{K-1} K\binom{K-1}{m}(-1)^{K-m}  \right] + \mathcal{O}\left(\tfrac{N^{K-1}}{a^{K-1}} \right)+ \mathcal{O}\left(\tfrac{N^{K}}{a^{K-2}} \right)\nonumber \\
     &&=  K\frac{N^K}{a^{K-1}}( 1-1)^{K-1} + \mathcal{O}\left(\tfrac{N^{K-1}}{a^{K-1}} \right)+ \mathcal{O}\left(\tfrac{N^{K}}{a^{K-2}} \right)\nonumber \\
     &&=  \mathcal{O}\left(\tfrac{N^{K-1}}{a^{K-1}} \right)+ \mathcal{O}\left(\tfrac{N^{K}}{a^{K-2}} \right).
\end{eqnarray}

To continue to even lower order terms we would need to place further restrictions on the rate of decay of $a$ as $N\rightarrow \infty$ in order to specify whether a term with an extra power of $a$ versus one with 
a $1/N$ factor is the next order contribution.  We also note that terms which are a factor of $a/N$ lower than the leading order have been neglected as far back as \eqref{zapp} when we made the approximation to the $z(x)$ function and so incorporating these adds an extra layer of complication to all the ensembles.

\section*{Acknowledgements}
EA would like to thank Ollie Clark for his thoughtful insight for Lemma \ref{lem1}.  We thank the referee for a very careful and helpful reading of this work. 

\bibliographystyle{plain}
\bibliography{Bib}

\begin{thebibliography}{10}

\bibitem{MM}
E.~C. Bailey, S.~Bettin, G.~Blower, J.~B. Conrey, A.~Prokhorov, M.~O.
  Rubinstein, and N.~C. Snaith.
\newblock {Mixed moments of characteristic polynomials of random unitary
  matrices}.
\newblock {\em J. Math. Phys.}, 60(8), 2019.
\newblock Article number 5092780, ar{X}iv:math.nt/1901.07479.

\bibitem{CFKRS}
J.~B. Conrey, D.~W. Farmer, J.~P. Keating, M.~O. Rubinstein, and N.~C. Snaith.
\newblock {Integral moments of $L$-functions}.
\newblock {\em Proc. London Math. Soc.}, 91(1):33--104, 2005.

\bibitem{CFS}
J.~B. Conrey, P.~J. Forrester, and N.~C. Snaith.
\newblock {Averages of ratios of characteristic polynomials for the compact
  classical groups}.
\newblock {\em Int. Math. Res. Notices}, 7:397--431, 2005.

\bibitem{CRS}
J.~B. Conrey, M.O. Rubinstein, and N.~C. Snaith.
\newblock {Moments of the derivative of characteristic polynomials with an
  application to the Riemann zeta function}.
\newblock {\em Comm. Math. Phys.}, 267(3):611--629, 2006.

\bibitem{ConSna}
J.~B. Conrey and N.~C. Snaith.
\newblock {Correlations of eigenvalues and Riemann zeros}.
\newblock {\em Comm. Number Theory and Physics}, 2(3):477--536, 2008.

\bibitem{Conrey}
J.B. Conrey.
\newblock {$L$}-functions and random matrices.
\newblock In B.~Enquist and W.~Schmid, editors, {\em Mathematics Unlimited 2001
  and Beyond}, pages 331--352. Springer-Verlag, Berlin, 2001.
\newblock ar{X}iv:math.nt/0005300.

\bibitem{kn:confar00}
J.B. Conrey and D.W. Farmer.
\newblock Mean values of {$L$}-functions and symmetry.
\newblock {\em Int. Math. Res. Notices}, 17:883--908, 2000.
\newblock ar{X}iv:math.nt/9912107.

\bibitem{kn:cfz2}
J.B. Conrey, D.W. Farmer, and M.R. Zirnbauer.
\newblock Autocorrelation of ratios of {$L$}-functions.
\newblock {\em Comm. Number Theory and Physics}, 2(3):593--636, 2008.
\newblock ar{X}iv:0711.0718.

\bibitem{Dehayederiv}
P.O. Dehaye.
\newblock {Joint moments of derivatives of characteristic polynomials}.
\newblock {\em Alg. Number theory}, 2(1):31--68, 2008.

\bibitem{Farmer}
D.~Farmer.
\newblock {Mean values of $\zeta^{'}/\zeta$ and the Gaussian unitary ensemble
  hypothesis}.
\newblock {\em Int. Math. Res. Notices}, (2):71--82, 1995.

\bibitem{FGLL}
D.~Farmer, S.M. Gonek, Y.~Lee, and S.~J. Lester.
\newblock {Mean values of $\zeta^{'}/\zeta (s)$, correlations of zeros and the
  distribution of almost primes}.
\newblock {\em Quart. J. Math.}, 64(4):1057--1089, 2013.

\bibitem{GGM}
D.A. Goldston, S.M. Gonek, and H.L. Montgomery.
\newblock {Mean values of the logarithmic derivative of the Riemann
  zeta-function with applications to primes in short intervals}.
\newblock {\em J. Reine Angew Math.}, 537:105--126, 2001.

\bibitem{Guo}
C.R. Guo.
\newblock {The distribution of the logarithmic derivative of the Riemann zeta
  function}.
\newblock {\em Proc. London Math Soc.}, 72(3):1--27, 1996.

\bibitem{HKO}
C.~P. Hughes, J.~P. Keating, and N.~O’Connell.
\newblock {Random matrix theory and the derivative of the Riemann zeta
  function}.
\newblock {\em Proc. R. Soc. Lond. A}, 456:2611--2627, 2000.

\bibitem{kn:hug01}
C.P. Hughes.
\newblock {\em On the characteristic polynomial of a random unitary matrix and
  the {R}iemann zeta function}.
\newblock PhD thesis, University of Bristol, 2001.

\bibitem{kn:hughes03}
C.P. Hughes.
\newblock Random matrix theory and discrete moments of the {R}iemann zeta
  function.
\newblock {\em J. Phys. A: Math. Gen.}, 36(12):2907--2917, 2003.

\bibitem{KatzS}
N.~M. Katz and P.~Sarnak.
\newblock {\em Random Matrices, Frobenius Eigenvalues and Monodromy}.
\newblock AMS Colloquium Publications, Providence, 1998.

\bibitem{kn:katzsarnak99b}
N.M. Katz and P.~Sarnak.
\newblock Zeros of zeta functions and symmetry.
\newblock {\em Bull. Amer. Math. Soc.}, 36:1--26, 1999.

\bibitem{KS}
J.~P. Keating and N.~C. Snaith.
\newblock {Random matrix theory and $\zeta(1/2+it)$}.
\newblock {\em Comm. Math. Phys.}, 214:57--89, 2000.

\bibitem{kn:keasna00b}
J.P. Keating and N.C. Snaith.
\newblock Random matrix theory and ${L}$-functions at $s=1/2$.
\newblock {\em Comm. Math. Phys}, 214:91--110, 2000.

\bibitem{kn:keasna03}
J.P. Keating and N.C. Snaith.
\newblock Random matrices and ${L}$-functions.
\newblock {\em J. Phys. A: Math. Gen.}, 36(12):2859--81, 2003.

\bibitem{Lester}
S.~J. Lester.
\newblock {The distribution of the logarithmic derivative of the Riemann
  zeta-function}.
\newblock {\em Quart. J. Math.}, 65:1319--1344, 2014.

\bibitem{EF}
A.~M. Mason and N.C. Snaith.
\newblock {Orthogonal and Symplectic $n$-level Densities}.
\newblock {\em Memoirs of the AMS}, 251(1194), 2018.

\bibitem{Mez}
F.~Mezzadri.
\newblock {Random matrix theory and the zeroes of $\zeta^{'}(s)$}.
\newblock {\em J. of Phys. A: Math. Gen.}, 36(12):2945--62, 2003.

\bibitem{HR}
H.~Riedtmann.
\newblock {A Combinatorial Approach to Mixed Ratios of Characteristic
  Polynomials}.
\newblock 2018.
\newblock ar{X}iv:math.nt/1805.07261.

\bibitem{Selberg}
A.~Selberg.
\newblock {On the normal density of primes in small intervals, and the
  difference between consecutive primes}.
\newblock {\em Arch. Math. Naturevid}, 47(6):87--105, 1943.

\bibitem{kn:snaith10}
N.~C. Snaith.
\newblock Riemann zeros and random matrix theory.
\newblock {\em Milan Journal of Mathematics}, 78(1):135--152, 2010.
\newblock Conference: 1st School of the
  Riemann-International-School-of-Mathematics, Verbania, Italy.

\bibitem{Winn}
B.~Winn.
\newblock {Derivative moments for characteristic polynomials from the CUE}.
\newblock {\em Commun. Math. Phys.}, 315:531--562, 2012.

\end{thebibliography}

\end{document}